\documentclass{article} %
\newcommand{\sv}[1]{}
\newcommand{\aaai}[1]{}
 \newcommand{\ijcai}[1]{}
 \newcommand{\lv}[1]{#1}
\usepackage{etoolbox}
 \newcommand{\toappendix}[1]{#1}

\aaai{
\usepackage[]{aaai2026}  %
\usepackage{times}  %
\usepackage{helvet}  %
\usepackage{courier}  %
\usepackage[hyphens]{url}  %
\usepackage{graphicx} %
\urlstyle{rm} %
\usepackage{natbib}  %
\usepackage{caption} %
\frenchspacing  %
\setlength{\pdfpagewidth}{8.5in} %
\setlength{\pdfpageheight}{11in} %
\pdfinfo{/TemplateVersion (2026.1)}
}
\ijcai{
\pdfpagewidth=8.5in
\pdfpageheight=11in
\usepackage{ijcai25}
\usepackage{times}
\usepackage{soul}
\usepackage{url}
\usepackage[hidelinks]{hyperref}
\usepackage[utf8]{inputenc}
\usepackage[small]{caption}
\usepackage{graphicx}
\usepackage{booktabs}
\usepackage{algorithm}
\usepackage{algorithmic}
\usepackage[switch]{lineno}
\usepackage{natbib}  %

\linenumbers

\urlstyle{same}

\author{}
}

\lv{
\usepackage{authblk}
\usepackage[T1]{fontenc}
\usepackage{fullpage}
\usepackage{hyperref}
\usepackage{graphicx} %
}

\setcounter{secnumdepth}{2} %

\usepackage{amsmath}
\usepackage{amsthm}
\usepackage{amssymb}
\usepackage{mathtools}
\usepackage{complexity}
\usepackage{xcolor}
\usepackage{tabularx}
\usepackage[capitalize,noabbrev]{cleveref}
\usepackage{tikz}
\usetikzlibrary{shapes.geometric}
\usetikzlibrary{positioning,calc,arrows.meta}

\tikzset{venn/.style={ellipse, draw, thick,
    minimum height=3cm, minimum width=6cm, rotate=#1
}}

\tikzset{ellip/.style={ellipse, draw, thick,
    minimum height=3cm, minimum width=6cm, rotate=#1
}}

\usepackage{comment}

\usepackage{thm-restate}

\newtheorem{definition}{Definition}[section]
\newtheorem{theorem}[definition]{Theorem}
\newtheorem{lemma}[definition]{Lemma}

\newtheorem{claim}[definition]{Claim}

\newtheorem{observation}[definition]{Observation}
\newtheorem{example}[definition]{Example}

\newtheorem{corollary}[definition]{Corollary}

\renewcommand{\epsilon}{\varepsilon}

\title{Finding Diverse Solutions Parameterized by Cliquewidth\thanks{The short conference version of this paper will appear in the proceedings of the 40th Annual AAAI Conference on Artificial Intelligence (AAAI 2026)~\cite{ourAAAI}.\\
  Both authors of this research were supported by the Polish National Science Centre grant number 2021/43/D/ST6/03312 (SONATA-17).
}}
  \author[1]{Karolina Drabik}%
  \author[1]{Tomáš Masařík}

  \affil[1]{Institute of Informatics, Faculty of Mathematics, Informatics and Mechanics, University~of~Warsaw, Warszawa, Poland 

 \texttt{kd417818@students.mimuw.edu.pl}\hfil \texttt{masarik@mimuw.edu.pl}}
\date{}

\begin{document}

\maketitle 
\begin{abstract}

Finding a few solutions for a given problem that are diverse, as opposed to finding a single best solution to solve the problem, has recently become a notable topic in theoretical computer science.
It is important to mention that finding a single solution is computationally hard in most interesting cases.
To overcome this issue, we analyze the problems under a finer scope of parameterized complexity, which often allows for tractable algorithms.
That, in turn, enables us to find a diverse set of solutions.

Recently, Baste, Fellows, Jaffke, Masařík, Oliveira, Philip, and Rosamond showed that under a standard structural parameterization by treewidth, one can find a set of diverse solutions for many problems with only a very small additional cost [Artificial Intelligence 2022].
In this paper, we investigate a much stronger graph parameter, the cliquewidth, which can additionally describe some dense graph classes.
Broadly speaking, it describes graphs that can be recursively constructed by a few operations defined on graphs whose vertices are divided into a bounded number of groups, while each such group behaves uniformly with respect to any operation.

We show that for any vertex problem, if we are given a dynamic program solving that problem on cliquewidth decomposition, we can modify it to produce a few solutions that are as diverse as possible with as little overhead as in the above-mentioned treewidth paper.
As a consequence, we prove that a diverse version of any MSO$_1$ expressible problem can be solved in linear FPT time parameterized by the cliquewidth, the number of sought solutions, and the number of quantifiers in the formula, which was a natural missing piece in the complexity landscape of structural graph parameters and logic for the diverse problems.
We prove our results, allowing for a more general natural collection of diversity functions compared to only two mostly studied diversity functions previously.
That might be of independent interest as a larger pool of different diversity functions can highlight various aspects of different solutions to a problem.

 \bigskip{}
  \begin{center}
    \textbf{Keywords}: Parameterized complexity, combinatorial optimization, diverse solutions, diversity~measures, cliquewidth, dynamic programming, MSO logic.%
\end{center}
\end{abstract}

\aaai{
\begin{links}
  \link{Extended version}{https://arxiv.org/abs/2405.20931}
\end{links}
}

\section{Introduction}
In a standard graph problem, one typically asks for a solution of the smallest size.
The first difficulty that typically needs to be overcome is that a large proportion of important problems is hard to solve (for some notion of hardness). 
There is a vast branch of research dedicated to overcoming such a difficulty, like parameterized complexity, approximation algorithms, and heuristics, among others.
Second, maybe a more pressing issue is that even once we are given a single optimal solution for the problem, we might not be able to learn as much about our original tasks as we had hoped.
Why might it be so?
Mathematically formulated problems are typically just an abstraction of real-world problems, where there might be other important factors that were perhaps just too complex to formalize or even not known in the first place.
Such constraints might be visible only once an actual solution is presented.
While the first difficulty stands, the second difficulty might be overcome or at least diminished by the following idea.
Instead of searching for just one solution, we will search for various solutions that are as different as possible (for some notion of what different means, to be specified later).
Moreover, the idea of looking for a diverse set of solutions may even further allow us to relax the original mathematical formulation of a problem, which in turn may allow for more efficient algorithms combating the first-mentioned difficulty.
There are many examples of advantages producing numerous different solutions in the literature; see the famous architect problem~\cite{architect} as one of the earliest examples.
Since then, there have been countless applications of this principle in various branches of computer science: constraint programming~\cite{
  HOW07,
  HHOW05,
  PT15,
IGST20}, mixed integer optimization~\cite{
  AMT24,%
  DW09, 
  GLW00},
  evolutionary computation~\cite{DGNN22,GLPS18,WO03}, social choice~\cite{BN21, %
  AFLOW21}, %
 and discrete geometry~\cite{KK22} among many others.

It was not until 2019 that this concept was brought to the attention from the perspective of theoretical graph algorithms and parameterized complexity toolbox~\cite{BFJMOPR22}.
They showed that a dynamic program (DP) on a tree decomposition can be converted to a DP finding diverse solutions.
Since then, this concept under the parameterized complexity examination started to play a more prominent role, and many follow-up papers have been published extending our knowledge into various directions: diverse hitting sets~\cite{%
BJMPR2019%
}, diverse polynomial-time solvable problems (spanning trees, paths)
\cite{
HKKO21}, %
diverse minimum $st$-cuts~\cite{
BLS23, BLS25, FlowFramework%
}, diversity in matroids~\cite{
FGPPS2023%
} approximation algorithms framework for diverse solutions~\cite{
HMKK23}, %
diverse synchronization of words~\cite{
AFOW23}, %
Submodular function approximations and experiments~\cite{
DGNN23}, %
diverse SAT~\cite{MMR24},
and diverse pairs of matchings~\cite{
FGJPS24}. %

Now, let us define diverse problems formally.
We start by explaining the two basic diversity measures that were within the central focus of the theoretical research on diverse problems.
We define the \emph{Hamming distance} between two sets $S$ and $S'$ as the size of their symmetric difference.
That is
\[HamDist(S, S') \coloneqq |(S \setminus S' ) \cup (S' \setminus S)|.\]
We use it to define diversity measures of an (ordered) collection of sets.
We define \emph{Sum diversity} as
\begin{equation*}
  DivSum(S_1, \ldots, S_r) \coloneqq \sum_{1 \leq i < j \leq r} HamDist(S_i, S_j).
\end{equation*}
Then, we define \emph{Min diversity} as
\begin{equation*}
  DivMin(S_1, \ldots, S_r) \coloneqq \min_{1 \leq i < j \leq r} HamDist(S_i, S_j).
\end{equation*}
Having defined the basic measures, we can state the problem formally.
In the following, we use $Div$ to denote an arbitrary diversity measure that takes $r$ subsets of vertices and outputs an integer, for example, $DivSum$ or $DivMin$ functions defined above.
Note that in this work, we are concerned only with vertex problems, but this template can be easily modified to take into account any list of subsets.
Formally, a \textit{vertex problem} is a set of pairs $(G, S)$, where $G$ is a graph and $S$ is a subset of its vertices that satisfies some property.
\lv{For example, we can define k-\textsc{Vertex Cover} as the set of pairs $(G, S)$ such that $|S| \leq k$ and $S$ is a vertex cover of $G$.}

\begin{definition}[Diverse Problem] 
  Let $\mathcal{P}_1, \ldots, \mathcal{P}_r$ be vertex problems, %
  $Div$ be a diversity measure, $d \in \mathbb{N}$. Then
  \begin{align*}
    Div^d(\mathcal{P}_1, \ldots, \mathcal{P}_r) \coloneqq \{(G, S_1, \ldots, S_r)|\sv{\\  } (G, S_i) \in \mathcal{P}_i, Div(S_1, \ldots, S_r) \geq d\}.
  \end{align*}
\end{definition}
\lv{%
In our paper, we focus on FPT algorithms for diverse vertex problems, parameterized by the \emph{cliquewidth} of the graph.

  \medskip
A graph $G$ has cliquewidth $\omega$ if there is a rooted tree that describes the construction of $G$ from single vertices via the following operations that use at most $\omega$ colors at any time during the construction.
The operations are: disjoint union of two graphs, recolor all vertices of one color to another one, and add all edges with endpoints of specified colors.
Inspect \Cref{sec:cw}} for a formal exposition.

The cliquewidth has played an influential role since its definition in 1993~\cite{cw-def}.
This has only increased after a seminal result of Courcelle, Makowsky, and Rotics proving that any MSO$_1$ property can be checked in linear time on graphs of bounded cliquewidth~\cite{CMR}.
This result essentially means that many difficult vertex problems on graphs are solvable in linear time, provided that the cliquewidth of that graph is small.
The most prominent tool to solve problems on graphs of bounded cliquewidth is dynamic programming without any doubt. 
Besides the model-checking theorem, there have been countless examples of using dynamic programming to solve some particular problems, either because those problems are not expressible in MSO$_1$ or because they provide a much faster algorithm. 
For example, feedback vertex set~\cite{BSTV13}, triangle detection \& girth~\cite{CDP19}, Alliances~\cite{KO17}, Hamiltonian cycle~\cite{BKK19}, Steiner tree~\cite{BK24Stree}, connected odd cycle transversal~\cite{BK24}, and other connectivity problems~\cite{BK19,HK23}.

We would like to point out that for our purposes, there is very little difference between the case when the cliquewidth decomposition is given on the input and the case when it first needs to be computed.
Indeed, despite being NP-hard to be computed optimally \cite{FRRS09}, there is a $(2^{2\omega +1}-1)$-approximation algorithm running in quadratic time, where $\omega$ is the cliquewidth of the graph on the input~\cite{FK22}.
Therefore, from now on, we will simply assume that the decomposition is given on the input, so we do not need to argue this in many places.
Moreover, it allows us to state our results independently of potential improvements of the above-mentioned approximation algorithm.

\subsection{Our Results}
Our main contribution is the following.
Given a cliquewidth decomposition of a graph $G$ and (almost any) dynamic program solving a vertex problem on $G$, we are able to translate it to an algorithm solving a diverse version of the problem with only a very small expense on the running time.
In fact, the running time of the diverse algorithm will be essentially the $r$-th power of the running time required to compute a single solution when we aim to find $r$ diverse solutions. 

More formally, we define what it means for dynamic programming to be monotone; see \Cref{def:monotone} for a formal definition.
Despite its inherent technical nature, the idea behind this definition is very natural.
Let us explain the intuition here. %
Take a dynamic programming table entry $\epsilon$ within some inner node of the DP.
Such $\epsilon$ could originate from many different possible combinations of other entries evaluated earlier (in subtrees of the given node) in the DP.
We call it the \emph{derivation tree} of $\epsilon$.
  To each such derivation tree, we assign a set of vertices, called \emph{partial solution}, based only on the entries in the leaves of the decomposition that combine to $\epsilon$.
Moreover, for the leaf node, we only allow for two entries: one where the vertex corresponding to a leaf node is in the partial solution and the other when it is not.
We call the described property monotonicity, as the set of vertices assigned to the derivation tree is the union of the sets assigned to its subtrees.

With the intuition on what monotone DPs look like, we formulate our main theorem.
In the following, we use $Div$ to denote an arbitrary diversity measure from a collection of Venn measures that will be defined later; see \Cref{def:venn} for a formal definition. 
At this point, we keep in mind that $DivSum$ is one such measure, but $DivMin$ is not. 
\begin{theorem}[Diverse problems on cw] \label{mainThm}
  Let $\mathcal{P}_1, \ldots, \mathcal{P}_r$ be vertex problems and given monotone dynamic programs solving them with the slowest running on a single node of the decomposition in time $t_{\mathcal{D}}$ for graph $G$ with cliquewidth decomposition~$\mathcal{D}$. 
  Let $Div$ be a Venn diversity function.
  The problem $Div^d(\mathcal{P}_1, \ldots, \mathcal{P}_r)$ on a graph $G$ with given cliquewidth decomposition $\mathcal{D}$ can be solved in time 
  $\mathcal{O}(|V(\mathcal{D})|t_{\mathcal{D}}^r)$.
\end{theorem}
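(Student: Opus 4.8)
The plan is to run the $r$ given monotone dynamic programs simultaneously on the common decomposition $\mathcal{D}$ as a single \emph{product dynamic program}, and to fold the evaluation of $Div$ into it as a scalar quantity maximised bottom-up. A table entry of the product DP at a node $x$ of $\mathcal{D}$ is an $r$-tuple $(\epsilon_1,\dots,\epsilon_r)$ of entries of the individual DPs at $x$, and the combination rules (at leaves, disjoint-union, recolour and add-edges nodes) are applied coordinate-wise: a product entry at a disjoint-union node is formed from a product entry of the left child and one of the right child precisely when each coordinate is a legal combination in the corresponding DP, and unary nodes act coordinate-wise as well. Since the slowest individual DP runs in time $t_{\mathcal{D}}$ on every node, each individual table has $\le t_{\mathcal{D}}$ entries, hence each product table has $\le t_{\mathcal{D}}^{\,r}$ entries; operating coordinate-wise, the product DP fills these tables in total time $\mathcal{O}(|V(\mathcal{D})|\,t_{\mathcal{D}}^{\,r})$ (with any $\mathrm{poly}(r)$ overhead absorbed as in the single-DP cost).

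The first thing to prove is that runs of the product DP faithfully enumerate feasible tuples. By \Cref{def:monotone}, for a product entry at the root and any of its derivation trees $D$, the partial solution in coordinate $i$ equals the union, over the leaves of $D$, of the singletons placed there by that coordinate, and at a leaf for a vertex $v$ only the two entries ``$v$ in'' and ``$v$ out'' are available. Hence a derivation tree of a product entry that is feasible in every coordinate yields exactly a tuple $(S_1,\dots,S_r)$ with $(G,S_i)\in\mathcal{P}_i$, where $S_i=\{v : \text{the leaf for }v\text{ is marked ``in'' in coordinate }i\}$, and conversely every feasible tuple arises this way, since the derivation trees of root entries of the $i$-th DP realise precisely the solutions of $\mathcal{P}_i$. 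The key structural consequence is that for each leaf vertex $v$ the set $\{i : v\in S_i\}$ — the \emph{Venn region} of $v$ among the $2^r$ possible regions — is already determined by the leaf entry, \emph{independently} of the rest of $D$. Therefore the $2^r$-vector of Venn-region sizes of $(S_1,\dots,S_r)$ decomposes additively over any derivation tree: a leaf contributes $1$ to a single region, a disjoint-union node adds the vectors of its two subtrees, and recolour and add-edges nodes — which alter neither the vertex set nor, by monotonicity, the partial solutions — leave the vector unchanged.

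It remains to handle $Div$. By \Cref{def:venn} a Venn diversity function is determined by the $2^r$-vector of Venn-region sizes through a rule that can be evaluated from a bounded amount of per-entry bookkeeping which combines additively along derivation trees and for which maximising the bookkeeping maximises the diversity; for $DivSum$ this bookkeeping is the single integer $\sum_{T}c_T\,a_T$ with $a_T$ the size of region $T$ and $c_T=|\{(i,j): i<j,\ |T\cap\{i,j\}|=1\}|$, since $DivSum$ equals this linear functional of the Venn vector, whereas $DivMin$, a minimum of several such functionals, is not captured by a single additively combinable bookkeeping and is therefore not a Venn measure. We augment every product entry with this bookkeeping and store, for each product entry at each node, the value maximising the diversity ultimately attainable in the subtree below it; by the additive decomposition above this maximum is obtained from the children by taking, over all combining child-pairs, the sum of their stored optima (and at a leaf it is the contribution $c_T$ of the region $T$ chosen there), so it is well defined, computed bottom-up, and carries no asymptotic cost. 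Finally the answer is ``yes'' iff some root product entry that is feasible in every coordinate has stored value witnessing $Div\ge d$, and a collection of derivation trees realising it — hence the solutions $S_1,\dots,S_r$ — is recovered by the usual back-pointing through the tables.

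I expect the crux to be the faithfulness argument of the second paragraph: one must be certain that the partial solution attached to a derivation tree, and thus the Venn region of every vertex, is pinned down by the leaf data alone. This is exactly what monotonicity (\Cref{def:monotone}) together with the two-entry restriction at leaves provides; without it the partial solution could depend on internal DP entries, the additive decomposition of the Venn vector would break, and one would be forced to carry the full Venn vector (of size up to $n^{2^r}$) through the DP, destroying the $t_{\mathcal{D}}^{\,r}$ bound — essentially the obstruction that makes non-Venn measures such as $DivMin$ costlier. A lesser point is to phrase the bookkeeping uniformly for the whole class of Venn functions rather than only for $DivSum$, which is routine once one checks that \Cref{def:venn} is designed precisely so that the required partial information is bounded, additively combinable, and monotone, and hence fits the Venn-vector decomposition already established.
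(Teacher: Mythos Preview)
Your proposal is correct and follows essentially the same approach as the paper: form the product of the $r$ monotone DPs, attach to each product entry a single scalar recording the best attainable diversity, and propagate it bottom-up using that monotonicity pins each vertex's membership pattern at its leaf so that the Venn $f$-diversity decomposes additively along any derivation tree. The paper formalises this via the sets $\Pi_{Div}(t)$ of tuples $(w^1,\dots,w^r,\ell)$ and a small lemma showing $Div_f(S_1\cup P_1,\dots)=Div_f(S_1,\dots)+Div_f(P_1,\dots)-|V(G)|f(0^r)$ for disjoint $S,P$; your variant, which sums $f$ only over vertices already introduced and hence avoids the $|V(G)|f(0^r)$ correction at join nodes, is an equivalent and slightly cleaner bookkeeping choice, but the argument is otherwise the same.
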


It is important to note that the running time of \Cref{mainThm} does not depend on the actual value of the $Div$ function.
Our proof also provides the same statement for the $DivMin$ function, but with a worse running time dependent on the target value $d$ of the $DivMin$ function.
\lv{In particular, the running time will be $\mathcal{O}(|V(\mathcal{D})|t_{\mathcal{D}}^r d^{r^2})$ in that case.}
\begin{theorem}[Min Diverse problems on cw] \label{minmainThm}
  Let $\mathcal{P}_1, \ldots, \mathcal{P}_r$ be vertex problems and given monotone dynamic programs solving them with the slowest running on a single node of the decomposition in time $t_{\mathcal{D}}$ for graph $G$ with cliquewidth decomposition~$\mathcal{D}$. 
  The problem $DivMin^d(\mathcal{P}_1, \ldots, \mathcal{P}_r)$ on a graph $G$ with given cliquewidth decomposition $\mathcal{D}$ can be solved in time 
  $\mathcal{O}(|V(\mathcal{D})|t_{\mathcal{D}}^r d^{r^2})$.
\end{theorem}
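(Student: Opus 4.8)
The plan is to take the product dynamic program constructed in the proof of \Cref{mainThm} and enrich each of its states with a small amount of extra bookkeeping that records, for every pair of indices $1\le i<j\le r$, how large the symmetric difference of $S_i$ and $S_j$ has grown so far, truncated at $d$. Because $DivMin^d$ asks only whether $HamDist(S_i,S_j)\ge d$ holds for \emph{all} pairs simultaneously, truncating each pairwise distance at $d$ loses no information, and it keeps the added part of the state space bounded by $(d+1)^{\binom r2}$. This is exactly the recipe used for the minimum-diversity variant in the treewidth setting, transported to cliquewidth decompositions.

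Concretely, recall that in the monotone setting a state at a node $x$ is an $r$-tuple $(\epsilon_1,\dots,\epsilon_r)$ of entries of the individual DPs, that a combined derivation tree of such a tuple yields an $r$-tuple of partial solutions $(S_1,\dots,S_r)$ obtained, coordinatewise, as the union of the partial solutions of the children, and that at a leaf only the two choices of including or excluding the introduced vertex are available in each coordinate. I would attach to every state a vector $(h_{ij})_{1\le i<j\le r}$ with $h_{ij}\in\{0,1,\dots,d\}$, maintaining the invariant $h_{ij}=\min\{\,d,\ |(S_i\setminus S_j)\cup(S_j\setminus S_i)|\cap V_x\,\}$, where $V_x$ is the set of vertices introduced in the subtree rooted at $x$. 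The transitions are then forced: at a leaf introducing $v$ we set $h_{ij}=1$ if exactly one of the two coordinates selects $v$ and $h_{ij}=0$ otherwise; at a relabel node and at an add-edges node the vertex set is unchanged, so $(S_1,\dots,S_r)$ and hence $(h_{ij})$ are copied from the unique child; at a disjoint-union node $x$ with children $y,z$ we have $V_y\cap V_z=\emptyset$, so the set of vertices on which $S_i$ and $S_j$ disagree inside $V_x$ is the disjoint union of its traces on $V_y$ and on $V_z$, and we combine a $y$-state carrying $(h_{ij}^y)$ with a $z$-state carrying $(h_{ij}^z)$ into a state carrying $(\min\{d,\ h_{ij}^y+h_{ij}^z\})$, on top of whatever rule the product DP already uses on the $\epsilon$-parts. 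An augmented state at $x$ is marked reachable precisely when it is produced by one of these transitions from reachable augmented states of the children (and at leaves when it is one of the base states). At the root we answer yes iff some reachable state is accepting for the product DP and has $h_{ij}=d$ for every pair $i<j$.

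Correctness follows the pattern of \Cref{mainThm}: monotonicity says that the partial solution attached to a combined derivation tree is literally the coordinatewise union of its leaf choices, so the invariant above is preserved by the transitions (by induction on the decomposition tree); in particular one uses the identity $\min\{d,\min\{d,a\}+\min\{d,b\}\}=\min\{d,a+b\}$ for $a,b\ge 0$, which is what lets us truncate at every union node rather than only at the root. Consequently, having $h_{ij}=d$ for all pairs at the root is equivalent to $HamDist(S_i,S_j)\ge d$ for all pairs, i.e.\ to $DivMin(S_1,\dots,S_r)\ge d$, and a reachable accepting state with that vector exists iff the instance is a yes-instance. For the running time, the augmentation multiplies the number of states at a node by at most $(d+1)^{\binom r2}$, and the union transition additionally ranges over the two child vectors, a further $(d+1)^{\binom r2}$ factor; together this is $(d+1)^{\mathcal O(r^2)}$, which, absorbing a factor depending only on the parameter $r$ and ignoring the trivial cases $d\le 1$, is bounded by $d^{r^2}$. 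Hence the per-node cost becomes $\mathcal O(t_{\mathcal D}^r\,d^{r^2})$ and the total $\mathcal O(|V(\mathcal D)|\,t_{\mathcal D}^r\,d^{r^2})$, as claimed.

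The main obstacle, just as for \Cref{mainThm}, is not any single transition but making this bookkeeping rigorously compatible with the technical definition of a monotone DP (see \Cref{def:monotone}): one must argue that the partial-solution map is well behaved enough that the vertex set on which $S_i$ and $S_j$ disagree is genuinely partitioned by the union nodes of the derivation tree, with each leaf contributing its introduced vertex to exactly the right coordinates. Once that lemma — essentially the same one underpinning \Cref{mainThm}, now read off for the symmetric difference of a single pair rather than for a Venn measure of the whole $r$-tuple — is in place, the $DivMin$ statement is a routine extension; the only genuinely $DivMin$-specific point is that, unlike a Venn measure, the minimum does not decompose additively, so we are forced to carry the whole vector of truncated pairwise distances, which is precisely the source of the extra $d^{r^2}$ factor.
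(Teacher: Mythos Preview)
Your proposal is correct and follows essentially the same approach as the paper: store, for each $r$-tuple of DP entries, a vector in $\{0,\dots,d\}^{\binom{r}{2}}$ of truncated pairwise Hamming distances, add these vectors pointwise (capped at $d$) at union nodes using the additivity of Hamming distance over disjoint vertex sets, and accept at the root when all coordinates equal $d$. The paper likewise notes that, unlike the Venn case, one cannot keep only a single ``best'' value per tuple of entries because the vectors are incomparable, which is precisely your observation yielding the $d^{r^2}$ overhead.
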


We would like to compare the running time of \cref{mainThm} with an analogous theorem for the treewidth \cite[Theorem 10]{BFJMOPR22}.
We present it with a minor improvement, where we omit the factor $d^a$ (typically ${a=2}$) stated in the referenced running time as it is not necessary once one aims to find a single diverse solution of the best possible value of $DivSum$.
We discuss the change at the end of \cref{sec:mainThm}.

\begin{theorem}[Reformulated {\cite[Theorem 10]{BFJMOPR22}} with a slight improvement] \label{twThm}
  Let $\mathcal{P}_1,  \ldots, \mathcal{P}_r$ be vertex problems and given dynamic program core (as defined in \cite[Definition 6]{BFJMOPR22}) solving them with the slowest running on a single node of the decomposition in time $t_{\mathcal{T}}$ for graph $G$ with tree decomposition~$\mathcal{T}$.
  The problem $DivSum^d(\mathcal{P}_1,  \ldots , \mathcal{P}_r)$ on a graph $G$ with given treewidth decomposition $\mathcal{T}$ can be solved in time
  $\mathcal{O}(|V(\mathcal{T})|t_{\mathcal{T}}^r)$.
\end{theorem}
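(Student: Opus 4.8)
The plan is to run the $r$ given dynamic program cores simultaneously over a nice tree decomposition obtained from $\mathcal{T}$ in $\mathcal{O}(|V(\mathcal{T})|)$ time, and to augment the product table with a single running number recording the portion of $DivSum$ already determined. The structural fact that makes this work is that $DivSum$ is \emph{additive over vertices}: since $HamDist(S_i,S_j)=|S_i\triangle S_j|$,
\begin{equation*}
  DivSum(S_1,\dots,S_r)=\sum_{1\le i<j\le r}|S_i\triangle S_j|=\sum_{v\in V(G)}\bigl|\{(i,j):1\le i<j\le r,\ v\in S_i\triangle S_j\}\bigr|=\sum_{v\in V(G)}k_v(r-k_v),
\end{equation*}
where $k_v\coloneqq|\{i:v\in S_i\}|$. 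Thus the contribution $c(v)\coloneqq k_v(r-k_v)$ of a vertex $v$ depends only on the set of indices $i$ with $v\in S_i$, and, while $v$ lies in the current bag, this set is exactly what the tuple of dynamic program states records --- this is the \emph{partial solution} semantics guaranteed by the notion of dynamic program core \cite[Definition~6]{BFJMOPR22}. Hence we can charge $c(v)$ at the unique moment $v$ is forgotten.

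Concretely, for every node $x$ with bag $B_x$ and every tuple $\bar\sigma=(\sigma_1,\dots,\sigma_r)$ of states valid at $x$ for $\mathcal{P}_1,\dots,\mathcal{P}_r$, the augmented table stores
\begin{equation*}
  \mathrm{div}_x(\bar\sigma)\coloneqq\max\Bigl\{\textstyle\sum_{v\in V_x\setminus B_x}c(v)\ :\ (S_1,\dots,S_r)\text{ partial solutions in the subtree of }x\text{ consistent with }\bar\sigma\Bigr\},
\end{equation*}
where $V_x$ is the set of vertices introduced in the subtree of $x$, and $\mathrm{div}_x(\bar\sigma)=-\infty$ if no consistent tuple exists. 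The transitions mirror the four node types. Leaf: the subtree is trivial and $\mathrm{div}=0$. Introduce-vertex: only the state tuple changes (the new vertex enters the bag, its membership in each $S_i$ fixed by $\sigma_i$), and $\mathrm{div}$ is inherited unchanged. Forget-vertex $v$: we add $c(v)$, computed from the child state's record of which $S_i$ contain $v$, and take the maximum over all child states projecting to the given parent state. Join with children $x_1,x_2$ sharing the bag: we combine compatible state tuples exactly as the underlying cores do, and set $\mathrm{div}_x(\bar\sigma)=\max\{\mathrm{div}_{x_1}(\bar\sigma_1)+\mathrm{div}_{x_2}(\bar\sigma_2)\}$ over compatible pairs $(\bar\sigma_1,\bar\sigma_2)$; no vertex is counted twice because the forgotten sets $V_{x_1}\setminus B_x$ and $V_{x_2}\setminus B_x$ are disjoint while the bag vertices are not yet counted on either side.

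Correctness is by induction on the tree, using the core property that valid state tuples correspond exactly to tuples of partial solutions: at the root the bag is empty and $V_{\mathrm{root}}=V(G)$, so the maximum of $\mathrm{div}_{\mathrm{root}}$ over state tuples encoding genuine solutions $(S_i)$ with $(G,S_i)\in\mathcal{P}_i$ equals $\max DivSum(S_1,\dots,S_r)$; comparing it with $d$ decides the instance, and a standard backtracking pass reconstructs a witness. For the running time, the state set at $x$ is the product of the $r$ single-problem state sets, and every operation above factors through this product --- the join, in particular, is the coordinatewise join of the cores restricted to pairwise compatible tuples --- so the work per node is $\mathcal{O}(t_{\mathcal{T}}^r)$ and the total is $\mathcal{O}(|V(\mathcal{T})|\,t_{\mathcal{T}}^r)$. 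This is exactly where we save the factor $d^{a}$ of \cite[Theorem~10]{BFJMOPR22}: because $DivSum$ is additive, one needs only a single \emph{maximum} value $\mathrm{div}_x(\bar\sigma)$ and not a table indexed by the diversity attained so far (nor by the vector of pairwise distances); the same simplification will later drive \Cref{mainThm} in the cliquewidth setting.

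The step needing the most care is the bookkeeping at join nodes: one must check that the product of cores still enjoys the partial-solution semantics, i.e.\ that a valid tuple $\bar\sigma$ at $x$ is realized precisely by gluing, along the bag, consistent tuples of partial solutions from the two subtrees, so that each vertex of $G$ is forgotten exactly once --- on the unique root path through the highest bag containing it --- and its contribution $c(v)$ is therefore added exactly once. The remaining cases (leaf, introduce-vertex, forget-vertex, and the edge-introduction details hidden inside a nice decomposition) are a direct lift of the $r=1$ argument and involve no new ideas.
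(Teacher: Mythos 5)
Your proposal is correct and follows essentially the same route as the paper: the paper proves \cref{twThm} by citing \cite[Theorem 10]{BFJMOPR22} and observing (at the end of \cref{sec:mainThm}) that the $d^a$ factor disappears once one stores, for each product state tuple, only the \emph{maximum} achievable diversity rather than the full set of achievable values, exactly as your $\mathrm{div}_x(\bar\sigma)$ does. Your additivity identity $DivSum=\sum_v k_v(r-k_v)$ and the charge-at-forget bookkeeping are the same vertex-wise decomposition the paper uses for its Venn measures (cf.\ \cref{sec:measures} and \cref{lemma}), just instantiated for tree decompositions.
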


\medskip

We claim that a vast majority of dynamic programs solving vertex problems using the cliquewidth decomposition are either monotone or could be easily transformed to be monotone.
To the best of our knowledge, we have not found any examples of non-monotone DPs in the literature.
To support the intuition, we designed a non-monotone DP for the vertex cover problem; see \Cref{sec:nonMonotone}.
  However, this algorithm can be easily turned into a monotone one, which we present in \Cref{sec:monotoneVC}.
In order to support our claim, we showed that using our framework, one can solve the diverse version of any vertex problem that can be expressed in MSO$_1$ logic, which is the fragment of second-order logic where the second-order quantification is limited to quantifying over sets of vertices.
The MSO$_1$ captures many well-studied problems, including: Vertex Cover, Independent Set, Dominating Set, and Feedback Vertex Set.
We show that any problem $\mathcal{P}$ for which there exists an MSO$_1$ formula $\Phi(S)$ such that $G \vDash \Phi(S)$ if and only if $(G, S) \in \mathcal{P}$ can be solved by a monotone dynamic program to which we apply \Cref{mainThm}.
As the original result~\cite{CMR} gives the DP only implicitly, we decided to design a new DP algorithm and show that it is indeed monotone.
We based our approach on lecture notes~\cite[Section 5.4]{Fiala}\footnote{This approach originates in an unpublished work of Hliněný, Král’, and Obdržálek.}, which show how to use evaluation trees to obtain a similar result for treewidth decomposition.
\begin{corollary}\label{cor:mainCor}
  Diverse MSO$_1$ problems can be solved in linear FPT time parameterized by the cliquewidth, the number of solutions, and the number of quantifiers in the formula.
\end{corollary}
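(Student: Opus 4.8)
The plan is to turn every MSO$_1$ formula $\Phi(S)$ with a single free set variable into a \emph{monotone} dynamic program on the cliquewidth decomposition that decides whether $(G,S)$ belongs to the vertex problem defined by $\Phi$ (i.e.\ whether $G\vDash\Phi(S)$), and then to feed that dynamic program into \Cref{mainThm} (or \Cref{minmainThm} if the target diversity measure is $DivMin$). Since \Cref{mainThm} allows the $r$ problems $\mathcal{P}_1,\dots,\mathcal{P}_r$ to be distinct, it suffices to handle one formula at a time and then combine.

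First I would define the dynamic programming states via evaluation trees, following \cite[Section~5.4]{Fiala} but reading the tree decomposition there as a cliquewidth expression instead. At a node $x$ building the labelled graph $G_x$, the table stores, for every subformula $\psi$ of $\Phi$, the set of \emph{colour patterns} of the free variables of $\psi$ under which $\psi$ holds in $G_x$, where a colour pattern records the colour of each free vertex variable and, for each free set variable, the set of colours that it meets. A cliquewidth expression uses at most $\omega$ colours, so there are at most $2^{\mathcal{O}(\omega)}$ patterns per variable; together with the standard fact that it suffices to track the behaviour of at most $q$ nested quantified objects, this bounds the whole table — and hence the per-node processing time $t_{\mathcal{D}}$ — by a function of $\omega$ and $q$ only, which is exactly the dependence the corollary asks for.

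Next I would supply the transitions for the four cliquewidth operations and verify monotonicity along the way. At a single-vertex introduction (creating one vertex $v$ with some colour) there are precisely two table entries, one with $v\in S$ and one with $v\notin S$, matching the leaf condition of \Cref{def:monotone}. For a disjoint union, satisfaction of $\psi$ in $G_x\uplus G_y$ under a given pattern is obtained from the tables at $x$ and $y$ by the Feferman--Vaught-style rule of splitting each quantified vertex or vertex set across the two sides; for a relabelling $\rho_{i\to j}$ one pushes patterns forward along the colour map; for an edge addition $\eta_{i,j}$ one re-evaluates the adjacency atoms between colours $i$ and $j$ and leaves everything else unchanged. Correctness at the root follows by a routine induction on the expression. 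Monotonicity is then immediate: only the introduction operation creates a vertex, so the set assigned to any derivation tree is literally the union of the sets assigned to its subtrees, and each transition is computed from the children's entries alone, never re-consulting $S$.

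The part I expect to be delicate is pinning down the states so that they are simultaneously (a) a function of $G_x$ and $S\cap V(G_x)$ only — otherwise the map from derivation trees to partial solutions is ill defined and the union property can break, e.g.\ if a state secretly remembers colours a set variable will acquire later; (b) updated purely locally, so that the transitions above are well defined; and (c) finite with a bound depending only on $\omega$ and $q$. Requirement (c) is the heart of the matter: one must show that two assignments inducing the same colour pattern are indistinguishable by $\Phi$ inside every larger expression, which is precisely the collapsing argument behind Courcelle's theorem for cliquewidth \cite{CMR}; getting (a) and (b) to hold together with (c) is what forces the colour-pattern abstraction to be exactly the right one, with no leftover global information. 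Once this monotone dynamic program is in hand, \Cref{mainThm} solves $Div^d(\mathcal{P}_1,\dots,\mathcal{P}_r)$ — for $DivSum$, or any Venn diversity function — in time $\mathcal{O}(|V(\mathcal{D})|\,t_{\mathcal{D}}^{\,r})$ with $t_{\mathcal{D}}$ a function of $\omega$ and $q$, i.e.\ in time $f(\omega,q,r)\cdot|V(\mathcal{D})|$, which is linear FPT in the cliquewidth, the number of sought solutions, and the number of quantifiers, as claimed.
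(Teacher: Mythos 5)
Your overall strategy is exactly the paper's: construct a monotone dynamic programming core for each MSO$_1$ vertex problem and feed it into \Cref{mainThm}, and your monotonicity argument (two entries at each $intro$ node, unions across disjoint unions, transitions computed from children only) is the right one. The gap is in the concrete choice of DP state. You propose to store, per subformula, the set of \emph{colour patterns} under which it holds, where a set variable is abstracted to the set of colours it meets. Satisfaction of a subformula is not a function of such a pattern: two assignments with the same colour pattern are in general distinguishable by the rest of the formula. For instance, whether a set contains \emph{all} vertices of colour $a$ versus merely \emph{one} vertex of colour $a$ is precisely what a later $addEdges(t',a,b)$ node followed by a covering-type subformula tests, and adjacency between two individual-variable values of colours $i$ and $j$ is likewise not determined by the pair $(i,j)$. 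So the key lemma you defer to --- that two assignments inducing the same colour pattern are indistinguishable inside every larger expression --- is false for this abstraction, and the table entry ``pattern $P$ satisfies $\psi$'' is not even well defined (some assignments with pattern $P$ satisfy $\psi$, others do not). A further warning sign is your bound of $2^{\mathcal{O}(\omega)}$ patterns per variable: an elementary state space cannot support full MSO$_1$ model checking, and the paper's actual bound is a tower of exponentials of height $q+1$.

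What the paper does instead (\Cref{MSO-core}) is to take as states \emph{reduced partial evaluation trees}: the full $q$-level tree of quantifier choices (with an extra external branch for individual variables), whose leaves are classified by a \emph{configuration} recording the induced subgraph on the individual-variable values, their colouring, the external/internal status of each individual variable, and the set variables restricted to those named vertices; internal nodes are then classified recursively by the sets of isomorphism classes of their children. This recursive classification is what carries enough information to survive later $recolor$, $addEdges$, and $\oplus$ operations (via the tree product and the combining lemma), and it is where the tower-of-exponentials bound comes from. If you replace your colour-pattern tables with these reduced evaluation trees --- keeping your correct observation that only $intro$ nodes decide membership of a vertex in $S$, so the core is monotone in the sense of \Cref{def:monotone} --- the argument goes through and coincides with the paper's proof.
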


\Cref{cor:mainCor} is interesting on its own as it provides a new result for graph vertex problems expressible in certain logic.
We can compare it with the original treewidth result~\cite{BFJMOPR22}, which encompasses problems that can be defined in MSO$_2$ logic and then provides the diverse variant in the running time comparable to \Cref{cor:mainCor}. 
That is a stronger logic, but a much weaker graph class.
Note that model checking of MSO$_2$ problems on cliques is not even in XP unless $\text{E}=\text{NE}$ \cite{CMR}.
Later, Hanaka, Kobayashi, Kurita, and Otachi~\cite[Theorem 13]{HKKO21} proved that problems that can be defined in FO logic on a nowhere dense graph class can have their diverse variant solved in FPT time.
However, this time the result relies heavily on the model-checking machinery \lv{of Grohe, Kreutzer, and Siebertz~}\cite{GKS17} and the claimed running time is much worse than linear.
Again, this result is incomparable with \Cref{cor:mainCor} as it acts on a larger class of graphs, but only on problems defined in a weaker logic. 
In 2023, Bergougnoux, Dreier, Jaffke~\cite{BDJ23} worked on an A\&C DN logic (distance neighborhood logic with acyclicity and connectivity constraints).
Informally speaking, DN logic allows for size measurement of neighborhood terms as well as inclusion-wise comparison of neighborhoods within the \emph{existential} fragment of MSO$_1$ logic, while A\&C further add two more constraints which allow us to express acyclicity and connectivity of the neighborhood term.
See \cite[Section 3]{BDJ23} for a formal definition.
They showed that if a vertex problem can be expressed in A\&C DN logic, then one can express $DivMin$ or $DivSum$ variant of the problem in that logic as well~\cite[Observation~1.1]{BDJ23}.
Then they present a cubic FPT algorithm solving A\&C DN encoded problems on graphs of bounded cliquewidth parameterized by the size of the formula and the cliquewidth.
Additionally, they present an XP algorithm for the same problem on graphs of bounded mimwidth\footnote{Informally, the mimwidth of a graph minimizes the size of each induced matching appearing in some particular cuts of the graph which are specified by a recursive decomposition of its vertex set; see \cite{Vat} for a formal definition.}
which is a much more general structural graph parameter.
Indeed, as this covers only an existential fragment of MSO$_1$, it is incomparable with \Cref{cor:mainCor}.
Let us stress that all the related work is stated only for $DivSum$ or possibly for $DivMin$ measures.

\subsubsection{New Venn Diversity Measures.}\label{sec:intro-measures}
In our work, we propose a whole collection of diversity measures.
We believe they will be of independent interest.
Despite being very natural, to the best of our knowledge, they have not been studied in this context before.
In the previous theoretical works, research was only focused on the two measures already defined: $DivMin$ and $DivSum$.
However, despite them being the first natural examples, there is nothing so special about them.
One can find examples where those measures do not behave exactly as expected.
In the case of the $DivSum$ measure, a profound example is taking many copies of two disjoint sets, which results in a relatively high diversity value. Still, intuitively, such a tuple of sets is not diverse.
In particular, it does not display a potentially rich structure of various different solutions as we might have hoped.
This was already observed and discussed in \cite[Section 5, Figure 3]{BJMPR2019}; see also an example illustrated later in~\cref{fig:ds}.
Such problems are not limited to $DivSum$.
See \Cref{ex:DivMin} (\cref{fig:DivMin}), where the optimal $DivMin$ value is $2$ when seeking six solutions, while there is another tuple of six solutions with the same $DivMin$ value, which seems intuitively much more diverse.
Indeed, a straightforward generalization of \cref{ex:DivMin} gives that $DivSum$ of the first $r$-tuple of solutions equals $2\binom{r}{2}$ and the second $r$-tuple has $DivSum$ arbitrarily high (depending only on $s$, as $r$ is fixed) while they both have the same $DivMin=2$.
Similarly to the examples above, we believe that for each potential diversity measure, some problematic instances can be created so that when the specific measure is used, the most diverse collection does not reflect at all how diverse the whole space of solutions might be.

\begin{figure}[t]
\centering
\begin{tikzpicture}[
  scale=1.2,
  font=\small,
  Bnode/.style={circle,draw,inner sep=1.2pt},
  Anode/.style={rectangle,draw,rounded corners=2pt,minimum width=12mm,minimum height=6mm},
  allEdge/.style={draw=black!25},
  pick/.style={ultra thick},
  >=Latex
]

\begin{scope}[shift={(0,0)}]
 \node[font=\bfseries] at (2.2,5.25) %
 {First solution};
   \node[] at (2.2,4.75)%
 {$\{A_1\cup\{b_1\},\ldots,A_1\cup\{b_6\}\}$};
   \node[] at (2.2,4.25)%
{$DivMin=2,\quad DivSum=30$};

\foreach \i [count=\k from 0] in {1,...,6}{
  \node[Bnode] (aB\i) at (0,3.6-0.55*\k) {$b_{\i}$};
}
\node at (0,0.49) {$\vdots$};
\node[Bnode] (aBs) at (0,-0.09) {$b_s$};

\foreach \i [count=\k from 0] in {1,...,5}{
  \node[Anode] (aA\i) at (4,3.6-0.7*\k) {$A_{\i}$};
}

\node[align=center] at (0,-0.75) {$B$\\$|B|=s$};
\node[align=center] at (4,-0.75) {$A_1,\dots,A_5$\\$|A_i|=s$};

\foreach \bi in {1,...,6}{
  \foreach \ai in {1,...,5}{
    \draw[allEdge] (aB\bi) -- (aA\ai);
  }
}
\foreach \ai in {1,...,5}{
  \draw[allEdge,densely dotted] (aBs) -- (aA\ai);
}

\foreach \bi in {1,...,6}{
  \draw[pick] (aB\bi) -- (aA1);
}

\end{scope}

\begin{scope}[shift={(7.3,0)}]
 \node[font=\bfseries] at (2.2,5.25) %
 {Second solution};
   \node[] at (2.2,4.75)%
 {$\{A_i\cup\{b_i\}\mid i\in[5]\}\cup\{A_1\cup\{b_6\}\}$};
   \node[] at (2.2,4.25)%
{$DivMin=2,\quad DivSum=28s+30$};

\foreach \i [count=\k from 0] in {1,...,6}{
  \node[Bnode] (bB\i) at (0,3.6-0.55*\k) {$b_{\i}$};
}
\node at (0,0.49) {$\vdots$};
\node[Bnode] (bBs) at (0,-0.09) {$b_s$};

\foreach \i [count=\k from 0] in {1,...,5}{
  \node[Anode] (bA\i) at (4,3.6-0.7*\k) {$A_{\i}$};
}

\node[align=center] at (0,-0.75) {$B$\\$|B|=s$};
\node[align=center] at (4,-0.75) {$A_1,\dots,A_5$\\$|A_i|=s$};

\foreach \bi in {1,...,6}{
  \foreach \ai in {1,...,5}{
    \draw[allEdge] (bB\bi) -- (bA\ai);
  }
}
\foreach \ai in {1,...,5}{
  \draw[allEdge,densely dotted] (bBs) -- (bA\ai);
}

\draw[pick] (bB1) -- (bA1); %
\draw[pick] (bB2) -- (bA2); %
\draw[pick] (bB3) -- (bA3); %
\draw[pick] (bB4) -- (bA4); %
\draw[pick] (bB5) -- (bA5); %
\draw[pick] (bB6) -- (bA1); %

\end{scope}

\end{tikzpicture}
\caption{Illustration of \cref{ex:DivMin} drawn as a bipartite incidence graph.
Each feasible solution corresponds to an edge $(b_j,A_i)$ representing the set $A_i\cup\{b_j\}$.
} \label{fig:DivMin}
\end{figure}

\begin{example}\label{ex:DivMin}
   We fix the sought number of solutions $r=6$.
   We partition the vertices into the following sets $B,A_1,A_2,A_3,A_4,A_5$, each of the same size $s\ge 6$.
    Consider a problem where only the following sets are possible solutions: 
    any single vertex in $B$ together with the whole set $A_i$, $i\in[5]$.
    Then a possible set of six solutions with the maximum value of $DivMin$ ($=2$) consists of sets $A_1,b_j$, where $j\in[6]$.
    Such a solution has $DivSum$ equals $30$.
    However, there is a set of six solutions with intuitively bigger diversity (and also $DivSum = 28s+30$) while its $DivMin$ remains at $2$.
    For example, take $A_i,b_i$ for $i\in[5]$ and $A_1,b_6$.
    See \cref{fig:DivMin} depicting this example.
  \end{example}

Therefore, we are convinced that our approach, which gives access to the whole collection of diversity measures, much better supports the nature of the diverse problems.
Indeed, it seems in line with the diversity paradigm, where one seeks a collection of solutions rather than a single one.
In the same spirit, we advocate trying different diversity measures and then comparing their results.
However, we can be more subtle as we have access to a large collection of measures.
We suggest that one might pick the measures dynamically. 
This means we can choose the measures not only based on the instance but also based on the diverse collection(s) we have obtained so far.
If we obtain one of the degenerate examples, %
or simply, whenever we do not like the solution given by one diversity function, we can adjust the diversity function to penalize such an outcome and rerun the algorithm.

We will shortly give an example of a measure that penalizes an unwanted outcome returned using the $DivSum$ measure, but first, we define Venn diversity measures formally.
We start by defining the membership vector, which is basically a characteristic vector of a vertex $v$ being an element of set $S_i$ in the list of sets $S_1, \ldots, S_r$.

\begin{definition}[Membership vector]
  For $v \in V$ and a list $S_1, \ldots, S_r$ of subsets of $V$ a \emph{membership vector} of $v$ in $S_1, \ldots, S_r$
is a vector $m(S_1, \ldots, S_r, v) \in \{0,1\}^r$ such that for each $i \in [r]$,
  \begin{equation*}
    m(S_1, \ldots, S_r, v)[i] =
    \begin{cases}
      1 & \text{ if } v \in S_i, \\
      0 & \text{ otherwise.}
    \end{cases}
  \end{equation*}  
\end{definition}

Then we define the influence of a vertex provided with a function $f$ where the value of $f$  is determined by the membership vector value. 
\begin{definition}[$f$-influence]
  Let $f: \{0, 1\}^r \rightarrow \mathbb{N}$. For $v \in V$ and a list $S_1, \ldots, S_r$ of subsets of $V$ the \emph{$f$-influence} of $v$ on the list $S_1, \ldots, S_r$ is the number
  \begin{equation*}
      I_f(S_1, \ldots, S_r, v) \coloneqq f(m(S_1, \ldots, S_r, v)).
  \end{equation*}
\end{definition}
Observe that in this definition, the influence depends only on the presence of $v$ in each of $S_1, \ldots, S_r$.
Finally, we define the Venn function; see \cref{fig:venn}.
  
\begin{definition}[Venn $f$-diversity] \label{def:venn}
  Let $f: \{0, 1\}^r \rightarrow \mathbb{N}$. We define the \emph{Venn $f$-diversity} of a list $S_1, \ldots, S_r$ of subsets of $V$ to be
  \begin{equation*}
    Div_f(S_1, \ldots, S_r) \coloneqq \sum_{v \in V} I_f(S_1, \ldots, S_r,v).
  \end{equation*}  
\end{definition}

\begin{figure}
  \begin{center}
        \begin{tikzpicture}[scale=0.84, transform shape]
      \draw[thick] (-5,-4) rectangle (5,3)node[above right]{};
      \node[venn=-35, label=160:$A$] (A) at (0,0){};
      \node[venn=35, label=20:$B$] (B) at (0,0){};
      \node[venn=-30, label=270:$C$] (C) at (-1.5,-1){};
      \node[venn=30, label=270:$D$] (D) at (1.5,-1){};
      
      \node at ([shift=(-20:.9)]A.180){3}; %
      \node at ([shift=(200:.9)]B.0){3}; %
      \node at ([shift=(-35:1.1)]C.180){3}; %
      \node at ([shift=(215:1.1)]D.0){3}; %
      
      \node at ([shift=(90:.8)]A.center){4}; %
      \node at ([shift=(-72:1.25)]A.180){4}; %
      \node at ([shift=(140:.8)]A.0){4}; %
      \node at ([shift=(40:.8)]B.180){4}; %
      \node at ([shift=(252:1.25)]B.0){4}; %
      \node at (0.0,-2.4){4}; %
      
      \node at (-1.1,-.2){3}; %
      \node at (1.1,-.2){3}; %
      \node at (.8,-1.7){3}; %
      \node at (-.8,-1.7){3}; %

      \node at (0,-1){4};
      
      \node at (0,-3.5){0};
    \end{tikzpicture}  
  \end{center}

  \caption{A Venn diagram for four solutions: $A, B, C, D$.
    The influence of a vertex corresponding to the $DivSum$ measure is marked. It is the number of solutions it belongs to multiplied by the number of solutions it does not belong to.
  With Venn diversity measures, one could assign an arbitrary influence value to each part of this diagram.
  }\label{fig:venn}
\end{figure}

Whenever $f$ is fixed, we will write $I(S_1, \ldots, S_r, v)$ instead of $I_f(S_1, \ldots, S_r, v)$ and $Div(S_1, \ldots, S_r)$ instead of $Div_f(S_1, \ldots, S_r)$, and we say Venn diversity instead of Venn $f$-diversity.
We now give an example of a Venn $f$-diversity measure other than $DivSum$. 

\begin{example}[$Div^*$ measure]\label{ex:divstar}
  \begin{equation*}
    Div^*(S_1, \ldots, S_r) \coloneq \sum_{v \in V}(r^2 - |\{i \in [r]: v \in S_i\}|^2).
  \end{equation*}
\end{example}

Returning to our example with only two distinct solutions copied multiple times, achieving the best $DivSum$ value, we suggest that in that case, one might try to use the $Div^*$ measure defined in \cref{ex:divstar}.
Contrary to the sum of pairwise hamming distances, this measure penalizes taking many copies of two disjoint sets; see \cref{fig:ds} for an example of a small problem instance already admiting such a behavior.

\begin{figure}
  \begin{center}
    \begin{tikzpicture}[scale=0.85]
    \draw (0,3) node [circle, style=draw, fill=black] {} -- (1,3) node [circle, style=draw, fill=white] {} -- (2,3) node [circle, style=draw, fill=black] {} -- (3,3) node [circle, style=draw, fill=white] {} -- (4,3) node [circle, style=draw, fill=black] {} (5,3) node [] {A};
    \draw (0,2) node [circle, style=draw, fill=white] {} -- (1,2) node [circle, style=draw, fill=black] {} -- (2,2) node [circle, style=draw, fill=white] {} -- (3,2) node [circle, style=draw, fill=black] {} -- (4,2) node [circle, style=draw, fill=white] {} (5,2) node [] {B};
    \draw (0,1) node [circle, style=draw, fill=white] {} -- (1,1) node [circle, style=draw, fill=black] {} -- (2,1) node [circle, style=draw, fill=white] {} -- (3,1) node [circle, style=draw, fill=white] {} -- (4,1) node [circle, style=draw, fill=black] {} (5,1) node [] {C};     
    \draw (0,0) node [circle, style=draw, fill=black] {} -- (1,0) node [circle, style=draw, fill=white] {} -- (2,0) node [circle, style=draw, fill=white] {} -- (3,0) node [circle, style=draw, fill=black] {} -- (4,0) node [circle, style=draw, fill=white] {} (5,0) node [] {D};
  \end{tikzpicture}    
  \end{center}

  \caption{
    Consider the dominating set problem on a path of five vertices. 
If the number of required solutions is four, then a solution maximizing $DivSum$ will consist of twice A and twice B, having $DivSum$ equal to 20.
However, for the $Div^*$ function, the value for this set of solutions is equal to 60, which is less than for an arguably more diverse set of solutions consisting of all A, B, C, and D, with the $Div^*$ being equal to 63. 
  }\label{fig:ds}
\end{figure}

\paragraph{Organization of the paper.}

After short preliminaries (\Cref{sec:prelim}), 
we present the proof of our main theorem (\Cref{mainThm}) in \Cref{sec:mainThm}.
Then we provide an exposition of our main proof and tools on an example of the vertex cover problem (\Cref{sec:VC}).
We also give a simple example of a non-monotone dynamic program in \Cref{sec:nonMonotone}.
Then we include a section showing monotone DP for MSO$_1$ problems \Cref{sec:MSO}.
That section provides a proof of \Cref{cor:mainCor}.
We conclude with a discussion about open problems and further extensions in \Cref{sec:conclusions}.

\toappendix{
  \lv{
  \section{Preliminaries}\label{sec:prelim}
All graphs in this work are simple undirected graphs.
We keep most of the definitions consistent with these used in \cite{BFJMOPR22}.
We use $[r]$ to denote the set $\{1,\ldots,r\}$.

\subsection{Parameterized Complexity}
Let $\Sigma$ be a finite alphabet. A classical problem $\mathcal{P}$ is a subset of $\Sigma^*$.
Usually, $\mathcal{P}$ consists of proper encodings of some combinatorial objects, e.g., graphs (then we call $\mathcal{P}$ a \emph{graph problem}).
A \emph{parameterized problem} is a subset $\mathcal{P} \in \Sigma^* \times \mathbb{N}$. For an input $(x, k) \in \Sigma^* \times \mathbb{N}$,
the second part $k$ is called the \emph{parameter}.

\begin{definition}[FPT]
 An algorithm $\mathcal{A}$ is called \emph{fixed-parameter algorithm} for a parameterized problem $\mathcal{P}$,
 if there exists a computable function $f: \mathbb{N} \rightarrow \mathbb{N}$ and a constant $c$ such that
 for each input $(x, k) \in \Sigma^* \times \mathbb{N}$, $\mathcal{A}$ correctly decides whether $(x, k) \in \mathcal{P}$
 in time at most $f(k) \cdot |x|^c$.
 A parameterized problem $\mathcal{P}$ is \emph{fixed-parameter tractable} (FPT) if it admits a fixed-parameter algorithm.
\end{definition}

\begin{definition}[XP]
A parameterized problem $\mathcal{P}$ is called \emph{slice-wise polynomial} (XP)
if there exist an algorithm $\mathcal{A}$ and two computable functions $f, g: \mathbb{N} \rightarrow \mathbb{N}$ such that
for each input $(x, k) \in \Sigma^* \times \mathbb{N}$, $\mathcal{A}$ correctly decides whether $(x, k) \in \mathcal{P}$
in time at most $f(k) \cdot |x|^{g(k)}$.
\end{definition}

In instances where the parameter is relatively small, FPT and XP algorithms are efficient.
In the case of diverse problems, we are usually interested in finding a small set of diverse solutions,
so the number of requested solutions is a natural candidate for parameterization.

\subsection{Cliquewidth}\label{sec:cw}

\begin{definition}[Cliquewidth]\label{def:cw}
  Let $\omega \in \mathbb{N}$.
  A \emph{cliquewidth decomposition} of \emph{width} $\omega$ is a rooted tree $\mathcal{D}$ where each node $t \in V(\mathcal{D})$ is of one of the following types:
  \begin{itemize}
    \item $t = intro(v, a)$, where $a \in [\omega]$ and $t$ is a leaf of $\mathcal{D}$.
    This node corresponds to construing a one-vertex colored graph with a vertex $v$ of color $a$.
    \item $t = t_1 \oplus t_2$, where $t_1, t_2$ are children of $t$.
    This node corresponds to taking the disjoint union of colored graphs constructed by the subtrees of $t_1, t_2$.
    \item $t = recolor(t', a \rightarrow b)$, where $t'$ is the only child of $t$, $a, b \in [\omega]$.
    This node corresponds to recoloring all the vertices of color $a$ in a graph constructed by the subtree of $t'$ to color $b$.
    \item $t = addEdges(t', a, b)$, where $t'$ is the only child of $t$, $a, b \in [\omega]$.
    This node corresponds to adding an edge between each pair of vertices $u, v$ of colors $a, b$ respectively in a graph constructed by a subtree of $t'$ if such an edge does not exist.
  \end{itemize}
  For a node $t \in V(\mathcal{D})$ we denote by $\mathcal{D}_t$ the subtree of $\mathcal{D}$ rooted at $t$, by $G_t$ the colored graph constructed by $\mathcal{D}_t$, and by $\delta(t)$ the number of children of $t$.

  Let $G$ be a simple graph.
  $\mathcal{D}$ is a cliquewidth decomposition of $G$ if $G$ is isomorphic to $G_t$ for $t$ root of $\mathcal{D}$.
\end{definition}

Observe that in this definition, for each $v \in V(G)$, there is exactly one node $t = intro(v, a)$, where $a \in [\omega]$. We also write $t = Leaf_{\mathcal{D}}(v)$.
  The \emph{cliquewidth} of $G$ is minimal $\omega \in \mathbb{N}$ such that $G$ admits a decomposition $\mathcal{D}$ of width $\omega$.
}

\subsection{$DivSum$ is a Venn $f$-diversity Measure}\label{sec:measures}

Observe that we can rewrite the sum of pairwise hamming distances as
\begin{align*}
  \sum_{1 \leq i < j \leq r} HamDist(S_i, S_j)
  \sv{&}= \sum_{1 \leq i < j \leq r} \sum_{v \in V} \gamma(S_i, S_j, v) \sv{\\}
  \sv{&}= \sum_{v \in V} \sum_{1 \leq i < j \leq r}\gamma(S_i, S_j, v),
\end{align*}
where $\gamma(X, Y, v) = 1$ if $v \in X \setminus Y \cup Y \setminus X$, $0$ otherwise. Intuitively, $\gamma(X, Y, v)$ is $1$ if and only if the element $v$ contributes to the Hamming distance between $X$ and $Y$. Observe that
\begin{align*}
\sv{&}\sum_{1 \leq i < j \leq r}\gamma(S_i, S_j, v) \sv{\\} &= \sum_{1 \leq i < j \leq r}[v \in S_i \setminus S_j] + \sum_{1 \leq i < j \leq r}[v \in S_j \setminus S_i] \\
&= \sum_{1 \leq i < j \leq r}[v \in S_i \setminus S_j] + \sum_{1 \leq j < i \leq r}[v \in S_i \setminus S_j] \\
&= \sum_{1 \leq i, j \leq r}[v \in S_i \setminus S_j] \sv{\\&}= |\{i \in [r]: v \in S_i\}| \cdot |\{j \in [r]: v \not\in S_j\}| \sv{\\&}= I_f(S_1, \ldots, S_r, v),
\end{align*}
where ${f(m) = |\{i \in [r]: m[i] = 1\}| \cdot |\{j \in [r]: m[j] = 0\}|}$ for $m \in \{0, 1\}^r$.
And we have
\begin{equation*}
  DivSum(S_1, \ldots, S_r) = \sum_{v \in V} I_f(S_1, \ldots, S_r, v).
\end{equation*}

\lv{

\subsection{Graph Logic}
Graph problems can be described using logical formulas.
In the \emph{first-order logic} formula, only quantification over single variables is allowed, while in the \emph{second-order logic}, we allow quantification over predicates.
$MSO$ is the fragment of second-order logic where the second-order quantification is limited to unary predicates, which can also be viewed as subsets of the universe.
A graph can be modeled as a logical structure in several ways.
The universe may be a set of vertices, and the signature may consist of one binary relation, the adjacency. This model is called $MSO_1$.
Another way is to use vertices and edges as the universe, with the binary incidence relation. This model is denoted as $MSO_2$.
It can be shown that allowing quantification over sets of edges in $MSO_2$ makes this model more powerful than $MSO_1$.
For example, the existence of a Hamiltonian Path can be expressed in $MSO_2$, but not in $MSO_1$.
See, for example,~\cite{MSObook} for further details.
In this paper, we focus on the first model.

\begin{definition}[$MSO_1$ graph logic]
  An \emph{$MSO_1$ formula} for a graph problem $\mathcal{P}$ is a second order logic formula $\Phi$ (without free variables) that may contain:
  \begin{itemize}
    \item logical connectives,
    \item variables for vertices and vertex sets,
    \item predicates for membership, equivalence, and adjacency relation,
    \item constants, being particular vertices or vertex sets,
  \end{itemize}
  such that $G \in \mathcal{P}$ if and only if $G \models \Phi$.
  Similarly, an \emph{$MSO_1$ formula for a vertex problem} $\mathcal{P}$ is $\Phi$ with one free variable such that $(G, S) \in \mathcal{P}$ if and only if $G \models \Phi(S)$.
\end{definition}
}
}

\section{Monotone DP for Vertex Problems}\label{sec:mainThm}
A very natural property of dynamic programming algorithms for vertex problems on cliquewidth decompositions is that each partial solution for a node $t$ is a disjoint union of partial solutions for its children.
We call algorithms with this property \textit{monotone}.

In this section, we give the formalism of dynamic programming and the monotone property.
We prove our main result, that having a monotone dynamic program for each of the vertex problems $\mathcal{P}_1, \ldots, \mathcal{P}_r$ and a cliquewidth decomposition $\mathcal{D}$ of $G$, we can solve $Div^d(\mathcal{P}_1, \ldots, \mathcal{P}_r)$ on $G$.
\lv{\subsection{Definitions}}%
Recall that $\delta(t)$ denotes the number of children of a node $t \in V(\mathcal{D})$.
\begin{definition}[Dynamic programming core]
  A \emph{dynamic programming core} is an algorithm $\mathfrak{C}$ that takes cliquewidth decomposition $\mathcal{D}$ of a graph $G$ and produces
  \begin{itemize}
    \item A finite set $Accept_{\mathfrak{C}, \mathcal{D}} \subseteq \{0, 1\}^*$,
    \item A finite set $Process_{\mathfrak{C}, \mathcal{D}}(t) \subseteq (\{0, 1\}^*)^{\delta(t) + 1}$ for each $t \in V(\mathcal{D})$.
  \end{itemize}

  Let $\tau(\mathfrak{C}, \mathcal{D})$ be the time necessary to construct these data and $Size_{\mathfrak{C}, \mathcal{D}} \coloneqq max_{t \in V(\mathcal{D})}|Process_{\mathfrak{C, \mathcal{D}}}(t)|$.
\end{definition}

Intuitively, $Process_{\mathfrak{C}, \mathcal{D}}$ corresponds to the transitions of the dynamic programming.
If a partial solution represented by $w$ at node $t$
can be obtained from partial solutions represented by $w_1\ldots w_{\delta(t)}$ at its children $t_1\ldots t_{\delta(t)}$ then
$ww_1\ldots w_{\delta(t)} \in Process_{\mathfrak{C}, \mathcal{D}}(t)$.

\begin{definition}[Witness]
  Let $\mathcal{D}$ be a cliquewidth decomposition of a graph $G$ with a root $q$ and $\mathfrak{C}$ be a dynamic programming core.
  For $w \in \{0, 1\}^*$ a $(\mathfrak{C}, \mathcal{D}, w)$-witness is a function $\alpha: V(\mathcal{D}) \rightarrow  \{0, 1\}^*$ such that
  \begin{itemize}
    \item for each $t \in V(\mathcal{D})$
    with children $t_1, \ldots , t_{\delta(t)}$, 
    $\alpha(t)\alpha(t_1)\ldots \alpha(t_{\delta(t)}) \in Process_{\mathfrak{C}, \mathcal{D}}(t)$,
    \item $\alpha(q) = w$.
  \end{itemize}    

  If $w \in Accept_{\mathfrak{C}, \mathcal{D}}$, we say $\alpha$ is a $(\mathfrak{C}, \mathcal{D})$-witness.
\end{definition}
Intuitively, if $w$ is a dynamic programming entry in the root $t$ of $\mathcal{D}$, it can originate from many combinations of entries in the children of $t$.
A $(\mathfrak{C}, \mathcal{D}, w)$-witness describes one of many derivation trees for~$w$.

Observe that if $\mathcal{P}$ is a vertex problem, given a graph $G$, we cannot hope to list all $S$ such that $(G, S) \in \mathcal{P}$ in FPT time,
as the number of such sets may be exponential in $|G|$. Therefore, we focus on deciding whether for a given graph $G$ there exists any such $S$.
We call this decision problem a corresponding \emph{graph problem} and write $G \in \mathcal{P}$ if the answer is \textsc{Yes}.

\begin{definition} 
  We say that a dynamic programming core $\mathfrak{C}$ \emph{solves} a graph problem $\mathcal{P}$ if for each graph $G$ and for each cliquewidth decomposition $\mathcal{D}$ of $G$, $G \in \mathcal{P}$ if and only if a $(\mathfrak{C}, \mathcal{D})$-witness exists.
\end{definition}

That means $G$ is a \textsc{Yes}-instance of $\mathcal{P}$ if and only if there exists a derivation tree for some accepting entry~$w$.

\begin{theorem} \label{solves}
  Let $\mathcal{P}$ be a graph problem and $\mathfrak{C}$ be a dynamic programming core that solves $\mathcal{P}$.
  Given a decomposition $\mathcal{D}$ of a graph $G$ we can determine whether $G \in \mathcal{P}$ in time 
  \begin{equation*}
    \mathcal{O}\left(\tau(\mathfrak{C}, \mathcal{D}) + |V(\mathcal{D})| \cdot Size_{\mathfrak{C}, \mathcal{D}}\right).
  \end{equation*}
\end{theorem}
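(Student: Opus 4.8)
The plan is the natural bottom-up evaluation of \emph{reachable} dynamic-programming entries over the decomposition tree. First I would invoke the core $\mathfrak{C}$ on $\mathcal{D}$ to materialise $Accept_{\mathfrak{C}, \mathcal{D}}$ and the transition sets $Process_{\mathfrak{C}, \mathcal{D}}(t)$ for every $t \in V(\mathcal{D})$; this is exactly the $\tau(\mathfrak{C}, \mathcal{D})$ summand. Then, processing the nodes of $\mathcal{D}$ from the leaves towards the root $q$, I would compute for each node $t$ the set
\[
  R(t) := \{\, w \in \{0,1\}^* : \text{there exists a } (\mathfrak{C}, \mathcal{D}_t, w)\text{-witness}\,\}
\]
of entries derivable inside the subtree $\mathcal{D}_t$. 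Unfolding the definition of a witness restricted to $\mathcal{D}_t$ (whose root is $t$) yields a recurrence that is immediate to evaluate: for a leaf $\delta(t) = 0$ and $R(t) = \{\, w : (w) \in Process_{\mathfrak{C}, \mathcal{D}}(t)\,\}$; for an internal node with children $t_1, \dots, t_{\delta(t)}$,
\[
  R(t) = \{\, w : \exists\,(w, w_1, \dots, w_{\delta(t)}) \in Process_{\mathfrak{C}, \mathcal{D}}(t) \text{ with } w_i \in R(t_i) \text{ for every } i \in [\delta(t)]\,\}.
\]

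Correctness is a routine induction on the subtree: a $(\mathfrak{C}, \mathcal{D}_t, w)$-witness is precisely a choice of an entry $\alpha(t_i)$ at each child, together with witnesses for those entries in the corresponding sub-subtrees, so $w \in R(t)$ exactly when the recurrence fires. Evaluating this at $q$ and then invoking the hypothesis that $\mathfrak{C}$ solves $\mathcal{P}$ — namely, $G \in \mathcal{P}$ iff some $(\mathfrak{C}, \mathcal{D}, w)$-witness with $w \in Accept_{\mathfrak{C}, \mathcal{D}}$ exists — shows that $G \in \mathcal{P}$ if and only if $R(q) \cap Accept_{\mathfrak{C}, \mathcal{D}} \neq \emptyset$, which is the answer we output. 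Note the procedure deliberately tracks only the reachability sets $R(t)$ and never the witnesses themselves (of which there can be exponentially many), which is what keeps it within the stated time.

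For the running time, observe that each element of $R(t)$ is the first coordinate of some tuple of $Process_{\mathfrak{C}, \mathcal{D}}(t)$, hence $|R(t)| \le Size_{\mathfrak{C}, \mathcal{D}}$. To process a node $t$ I would first build a constant-time membership structure (e.g. a hash table) for each $R(t_i)$, $i \in [\delta(t)]$, in total time $\mathcal{O}(\delta(t)\cdot Size_{\mathfrak{C}, \mathcal{D}})$, and then scan the at most $Size_{\mathfrak{C}, \mathcal{D}}$ tuples of $Process_{\mathfrak{C}, \mathcal{D}}(t)$, testing the $\delta(t)$ child-coordinates of each, again in $\mathcal{O}(\delta(t)\cdot Size_{\mathfrak{C}, \mathcal{D}})$ time. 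Summing over all nodes and using $\sum_{t \in V(\mathcal{D})} \delta(t) = |V(\mathcal{D})| - 1$ bounds the sweep by $\mathcal{O}(|V(\mathcal{D})|\cdot Size_{\mathfrak{C}, \mathcal{D}})$; the final intersection with $Accept_{\mathfrak{C}, \mathcal{D}}$ costs $\mathcal{O}(|Accept_{\mathfrak{C}, \mathcal{D}}| + Size_{\mathfrak{C}, \mathcal{D}}) = \mathcal{O}(\tau(\mathfrak{C}, \mathcal{D}) + Size_{\mathfrak{C}, \mathcal{D}})$, and adding the $\tau(\mathfrak{C}, \mathcal{D})$ construction cost gives the claimed bound. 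There is no genuine mathematical difficulty here; the one point that needs care is this amortised accounting — charging the per-node factor $\delta(t)$ against the global budget $\sum_t \delta(t) = \mathcal{O}(|V(\mathcal{D})|)$ and keeping membership tests at (amortised) constant time — so that no spurious $Size_{\mathfrak{C}, \mathcal{D}}^2$ or $|V(\mathcal{D})|^2$ factor creeps in.
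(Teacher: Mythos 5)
Your proposal is correct and matches the paper's proof essentially verbatim: your set $R(t)$ is the paper's $\Pi(t)$, computed bottom-up with the same recurrence from $Process_{\mathfrak{C},\mathcal{D}}(t)$, and the answer is read off by intersecting $R(q)$ with $Accept_{\mathfrak{C},\mathcal{D}}$. The only difference is that you spell out the amortised running-time accounting in more detail than the paper does.
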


\begin{proof}
  Given $\mathfrak{C}, \mathcal{D}$, we can construct $Accept_{\mathfrak{C}, \mathcal{D}}$ and
  $Process_{\mathfrak{C}, \mathcal{D}}(t)$ for each $t \in V(\mathcal{D})$ in time $\tau(\mathfrak{C}, \mathcal{D})$. 
  Let $\Pi(t)$ be the set of $w \in \{0, 1\}^*$ such that a $(\mathfrak{C}, \mathcal{D}_t, w)$-witness exists.
  We can construct $\Pi(t)$ for each $t \in V(\mathcal{D})$ by induction.
  \begin{itemize}
    \item If $t$ is a leaf of the decomposition, $\Pi(t) = Process_{\mathfrak{C}, \mathcal{D}}(t)$.
    \item If $t$ has children $t_1, \ldots t_{\delta(t)}$,
    $\Pi(t) = \{w: \exists_{(w, w_1, \ldots w_{\delta(t)}) \in Process_{\mathfrak{C}, \mathcal{D}}(t)} \ s.t. \
    \forall_{i \in [\delta(t)]} \ w_i \in \Pi(t_i)\}$.
  \end{itemize}
  $G \in \mathcal{P}$ if and only if $\Pi(q) \cap Accept_{\mathfrak{C}, \mathcal{D}} \not = \emptyset$, where $q$ is the root of $\mathcal{D}$.
\end{proof}
Note that from the above algorithm we can also easily recover some $(\mathfrak{C}, \mathcal{D}, w)$-witness for each $w \in \Pi(q)$.

\begin{definition}[Monotonicity]\label{def:monotone}
  Let $\mathcal{P}$ be a vertex problem. We call a dynamic programming core $\mathfrak{C}$ \emph{monotone} if there exists a function
  $\rho: \{0,1\}^* \rightarrow \{0,1\}$ such that for each graph $G$, for each cliquewidth decomposition $\mathcal{D}$ of $G$, for each
  $S \subseteq V(G)$, $(G, S) \in \mathcal{P}$ if and only if there exists a $(\mathfrak{C}, \mathcal{D})$-witness $\alpha$ such that
  ${S = \{v \in V(G): \rho(\alpha(Leaf_\mathcal{D}(v))) = 1\}}$. We call such $\rho$ a \emph{$\mathfrak{C}$-vertex-membership function} and say \lv{a }pair $(\mathfrak{C}, \rho)$ \emph{solves}~$\mathcal{P}$.
\end{definition}

Intuitively, a vertex membership function $\rho$ together with a cliquewidth decomposition $\mathcal{D}$ of a graph $G$ and a $(\mathfrak{C}, \mathcal{D})$-witness $\alpha$ provide an encoding of a subset of vertices of $G$ using only the values of DP entries in leaves of $\mathcal{D}$. %
Let $S_\rho(\mathcal{D}, \alpha) = \{v \in V(G): \rho(\alpha(Leaf_\mathcal{D}(v))) = 1\}$.
Note that in this definition, we determine whether a vertex is in the subset or not based only on the value of the witness on its introduce node.

\lv{
\subsection{Proof of \Cref{mainThm}}
}

The following theorem is a formal reformulation of \Cref{mainThm} stated in the introduction. 
Indeed, if $t_\mathcal D= \mathcal{O}(Size_{\mathfrak{C}_i, \mathcal{D}})$ for $\mathcal{P}_i$ being the one with the slowest maximal running time on a single node of the decomposition out of $\mathcal{P}_1, \ldots, \mathcal{P}_r$,
and assuming we are given $Accept_{\mathfrak{C}_i, \mathcal{D}}$ and $Process_{\mathfrak{C}_i, \mathcal{D}}(t)$ for each $t \in V(\mathcal{D})$ precomputed, the time needed to solve $Div^d(\mathcal{P}_1,  \ldots , \mathcal{P}_r)$ is
\[
  \mathcal{O}\left(|V(\mathcal{D})| \cdot \prod_{i = 1}^r Size_{\mathfrak{C}_i, \mathcal{D}}\right) \le \mathcal O(|V(\mathcal{D})|t_{\mathcal{D}}^r).
  \]

\begin{theorem} \label{theorem}
  Let $\mathcal{P}_1, \ldots , \mathcal{P}_r$ be vertex problems, $(\mathfrak{C}_i, \rho_i)$
  be a pair of a monotone dynamic programming core and a vertex membership function that solves $\mathcal{P}_i$,
  $Div$ be a Venn $f$-diversity measure for some ${f: \{0, 1\}^r \rightarrow \mathbb{N}}$, $d \in \mathbb{N}$.
  The problem $Div^d(\mathcal{P}_1,  \ldots , \mathcal{P}_r)$ on a graph $G$ with given cliquewidth decomposition $\mathcal{D}$ can be solved in time
  \begin{equation*}
    \mathcal{O}\left(\sum_{i = 1}^{r}{\tau(\mathfrak{C}_i, \mathcal{D})} + |V(\mathcal{D})| \cdot \prod_{i = 1}^r Size_{\mathfrak{C}_i, \mathcal{D}}\right).
  \end{equation*}
\end{theorem}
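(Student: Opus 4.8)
<br>

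The plan is to run the reachability dynamic program from the proof of \Cref{solves} on the \emph{product} of the cores $\mathfrak{C}_1,\ldots,\mathfrak{C}_r$, augmenting each reachable tuple of entries with the best diversity value obtainable in the corresponding subtree of $\mathcal{D}$. The one genuinely new ingredient I need is the observation that a Venn $f$-diversity is \emph{leaf-additive} along a cliquewidth decomposition. Indeed, $Div_f(S_1,\ldots,S_r)=\sum_{v\in V(G)}f(m(S_1,\ldots,S_r,v))$ by \Cref{def:venn}, and if each $S_i$ is encoded by a $(\mathfrak{C}_i,\mathcal{D})$-witness $\alpha_i$ via $S_i=S_{\rho_i}(\mathcal{D},\alpha_i)$, then monotonicity (\Cref{def:monotone}) gives $m(S_1,\ldots,S_r,v)=\big(\rho_1(\alpha_1(Leaf_\mathcal{D}(v))),\ldots,\rho_r(\alpha_r(Leaf_\mathcal{D}(v)))\big)$, i.e.\ the whole membership vector of $v$ is read off at the unique leaf $Leaf_\mathcal{D}(v)$. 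Since every vertex is introduced at exactly one leaf, $Div_f$ splits as a sum of leaf-local contributions, hence decomposes additively over the decomposition tree.

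Concretely, for a node $t\in V(\mathcal{D})$ and a tuple $\bar w=(w_1,\ldots,w_r)$, let $\mathcal{B}(t,\bar w)$ be the maximum, over all $r$-tuples $(\alpha_1,\ldots,\alpha_r)$ in which $\alpha_i$ is a $(\mathfrak{C}_i,\mathcal{D}_t,w_i)$-witness for every $i\in[r]$, of $\sum_{\ell} f\big(\rho_1(\alpha_1(\ell)),\ldots,\rho_r(\alpha_r(\ell))\big)$, the sum ranging over leaves $\ell$ of $\mathcal{D}_t$; set $\mathcal{B}(t,\bar w)=-\infty$ if no such tuple exists. I would compute $\mathcal{B}$ bottom-up exactly as $\Pi(t)$ is computed in the proof of \Cref{solves}: for a leaf $t=intro(v,a)$ put $\mathcal{B}(t,\bar w)=f(\rho_1(w_1),\ldots,\rho_r(w_r))$ whenever $(w_i)\in Process_{\mathfrak{C}_i,\mathcal{D}}(t)$ for all $i$ and $-\infty$ otherwise, and for an internal node $t$ with children $t_1,\ldots,t_{\delta(t)}$ put
\[
  \mathcal{B}(t,\bar w)=\max\ \sum_{j=1}^{\delta(t)}\mathcal{B}\big(t_j,(w_1^j,\ldots,w_r^j)\big),
\]
the maximum ranging over all choices with $(w_i,w_i^1,\ldots,w_i^{\delta(t)})\in Process_{\mathfrak{C}_i,\mathcal{D}}(t)$ for each $i$ (a $-\infty$ summand makes the term $-\infty$, the empty maximum is $-\infty$). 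Correctness is a routine structural induction on $\mathcal{D}_t$: witnesses for distinct problems are mutually unconstrained, the leaf sets of $\mathcal{D}_{t_1},\ldots,\mathcal{D}_{t_{\delta(t)}}$ partition the leaf set of $\mathcal{D}_t$, and the objective is a sum of leaf-local terms, so the maximization factors precisely as stated. At the root $q$, $\max\{\mathcal{B}(q,\bar w): w_i\in Accept_{\mathfrak{C}_i,\mathcal{D}}\text{ for all }i\}$ equals $\max\{Div(S_1,\ldots,S_r):(G,S_i)\in\mathcal{P}_i\}$ (using the ``iff'' of \Cref{def:monotone}), so $Div^d(\mathcal{P}_1,\ldots,\mathcal{P}_r)$ admits a solution on $G$ iff this value is at least $d$; backtracking through the table — as in the remark after \Cref{solves} — recovers witnesses $\alpha_i$ attaining the optimum and hence explicit sets $S_i=S_{\rho_i}(\mathcal{D},\alpha_i)$.

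For the running time, constructing $Accept_{\mathfrak{C}_i,\mathcal{D}}$ and all $Process_{\mathfrak{C}_i,\mathcal{D}}(t)$ costs $\sum_{i=1}^r\tau(\mathfrak{C}_i,\mathcal{D})$. At each $t$, the table $\mathcal{B}(t,\cdot)$ has at most $\prod_{i=1}^r|Process_{\mathfrak{C}_i,\mathcal{D}}(t)|\le\prod_{i=1}^r Size_{\mathfrak{C}_i,\mathcal{D}}$ entries that are not $-\infty$, and it is filled by scanning the product $\prod_{i=1}^r Process_{\mathfrak{C}_i,\mathcal{D}}(t)$ once, each element handled in time proportional to its length together with $r$ lookups into the already-computed children tables; summing over $t\in V(\mathcal{D})$ gives $\mathcal{O}\big(\sum_{i=1}^r\tau(\mathfrak{C}_i,\mathcal{D})+|V(\mathcal{D})|\cdot\prod_{i=1}^r Size_{\mathfrak{C}_i,\mathcal{D}}\big)$, with no dependence on $d$ or on the magnitude of $Div$ since each cell stores a single optimal partial value rather than the set of all attainable ones. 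I expect the only real obstacle to be pinning down that first step cleanly — the leaf-additivity of $Div_f$ along $\mathcal{D}$ — since that is exactly what licenses the single-value annotation; for $DivMin$, which is not additive, one is instead forced to carry, per product-entry, the vector of all $\binom{r}{2}$ pairwise Hamming distances truncated at $d$, which is what produces the extra $d^{r^2}$ factor of \Cref{minmainThm}.
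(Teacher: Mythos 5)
Your proposal is correct and follows essentially the same route as the paper: a product dynamic program over the $r$ cores that carries, for each tuple of entries, only the best achievable diversity value, with correctness resting on the leaf-additivity of Venn $f$-diversity under monotonicity. The only (cosmetic) difference is normalization — you store the sum of $f$-values over the leaves of the current subtree, which is purely additive, whereas the paper stores the full diversity of the partial solutions viewed as subsets of $V(G)$ and therefore needs the $(|V(G)|-1)f(0^r)$ offset at leaves and the $-|V(G)|\cdot f(0^r)$ correction at join nodes (its \Cref{lemma}); the two quantities differ by the constant $(|V(G)|-|V(G_t)|)f(0^r)$ and coincide at the root.
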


Before proving it, we will also state the following theorem, which is a formal reformulation of \cref{minmainThm} from the introduction.
The proof of \cref{theorem} can be easily modified to work for \cref{thm:min-diversity}.

\begin{theorem} \label{thm:min-diversity}
  Let $\mathcal{P}_1,  \ldots , \mathcal{P}_r$ be vertex problems, $(\mathfrak{C}_i, \rho_i)$
  be a pair of a monotone dynamic programming core and a vertex membership function that solves $\mathcal{P}_i$,
  $d \in \mathbb{N}$.
  The problem $DivMin^d(\mathcal{P}_1,  \ldots , \mathcal{P}_r)$ on a graph $G$ with given cliquewidth decomposition $\mathcal{D}$ can be solved in time
  \begin{equation*}
    \mathcal{O}\left(\sum_{i = 1}^{r}{\tau(\mathfrak{C}_i, \mathcal{D})} + |V(\mathcal{D})| \cdot \prod_{i = 1}^r Size_{\mathfrak{C}_i, \mathcal{D}} \cdot d^{r^2}\right).
  \end{equation*}
\end{theorem}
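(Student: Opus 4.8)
The plan is to prove \Cref{thm:min-diversity} by essentially the same product-automaton/product-DP construction that proves \Cref{theorem}, with one extra bookkeeping coordinate per pair $(i,j)$ that tracks the Hamming distance between $S_i$ and $S_j$ accumulated so far, capped at $d$. The core idea: we run the $r$ monotone DP cores $\mathfrak{C}_1,\dots,\mathfrak{C}_r$ in parallel over $\mathcal{D}$, so a state at a node $t$ is an $r$-tuple $(w^1,\dots,w^r)$ with $w^i \in \Pi_i(t)$. By monotonicity, each such tuple corresponds (via the leaves of $\mathcal{D}_t$ and the membership functions $\rho_i$) to $r$ partial solutions $S_1^t,\dots,S_r^t$ whose restriction to $V(G_t)$ is determined coordinate-wise. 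The new ingredient is to augment each product state with a vector $(\delta_{ij})_{1\le i<j\le r}$ where $\delta_{ij} = \min\!\big(d,\ |(S_i^t \triangle S_j^t) \cap V(G_t)|\big)$.

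\textbf{Key steps, in order.} First I would set up the augmented product core: states are pairs $\big((w^1,\dots,w^r),\ (\delta_{ij})_{i<j}\big)$ with each $\delta_{ij}\in\{0,1,\dots,d\}$, so the number of distinct $\delta$-vectors is $(d+1)^{\binom{r}{2}} \le d^{r^2}$ (for $d\ge 2$; the $d\le 1$ case is trivial and handled separately). Second, I would define the transitions at a node $t$ with children $t_1,\dots,t_{\delta(t)}$: a product transition is legal iff each coordinate transition $(w^i, w^i_1,\dots) \in Process_{\mathfrak{C}_i,\mathcal{D}}(t)$ is legal, and the resulting $\delta$-vector is the coordinate-wise capped sum $\delta_{ij} = \min\!\big(d,\ \sum_{\ell} \delta_{ij}^{(\ell)} + [\,t = Leaf_\mathcal{D}(v)\text{ and }v\text{ separates }S_i,S_j\,]\big)$. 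Here the only nonzero "new contribution" happens at a leaf $t = intro(v,a)$: since $\rho_i$ reads off membership of $v$ from $w^i = \alpha(t)$, a leaf contributes $1$ to $\delta_{ij}$ exactly when $\rho_i(w^i) \ne \rho_j(w^j)$; at internal nodes the $\delta$'s just add (capped), which is correct because $V(G_t)$ is the disjoint union of the $V(G_{t_\ell})$ (by monotonicity, partial solutions for $t$ are disjoint unions of those for its children, so symmetric differences add). This is precisely the argument used for the Venn-diversity case in the proof of \Cref{theorem}, where the analogous quantity was the running $f$-influence sum; here $\min$ replaces $+$ only at the very end. Third, at the root $q$: $G$ is a YES-instance of $DivMin^d(\mathcal{P}_1,\dots,\mathcal{P}_r)$ iff there is a reachable root state whose $r$-tuple lies in $\prod_i Accept_{\mathfrak{C}_i,\mathcal{D}}$ \emph{and} whose $\delta$-vector satisfies $\delta_{ij} = d$ for \emph{all} $i<j$ (recall each $\delta_{ij}$ is capped at $d$, so $\delta_{ij}=d$ means the true Hamming distance is $\ge d$, which is exactly $DivMin \ge d$). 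Fourth, I would bound the running time exactly as in \Cref{solves}/\Cref{theorem}: the reachable-state computation is a bottom-up pass over $|V(\mathcal{D})|$ nodes, and at each node the number of product-transition tuples to inspect is at most $\big(\prod_i Size_{\mathfrak{C}_i,\mathcal{D}}\big)$ times the number of $\delta$-vectors one must consider, which is $(d+1)^{\binom r2} = \mathcal{O}(d^{r^2})$; including the precomputation cost $\sum_i \tau(\mathfrak{C}_i,\mathcal{D})$ gives the claimed bound.

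\textbf{Main obstacle.} The genuinely delicate point — the one I would write out most carefully — is the correctness of capping at $d$ combined with the capped-sum transition. One must check that "the true Hamming distance over $V(G_t)$ is $\ge d$" is faithfully represented by "$\delta_{ij} = d$" through arbitrary compositions: since each per-child value is itself a capped min, taking $\min(d, \sum_\ell \delta_{ij}^{(\ell)} + \text{leaf bit})$ is monotone in the true (uncapped) quantities and agrees with the true value whenever the true value is $< d$, an easy induction on the structure of $\mathcal{D}_t$ confirms $\delta_{ij} = \min(d, \text{true distance on }V(G_t))$. A secondary subtlety is the same one that already arises in \Cref{theorem}: monotonicity guarantees that for the \emph{right} choice of witness $\alpha_i$ the set $S_\rho(\mathcal{D},\alpha_i)$ equals the desired $S_i$, and that \emph{every} accepting witness yields \emph{some} valid $S_i\in\mathcal{P}_i$; hence ranging over all reachable product states exactly ranges over all tuples $(S_1,\dots,S_r)$ with $(G,S_i)\in\mathcal{P}_i$, and we have correctly filtered for $DivMin \ge d$. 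With these two lemmas in hand, the proof is a routine adaptation of the proof of \Cref{theorem}; I would simply note the modifications rather than repeat the whole construction.
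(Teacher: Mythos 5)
Your proposal is correct and follows essentially the same route as the paper: the paper also modifies the $\Pi_{Div}$ construction by replacing the single diversity value with a vector in $\{0,\dots,d\}^{\binom{r}{2}}$ of pairwise Hamming distances capped at $d$, summed pointwise at join nodes via the disjointness guaranteed by monotonicity, and keeps all (possibly incomparable) such vectors per entry tuple, which is exactly where the $d^{r^2}$ factor arises. Your explicit acceptance condition ($\delta_{ij}=d$ for all $i<j$ at the root) and the capped-sum induction are the right details to spell out and are consistent with the paper's sketch.
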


Keep in mind that the following proof can be modified to work also for $DivMin$ as stated in \cref{thm:min-diversity}.
Instead of one diversity value $\ell$ for a tuple of partial solutions, we need to store a vector $L \in \{0, \ldots, d\} ^ {\binom{r}{2}}$ of pairwise Hamming distances between the solutions.
Since for $S_1, S_2 \subseteq S, \ P_1, P_2 \subseteq P$, where $S$ and $P$ are disjoint, $HamDist(S_1 \cup P_1, S_2 \cup P_2) = HamDist(S_1, S_2) + HamDist(P_1, P_2)$, we can sum these vectors point-wise as we sum single diversity values.
As the Hamming distance only grows, and we are interested in having the minimum distance of at least $d$, we can treat all the values greater than $d$ as equal to $d$.
However, we cannot take advantage of keeping only the best diversity value for a tuple of entries because such vectors may be incomparable, so for one tuple of entries that represents multiple tuples of partial solutions, there may be up to $(d + 1)^{\binom{r}{2}}$ different diversity vectors, which contributes to $d^{r^2}$ factor in the running time of \Cref{thm:min-diversity}.

  The correctness of the algorithm is based on the following lemma and the fact that for each node $t = t_1 \oplus t_2$  of the decomposition, $V(G_{t_1}) \cap V(G_{t_2}) = \emptyset$.

  \begin{restatable}{lemma}{reslemma} \label{lemma}
    Let $S, P \subseteq V$ be disjoint sets, $S_1, \ldots, S_r \subseteq S$, $P_1, \ldots, P_r \subseteq P$ and
    let $Div$ be a Venn $f$-diversity measure for some $f: \{0, 1\}^r \rightarrow \mathbb{N}$. Then
    \begin{align*}
      Div(S_1 \cup P_1,\ldots, S_r \cup P_r) \sv{\\ &}=Div(S_1, \ldots, S_r) + Div(P_1, \ldots, P_r) - |V(G)| \cdot f(0^r).
    \end{align*}
  \end{restatable}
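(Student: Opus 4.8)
The plan is to expand all three $Div$ terms using the definition of Venn $f$-diversity (Definition \ref{def:venn}) as sums over $v\in V$ of $f$-influences, and then compare the per-vertex contributions on both sides. The key observation is that the $f$-influence of a vertex $v$ depends only on its membership vector $m(S_1\cup P_1,\ldots,S_r\cup P_r,v)$, which in turn depends only on which of the sets $S_i\cup P_i$ contain $v$. Since $S\cap P=\emptyset$, every vertex $v\in V$ falls into exactly one of three cases: $v\in S$, $v\in P$, or $v\notin S\cup P$. The plan is to handle these three cases separately, tracking the contribution of $v$ to each of the three $Div$ expressions.

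First I would note that for $v\in S$, since $P_i\subseteq P$ we have $v\notin P_i$ for all $i$, so $m(S_1\cup P_1,\ldots,S_r\cup P_r,v)=m(S_1,\ldots,S_r,v)$, and moreover $m(P_1,\ldots,P_r,v)=0^r$. Hence $v$ contributes $I_f(S_1,\ldots,S_r,v)$ to the left-hand side, the same value to the $Div(S_1,\ldots,S_r)$ term, and $f(0^r)$ to the $Div(P_1,\ldots,P_r)$ term. Symmetrically, for $v\in P$ the vertex contributes $I_f(P_1,\ldots,P_r,v)$ to the left side and to $Div(P_1,\ldots,P_r)$, and $f(0^r)$ to $Div(S_1,\ldots,S_r)$. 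Finally, for $v\notin S\cup P$, the vertex lies in none of the $S_i$, none of the $P_i$, and none of the $S_i\cup P_i$, so it contributes $f(0^r)$ to each of the three terms. In all three cases one checks the identity ``left-side contribution of $v$'' $=$ ``contribution of $v$ to $Div(S_1,\ldots,S_r)$'' $+$ ``contribution of $v$ to $Div(P_1,\ldots,P_r)$'' $-f(0^r)$.

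Summing this per-vertex identity over all $v\in V$ gives exactly the claimed equation, since the correction term $-f(0^r)$ is subtracted once for each of the $|V(G)|$ vertices, yielding $-|V(G)|\cdot f(0^r)$. I do not expect a genuine obstacle here; the only point requiring a little care is the bookkeeping in the first two cases, namely verifying that when $v\in S$ the membership vector of $v$ with respect to $(S_1\cup P_1,\ldots,S_r\cup P_r)$ really coincides with its membership vector with respect to $(S_1,\ldots,S_r)$ — this is precisely where the disjointness hypothesis $S\cap P=\emptyset$ (together with $S_i\subseteq S$, $P_i\subseteq P$) is used, and it is the reason the lemma fails without it.
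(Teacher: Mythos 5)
Your proposal is correct and follows essentially the same route as the paper's proof: both expand the three $Div$ terms vertex by vertex, use the disjointness of $S$ and $P$ to show that the membership vector of $v$ with respect to $(S_1\cup P_1,\ldots,S_r\cup P_r)$ collapses to that with respect to $(S_1,\ldots,S_r)$, $(P_1,\ldots,P_r)$, or $0^r$ according to whether $v\in S$, $v\in P$, or neither, and then sum the resulting per-vertex identity to obtain the $-|V(G)|\cdot f(0^r)$ correction. The only cosmetic difference is that you phrase the comparison as a single per-vertex identity while the paper groups the sums by the three regions.
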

  
  Note that in \Cref{lemma}, we need to know the number of vertices of the final graph ($|V(G)|$).
However, $f(0^r)$ corresponds to the influence of a vertex that is in none of the partial solutions, which could be set to $0$ in many convenient measures.
In such cases, we can even ignore this mild constraint.

  \begin{proof}[Proof of \cref{lemma}]
    Recall that for $S_1, \ldots, S_r \subseteq V$
    \begin{equation*}
      Div(S_1, \ldots S_r) = \sum_{v \in V} f(m(S_1, \ldots, S_r, v)),
    \end{equation*}
    where $m(S_1, \ldots, S_r, v)$ is the membership vector of $v$ in $S_1, \ldots, S_r$.
    If $v \not\in S$, $m(S_1, \ldots S_r, v) = 0^r$, and analogously if $v \not\in P$, $m(P_1, \ldots P_r, v) = 0^r$ so we get
    \begin{align*}
      \sv{&}Div(S_1, \ldots, S_r) = \sum_{v \in V} f(m(S_1, \ldots, S_r, v)) \sv{\\&} =\sum_{v \in S} f(m(S_1, \ldots, S_r, v)) + |P| f(0^r) + |V \setminus (S \cup P)| f(0^r), \\
      \sv{&}Div(P_1, \ldots, P_r) = \sum_{v \in V} f(m(P_1, \ldots, P_r, v)) \sv{\\&} =\sum_{v \in P} f(m(P_1, \ldots, P_r, v)) + |S| f(0^r) + |V \setminus (S \cup P)| f(0^r).
    \end{align*}
    Observe that
    \begin{itemize}
      \item if $v \in S$, $v \in S_i \cup P_i$ if and only if $v \in S_i$,
        so ${m(S_1 \cup P_1, \ldots, S_r \cup P_r, v) = m(S_1, \ldots, S_r, v)}$,
      \item if $v \in P$, $v \in S_i \cup P_i$ if and only if $v \in P_i$,
        so ${m(S_1 \cup P_1, \ldots, S_r \cup P_r, v) = m(P_1,\ldots, P_r, v)}$,
      \item if $v \in V \setminus (S \cup P)$, $v \not\in S_i$ and $v \not\in P_i$ for each $i \in [r]$,
        so $m(S_1 \cup P_1, \ldots, S_r \cup P_r, v) = 0^r$.      
    \end{itemize}
    So we get
    \begin{align*}
      \sv{&}Div(S_1 \cup P_1,\ldots, S_r \cup P_r) \sv{\\}
      &= \sum_{v \in V} f(m(S_1 \cup P_1, \ldots, S_r \cup P_r, v)\\
      &= \sum_{v \in S} f(m(S_1, \ldots, S_r, v)) + \sum_{v \in P} f(m(P_1,\ldots, P_r, v)) \sv{+\\&}+ |V \setminus (S \cup P)| \cdot f(0^r) \\
      &= Div(S_1, \ldots, S_r) + Div(P_1, \ldots, P_r) - |V| \cdot f(0^r).\qedhere
    \end{align*}
  \end{proof}
  We start by computing the tables $Process_{\mathfrak{C}_i, \mathcal{D}}$ and $Accept_{\mathfrak{C}_i, \mathcal{D}}$ corresponding to each core in total time $\sum_{i = 1}^{r} \tau(\mathfrak{C}_i, \mathcal{D})$.
  Then, for each node $t$ of the decomposition, we construct by induction the set $\Pi_{Div}(t)$ of tuples $(w^1, \ldots, w^r, \ell)$ such that, for each $i$, there exists a $(\mathfrak{C}_i, \mathcal{D}_t, w^i)$-witness $\alpha_i$, and $\ell$ is the diversity of the list of partial solutions defined by this tuple of witnesses. If for some $(w^1, \ldots, w^r)$ there are multiple such tuples of witnesses, we keep only those that give the maximum value of the diversity. If $t$ is a leaf of $\mathcal{D}$, each witness $\alpha_i$ consists only of $w^i \in Process_{\mathfrak{C}_i, \mathcal{D}}$. Observe that for any tuple $w = (w^1, \ldots, w^r)$, the only vertex whose membership in the list of partial solutions defined by this tuple can be nonzero is the vertex $v$ introduced at $t$. Moreover, we can compute the corresponding membership vector $m_w$ by simply applying the membership function. Then $\Pi_{Div}(t)$ consists of all such tuples $w$, together with the diversity value $\ell = f(m_w) + (|V(G)| - 1)f(0^r)$. If $t$ has one child $t_1$, let $\Pi_{Div}(t)$ be the set of tuples $(w^1, \ldots, w^r, \ell)$ such that $\ell$ is the maximal value of $\ell_1$ among all $(w_1^1, \ldots, w_1^r, \ell_1) \in \Pi_{Div}(t_1)$ where each $(w^i, w^i_1) \in Process_{\mathfrak{C}_i, \mathcal{D}}(t)$. If $t$ has children $t_1, t_2$, let $\Pi_{Div}(t)$ be the set of tuples $(w^1, \ldots, w^r, \ell)$ such that
  $\ell$ is the maximal value of $\ell_1 + \ell_2 - |V(G)| f(0^r)$ among all pairs of $(w_1^1, \ldots, w_1^r, \ell_1) \in \Pi_{Div}(t_1), (w_2^1, \ldots, w_2^r, \ell_2) \in \Pi_{Div}(t_2)$ where each $(w^i, w^i_1, w^i_2) \in Process_{\mathfrak{C}_i, \mathcal{D}}(t)$. Note that we are using the monotonicity assumption in this argument. As we are joining pairs of partial solutions belonging to disjoint sets $V(G_{t_1}), V(G_{t_2})$, we can compute the diversity using \cref{lemma}. Observe that for each node $t$ the set $\Pi_{Div}(t)$ can be computed in time $\prod_{i = 1}^{r}|Process_{\mathfrak{C}_i, \mathcal{D}}(t)|$. The last step is checking whether there exists $(w^1, \ldots , w^r, \ell) \in \Pi_{Div}(q)$ such that $\ell \geq d$ and each $w^i \in Accept_{\mathfrak{C}_i, \mathcal{D}}$ for $q$ being the root of $\mathcal{D}$.
This can be done in $\prod_{i = 1}^{r}|Accept_{\mathfrak{C}_i, \mathcal{D}}|$ time.

We prove the correctness by combining the two following lemmas.
  Firstly, %
  we show that for each $S_1, \ldots, S_r$ such that $(G, S_i) \in \mathcal{P}_i$ there exists
  $(w^1, \ldots, w^r, \ell) \in \Pi_{Div}(q)$ for some $\ell \geq Div(S_1, \ldots, S_r)$.
 
  \begin{restatable}{lemma}{reslowerbound} \label{lowerbound}
    Let $r$ be an integer, $\mathcal D$ be a decomposition tree, $t \in V(\mathcal{D})$, $Div$ be a Venn $f$-diversity measure for some $f: \{0, 1\}^r \rightarrow \mathbb{N}$,
    and $w^1, \ldots, w^r \in \{0,1\}^*$.
    If for each $i \in[r]$ there exists a $(\mathfrak{C}_i, \mathcal{D}_t, w^i)$-witness $\alpha_i$, then $(w^1, \ldots, w^r, \ell) \in \Pi_{Div}(t)$ for some \[\ell \geq Div(S_{\rho_1}(\mathcal{D}_t, \alpha_1), \ldots, S_{\rho_r}(\mathcal{D}_t, \alpha_r)).\]
  \end{restatable}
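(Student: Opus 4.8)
The plan is a bottom-up structural induction on the subtree $\mathcal{D}_t$, mirroring node type by node type the inductive definition of $\Pi_{Div}$ given above. The single fact driving the whole argument is \emph{locality}: $S_{\rho_i}(\mathcal{D}_t,\alpha_i)$ depends only on the values $\alpha_i$ assigns to the leaves of $\mathcal{D}_t$ (its $intro$ nodes). Since a $recolor$ or $addEdges$ node introduces no new vertex while a $\oplus$ node partitions the leaf set, restricting a witness to a child subtree restricts the associated partial solution in the obvious leaf-indexed way, and --- crucially --- such a restriction is again a valid witness there, because the $Process$ membership condition at every node of the child subtree is inherited verbatim from $\alpha_i$. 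This last point is what licenses every application of the induction hypothesis, and is the one I would write out most carefully.

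For the base case $t=intro(v,a)$ we have $\delta(t)=0$, so $\alpha_i$ is just $w^i=\alpha_i(t)\in Process_{\mathfrak{C}_i,\mathcal{D}}(t)$; only $v$ can have a nonzero membership entry, with vector $m_w=(\rho_1(w^1),\dots,\rho_r(w^r))$, and $Div(S_{\rho_1}(\mathcal{D}_t,\alpha_1),\dots)=f(m_w)+(|V(G)|-1)f(0^r)$, which is exactly the value $\Pi_{Div}(t)$ attaches to $(w^1,\dots,w^r)$. For a one-child node ($recolor$ or $addEdges$) with child $t_1$, I restrict each $\alpha_i$ to $\mathcal{D}_{t_1}$, note $S_{\rho_i}(\mathcal{D}_t,\alpha_i)=S_{\rho_i}(\mathcal{D}_{t_1},\alpha_i|_{\mathcal{D}_{t_1}})$ because the leaf sets coincide, invoke the induction hypothesis to get $(w^1_1,\dots,w^r_1,\ell_1)\in\Pi_{Div}(t_1)$ with $\ell_1$ at least the target, and read off from the definition of $\Pi_{Div}(t)$ an entry $(w^1,\dots,w^r,\ell)$ with $\ell\ge\ell_1$. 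For $t=t_1\oplus t_2$, I restrict each $\alpha_i$ to both subtrees; using $V(G_{t_1})\cap V(G_{t_2})=\emptyset$ the partial solution $S_{\rho_i}(\mathcal{D}_t,\alpha_i)$ is the disjoint union of the two restricted ones, so \Cref{lemma} (with $S=V(G_{t_1})$, $P=V(G_{t_2})$) writes the target diversity as the sum of the two child diversities minus $|V(G)|f(0^r)$ --- which is exactly the quantity maximized in the definition of $\Pi_{Div}(t)$ --- and the two induction hypotheses close the case.

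The only delicate bookkeeping point is that $\Pi_{Div}$ retains, for each tuple $(w^1,\dots,w^r)$, only the maximum diversity value over all witness tuples realizing it; I would note explicitly that this is harmless for a lower-bound statement, since any feasible combination of child entries yields a value no larger than the retained maximum, so the induction only ever needs the retained entry on each side and the inequalities compose. I do not anticipate a genuine obstacle beyond that: the argument is essentially a careful transcription of the DP's own recursion, with \Cref{lemma} supplying the one nontrivial arithmetic identity at $\oplus$ nodes.
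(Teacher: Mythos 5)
Your plan is correct and follows essentially the same route as the paper's proof: a bottom-up induction on $\mathcal{D}_t$ with the same base case computation at $intro$ nodes, the same observation that restricting a witness to a child subtree yields a valid witness there, and the same application of \cref{lemma} at $\oplus$ nodes, with the inequality (rather than equality) absorbing the fact that $\Pi_{Div}$ retains only the maximum diversity value per tuple. No gaps.
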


  Secondly,  %
  we show that for each $(w^1, \ldots, w^r, \ell) \in \Pi_{Div}(q)$ there exists $S_1, \ldots, S_r$
  such that $(G, S_i) \in \mathcal{P}_i$ and ${Div(S_1, \ldots, S_r) = \ell}$. 

  \begin{restatable}{lemma}{resupperbound} \label{upperbound}
    Let $r$ be an integer, $\mathcal{D}$ be a decomposition tree, $t \in V(\mathcal{D})$, $Div$ be a Venn $f$-diversity measure for some $f: \{0, 1\}^r \rightarrow \mathbb{N}$,
    and $w^1, \ldots, w^r \in \{0,1\}^*$.
    If $(w^1, \ldots, w^r, \ell) \in \Pi_{Div}(t)$, then for each $i \in [r]$
    there exists a $(\mathfrak{C}_i, \mathcal{D}_t, w^i)$-witness $\alpha_i$ such that
    \[Div(S_{\rho_1}(\mathcal{D}_t, \alpha_1), \ldots, S_{\rho_r}(\mathcal{D}_t, \alpha_r)) = \ell.\]
  \end{restatable}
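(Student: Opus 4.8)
The plan is to prove \cref{upperbound} by bottom-up structural induction on the subtree $\mathcal{D}_t$, mirroring exactly the recursive definition of $\Pi_{Div}(t)$ used in the proof of \cref{theorem}. In every case I will start from the witnesses supplied by the induction hypothesis at the children of $t$, build candidate witnesses $\alpha_1,\ldots,\alpha_r$ on $\mathcal{D}_t$ by attaching the value $w^i$ at $t$ and leaving the child witnesses unchanged elsewhere, verify that the attached tuple lies in $Process_{\mathfrak{C}_i, \mathcal{D}}(t)$ (so that $\alpha_i$ is a genuine $(\mathfrak{C}_i, \mathcal{D}_t, w^i)$-witness), and finally compute $Div(S_{\rho_1}(\mathcal{D}_t,\alpha_1),\ldots,S_{\rho_r}(\mathcal{D}_t,\alpha_r))$ and match it against $\ell$. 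Two auxiliary facts are used throughout and are immediate: that $Leaf_\mathcal{D}(v)$ lies inside $\mathcal{D}_t$ whenever $v\in V(G_t)$ (from \cref{def:cw}), and that gluing maps that each satisfy the $Process$ condition on their piece yields a map satisfying it everywhere.

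For the base case $t=intro(v,a)$, the subtree is the single node $t$, and by construction every element of $\Pi_{Div}(t)$ has the form $(w^1,\ldots,w^r,\ell)$ with $w^i\in Process_{\mathfrak{C}_i,\mathcal{D}}(t)$ and $\ell=f(m_w)+(|V(G)|-1)f(0^r)$, where $m_w$ is the membership vector obtained by applying $\rho_i$ to $w^i$. Taking $\alpha_i(t)=w^i$ gives a witness; since $v$ is the only vertex of $G_t$, the set $S_{\rho_i}(\mathcal{D}_t,\alpha_i)$ is $\{v\}$ or $\emptyset$ according to $m_w[i]$, so $v$ has membership vector $m_w$ and every other vertex of $G$ has membership vector $0^r$. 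Summing $f$ over all of $V(G)$ reproduces $\ell$ exactly. For the one-child step with child $t_1$, membership of $(w^1,\ldots,w^r,\ell)$ in $\Pi_{Div}(t)$ gives some $(w_1^1,\ldots,w_1^r,\ell_1)\in\Pi_{Div}(t_1)$ with $(w^i,w_1^i)\in Process_{\mathfrak{C}_i,\mathcal{D}}(t)$ for all $i$ and $\ell=\ell_1$ attained. The induction hypothesis yields witnesses $\beta_i$ on $\mathcal{D}_{t_1}$ with the correct $Div$ value; extending by $\alpha_i(t)=w^i$ gives witnesses on $\mathcal{D}_t$, and because a $recolor$ or $addEdges$ node introduces no vertex and changes no leaf we have $S_{\rho_i}(\mathcal{D}_t,\alpha_i)=S_{\rho_i}(\mathcal{D}_{t_1},\beta_i)$, so the $Div$ value is unchanged and equals $\ell$.

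For the two-child step $t=t_1\oplus t_2$, membership gives compatible tuples at $t_1$ and $t_2$ with diversity values $\ell_1,\ell_2$ satisfying $\ell=\ell_1+\ell_2-|V(G)|f(0^r)$, and all three tuples compatible through $(w^i,w_1^i,w_2^i)\in Process_{\mathfrak{C}_i,\mathcal{D}}(t)$. Applying the induction hypothesis to $t_1$ and $t_2$ gives witnesses $\beta_i$ and $\gamma_i$ on the (disjoint) subtrees $\mathcal{D}_{t_1}$ and $\mathcal{D}_{t_2}$; gluing them and attaching $w^i$ at $t$ produces valid witnesses $\alpha_i$ with $S_{\rho_i}(\mathcal{D}_t,\alpha_i)=S_{\rho_i}(\mathcal{D}_{t_1},\beta_i)\cup S_{\rho_i}(\mathcal{D}_{t_2},\gamma_i)$, a disjoint union since $V(G_{t_1})\cap V(G_{t_2})=\emptyset$. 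Then \cref{lemma}, applied with $S=V(G_{t_1})$ and $P=V(G_{t_2})$, yields $Div$ value $\ell_1+\ell_2-|V(G)|f(0^r)=\ell$, completing the induction.

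I expect the only delicate point — essentially a bookkeeping obstacle rather than a conceptual one — to be keeping the $f(0^r)$ correction terms consistent throughout: the leaf case already charges $(|V(G)|-1)f(0^r)$ for all vertices not yet introduced, each $\oplus$-node then subtracts one surplus copy of $|V(G)|f(0^r)$ via \cref{lemma}, and one must check that this accounting leaves exactly the right accumulated constant so that the recorded $\ell$ always equals the diversity taken over all of $V(G)$, not merely over $V(G_t)$. Once this is pinned down, \cref{upperbound} together with \cref{lowerbound} shows that $\Pi_{Div}(q)$ records precisely the attainable diversity values, which is exactly what the correctness argument for \cref{theorem} requires.
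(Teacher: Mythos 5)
Your proposal is correct and follows essentially the same route as the paper's proof: bottom-up induction on $\mathcal{D}_t$, with the leaf case computed directly from the definition of $\Pi_{Div}$ at an $intro$ node, the unary case passing the diversity value through unchanged, and the binary case gluing the child witnesses and invoking \cref{lemma} to justify $\ell=\ell_1+\ell_2-|V(G)|f(0^r)$. The "delicate point" you flag about the $f(0^r)$ accounting is handled in the paper exactly as you describe, so there is no gap.
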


  We prove \cref{theorem} before proving \cref{lowerbound} and \cref{upperbound}.

\begin{proof}[Proof of \cref{theorem}]
  We will solve $Div^d(\mathcal{P}_1,  \ldots , \mathcal{P}_r)$ by the following steps:
  \begin{enumerate}
    \item
  First,  constructing the data corresponding to each core $\mathfrak{C}_i$.
    This step takes $\sum_{i = 1}^{r} \tau(\mathfrak{C}_i, \mathcal{D})$ time.

  \item
    Second, constructing by induction for each $t \in V(\mathcal{D})$ set $\Pi_{Div}(t)$.
    \begin{itemize}
      \item
      If $t = Leaf_{\mathcal{D}}(v)$,
      \begin{align*}
        \Pi_{Div}(t) = \{(w^1, \ldots , w^r, \ell)\lv{&} : \forall_{i \in [r]} w^i \in Process_{\mathfrak{C}_i, \mathcal{D}}(t),
        \\&\ell = f(\rho_1(w^1), \ldots, \rho_r(w^r)) + (|V(G)| - 1)f(0^r)
        \}.
      \end{align*}
      This can be computed in time $\prod_{i = 1}^r |Process_{\mathfrak{C}_i, \mathcal{D}}(t)|$.
     \item
     If a node $t$ has children $t_1, \ldots, t_{\delta(t)}$.
      \begin{itemize}
        \item If $\delta(t) = 1$,
        \begin{align*}
          \Pi_{Div}(t) = \{(w^1, \ldots , w^r, \ell) :
          \exists_{(w^1_1, \ldots w^r_1, \ell_1) \in \Pi_{Div}(t_1)} 
          \ s.t. \ \forall_{i \in [r]} (w^i, w^i_1) \in Process_{\mathfrak{C}_i, \mathcal{D}}(t), \\
           \ell = max\{\ell_1 : \exists_{(w^1_1, \ldots w^r_1, \ell_1) \in \Pi_{Div}(t_1)} 
          \ s.t. \ \forall_{i \in [r]} (w^i, w^i_1) \in Process_{\mathfrak{C}_i, \mathcal{D}}(t)\}\}.
        \end{align*}
        \item If $\delta(t) = 2$,
        \begin{align*}
          \Pi_{Div}(t) = \{&(w^1, \ldots , w^r, \ell) : 
          \exists_{(w^1_1, \ldots w^r_1, \ell_1) \in \Pi_{Div}(t_1), (w^1_{2}, \ldots w^r_{2}, \ell_{2}) \in \Pi_{Div}(t_{2})} \\
          & s.t. \ \forall_{i \in [r]} (w^i, w^i_1, w^i_{2}) \in Process_{\mathfrak{C}_i, \mathcal{D}}(t), 
           \ell = max\{\ell_1 + \ell_{2} - |V(G)| \cdot f(0^r) : \\
          & \exists_{(w^1_1, \ldots w^r_1, \ell_1) \in \Pi_{Div}(t_1), (w^1_{2}, \ldots w^r_{2}, \ell_{2}) \in \Pi_{Div}(t_{2})} 
           \ s.t. \ \forall_{i \in [r]} (w^i, w^i_1, w^i_{2}) \in Process_{\mathfrak{C}_i, \mathcal{D}}(t)\}
          \}.
        \end{align*}
      \end{itemize}
      This can also be computed in time $\prod_{i = 1}^r |Process_{\mathfrak{C}_i, \mathcal{D}}(t)|$ by iterating through all tuples
      \[((w^1, w^1_1, \ldots, w^1_{\delta(t)}), \ldots, (w^r, w^r_1, \ldots, w^r_{\delta(t)})) \in Process_{\mathfrak{C}_1, \mathcal{D}} \times \cdots \times Process_{\mathfrak{C}_r, \mathcal{D}}\]
      and checking whether for each $j \in [\delta(t)]$ there exists $(w^1_j, \ldots w^r_j, \ell_j) \in \Pi_{Div}(t_j)$ for some $\ell_j \in \mathbb{N}$.
    \end{itemize}
    This step takes $\sum_{t \in V(\mathcal{D})} \prod_{i = 1}^{r}|Process_{\mathfrak{C}_i, \mathcal{D}}(t)| = \mathcal{O}(|V(\mathcal{D})| \cdot \prod_{i = 1}^{r} Size_{\mathfrak{C}_i, \mathcal{D}})$ time.\\
    Observe that $(w^1, \ldots, w^r, \ell) \in \Pi_{Div}(t)$ if and only if
    $w^i \in \Pi_i(t)$ for each $i \in [r]$, where $\Pi_i(t)$ is the set of $w \in \{0, 1\}^*$ such that a $(\mathfrak{C}_i, \mathcal{D}_t, w)$-witness exists
    (see the proof of \cref{solves}).
    
    \item
    The last step is checking whether there exists $(w^1, \ldots , w^r, \ell) \in \Pi_{Div}(q)$,
    such that $\ell \geq d$ and $w^i \in Accept_{\mathfrak{C}_i, \mathcal{D}}$ for each $i \in [r]$, where $q$ is the root of $\mathcal{D}$,
    which takes $\mathcal{O}(\prod_{i = 1}^{r} Size_{\mathfrak{C}_i, \mathcal{D}})$ time.
  \end{enumerate}
  
  It remains to prove the correctness.
  Firstly, by \cref{lowerbound}, we have that for each $S_1, \ldots, S_r$ such that $(G, S_i) \in \mathcal{P}_i$ there exists
  $(w^1, \ldots, w^r, \ell) \in \Pi_{Div}(q)$ for some $\ell \geq Div(S_1, \ldots, S_r)$.
  Secondly, by \cref{upperbound}, we have that for each $(w^1, \ldots, w^r, \ell) \in \Pi_{Div}(q)$ there exists $S_1, \ldots, S_r$
  such that $(G, S_i) \in \mathcal{P}_i$ and ${Div(S_1, \ldots, S_r) = \ell}$.
  As a result, we get the following corollary.
  \begin{corollary}
    $(G, S_1, \ldots, S_r) \in Div^d(\mathcal{P}_1, \ldots, \mathcal{P}_r)$ if and only if there exists
    $(w_1,\ldots, w_r, \ell) \in \Pi_{Div}(q)$ such that ${\ell \geq d}$ and $w^i \in Accept_{\mathfrak{C}_i, \mathcal{D}}$ for each $i \in [r]$,
    where $q$ is the root of $\mathcal{D}$.
  \end{corollary}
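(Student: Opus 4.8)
The plan is to obtain the corollary directly from \cref{lowerbound}, \cref{upperbound}, and the monotonicity hypothesis on the cores $(\mathfrak{C}_i,\rho_i)$, all specialised to the root $q$ of $\mathcal{D}$ (so that $\mathcal{D}_q=\mathcal{D}$ and $G_q=G$). I would prove the two implications separately; each is short once the two lemmas are available, so the genuine content lies in those lemmas, which I comment on at the end. Throughout I unfold the definition $Div^d(\mathcal{P}_1,\dots,\mathcal{P}_r)=\{(G,S_1,\dots,S_r):(G,S_i)\in\mathcal{P}_i,\ Div(S_1,\dots,S_r)\ge d\}$.

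For the forward direction, suppose $(G,S_1,\dots,S_r)\in Div^d(\mathcal{P}_1,\dots,\mathcal{P}_r)$, i.e.\ $(G,S_i)\in\mathcal{P}_i$ for all $i$ and $Div(S_1,\dots,S_r)\ge d$. Since $(\mathfrak{C}_i,\rho_i)$ solves $\mathcal{P}_i$ (\cref{def:monotone}), for each $i$ there is a $(\mathfrak{C}_i,\mathcal{D})$-witness $\alpha_i$ with $S_i=S_{\rho_i}(\mathcal{D},\alpha_i)$; writing $w^i=\alpha_i(q)$ we have $w^i\in Accept_{\mathfrak{C}_i,\mathcal{D}}$, and $\alpha_i$ is in particular a $(\mathfrak{C}_i,\mathcal{D}_q,w^i)$-witness. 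Applying \cref{lowerbound} at $t=q$ then yields some $(w^1,\dots,w^r,\ell)\in\Pi_{Div}(q)$ with $\ell\ge Div(S_{\rho_1}(\mathcal{D},\alpha_1),\dots,S_{\rho_r}(\mathcal{D},\alpha_r))=Div(S_1,\dots,S_r)\ge d$; together with $w^i\in Accept_{\mathfrak{C}_i,\mathcal{D}}$ this is exactly the tuple demanded on the right-hand side.

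For the backward direction, suppose $(w^1,\dots,w^r,\ell)\in\Pi_{Div}(q)$ with $\ell\ge d$ and $w^i\in Accept_{\mathfrak{C}_i,\mathcal{D}}$ for every $i$. By \cref{upperbound} at $t=q$ there are $(\mathfrak{C}_i,\mathcal{D}_q,w^i)$-witnesses $\alpha_i$ with $Div(S_{\rho_1}(\mathcal{D},\alpha_1),\dots,S_{\rho_r}(\mathcal{D},\alpha_r))=\ell$; since $w^i=\alpha_i(q)\in Accept_{\mathfrak{C}_i,\mathcal{D}}$, each $\alpha_i$ is actually a $(\mathfrak{C}_i,\mathcal{D})$-witness. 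Put $S_i:=S_{\rho_i}(\mathcal{D},\alpha_i)$. Monotonicity (\cref{def:monotone}) gives $(G,S_i)\in\mathcal{P}_i$ for all $i$, and $Div(S_1,\dots,S_r)=\ell\ge d$, hence $(G,S_1,\dots,S_r)\in Div^d(\mathcal{P}_1,\dots,\mathcal{P}_r)$. As remarked after \cref{solves}, the $\alpha_i$ (and thus the $S_i$) can be read off the table $\Pi_{Div}$, so the equivalence also hands us an explicit witnessing diverse collection rather than only a yes/no answer.

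The real obstacle is not the corollary but the two lemmas it invokes, which I would establish by simultaneous induction on the nodes of $\mathcal{D}$. The base case is an $intro(v,a)$ node, where by construction only $v$ can have a nonzero membership vector, so the stored value $f(m_w)+(|V(G)|-1)f(0^r)$ equals $Div$ of the partial solutions exactly; the $recolor$ and $addEdges$ cases are immediate since they leave $Leaf_{\mathcal{D}}(\cdot)$ unchanged and only restrict which children tuples propagate; the $\oplus$ case carries the weight — one uses $V(G_{t_1})\cap V(G_{t_2})=\emptyset$ together with \cref{lemma} to see that the diversity of a combined tuple splits as $\ell_1+\ell_2-|V(G)|f(0^r)$, and then checks that retaining only the $\ell$-maximal tuple per $(w^1,\dots,w^r)$ is harmless. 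This last point is precisely why \cref{lowerbound} only asserts $\ell\ge Div(\dots)$ (a competing, higher-valued combination of sub-witnesses may dominate the one we started from), whereas in \cref{upperbound} the stored $\ell$ is attained exactly, since the maximum defining $\Pi_{Div}(t)$ is realised by some concrete pair of sub-witnesses that we can concatenate. I would also verify the mild side condition that the global constant $|V(G)|f(0^r)$ is available to the algorithm, which the construction handles by already folding $(|V(G)|-1)f(0^r)$ into the leaf values.
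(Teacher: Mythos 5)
Your proposal is correct and follows essentially the same route as the paper: both directions are obtained by specializing \cref{lowerbound} and \cref{upperbound} to the root $q$ and combining them with the monotonicity of the cores (to pass between accepting witnesses and actual solutions $S_i=S_{\rho_i}(\mathcal{D},\alpha_i)$) and with the membership $w^i\in Accept_{\mathfrak{C}_i,\mathcal{D}}$. Your additional remarks on how the two lemmas themselves are proved (induction on $\mathcal{D}$, the $\oplus$ case via \cref{lemma}, and why one lemma gives $\ell\ge Div(\dots)$ while the other gives equality) also match the paper's argument.
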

  This finishes the proof of \Cref{theorem}.
\end{proof}

  \begin{proof}[Proof of \cref{lowerbound}]
  We use induction bottom-up on the decomposition tree $\mathcal D$.
    \paragraph{\textbf{Node $\mathbf{t = Leaf_{\mathcal{D}}(v)}$.}} %
      Assume that %
      there exists a $(\mathfrak{C}_i, \mathcal{D}_t, w^i)$-witness $\alpha_i$
      for each $i \in [r]$.
      Then by the definition of witness, $\alpha_i(t) = w^i$ and $\alpha_i(t) \in Process_{\mathfrak{C}_i, \mathcal{D}}(t)$, so $w_i \in Process_{\mathfrak{C}_i, \mathcal{D}}(t)$ for each $i \in[r]$.
      Then by the definition of $\Pi_{Div}$, $(w^1, \ldots, w^r, \ell) \in \Pi_{Div}(t)$, where $\ell = f(\rho_1(w^1), \ldots, \rho_r(w^r)) + (|V(G)| - 1)f(0^r)$.
      Observe that
      \begin{equation*}
        S_{\rho_i}(\mathcal{D}_t, \alpha_i) = \begin{cases}
          \{v\}  & \text{if $\rho_i(w^i) = 1$}, \\
          \emptyset & \text{otherwise}.
        \end{cases}
      \end{equation*}
      Then for each $u \in V(G)$,
      \begin{align*}
        m(S_{\rho_1}(\mathcal{D}_t, \alpha_1), \ldots, S_{\rho_r}(\mathcal{D}_t, \alpha_r), u)  = \begin{cases}
          (\rho_1(w^1), \ldots, \rho_r(w^r))  & \text{if $u = v$}, \\
          0^r & \text{otherwise}.
        \end{cases}
      \end{align*}
      Therefore, we get
      \begin{align*}
        Div(S_{\rho_1}(\mathcal{D}_t, \alpha_1), \ldots, S_{\rho_r}(\mathcal{D}_t, \alpha_r)) &= \sum_{u \in V(G)} f(m(S_{\rho_1}(\mathcal{D}_t, \alpha_1), \ldots, S_{\rho_r}(\mathcal{D}_t, \alpha_r), u)) \\
        &= f(\rho_1(w^1), \ldots, \rho_r(w^r)) + (|V(G)| - 1)f(0^r) = \ell.
      \end{align*}
      \noindent\textbf{Node $\mathbf{t}$ has children $\mathbf{t_1, \ldots t_{\delta(t)}}$.} %
      Assume that 
      for each $i \in [r]$ there exists a $(\mathfrak{C}_i, \mathcal{D}_t, w^i)$-witness $\alpha_i$ and let $w^i = \alpha_i(t)$.
      Let $t_j$ be a child of $t$. For $i \in [r]$ let $w_j^i \coloneq \alpha_i(t_j)$.
      Observe that $\alpha_i^j \coloneq \alpha_i |_{V(\mathcal{D}_{t_j})}$ is a $(\mathfrak{C}_i, \mathcal{D}_{t_j}, w^i_j)$-witness.
      Then, by the induction assumption, $(w^1_j, \ldots, w^r_j, l_j) \in \Pi_{Div}(t_j)$ for some $l_j \geq Div(S_{\rho_1}(\mathcal{D}_{t_j}, \alpha_1^j), \ldots, S_{\rho_r}(\mathcal{D}_{t_j}, \alpha_r^j))$.
      Also note that for each $i \in [r]$, $(w^i, w^i_1, \ldots, w^i_{\delta(t)}) \in Process_{\mathfrak{C}_i, \mathcal{D}}(t)$, as $\alpha_i$ is a $(\mathfrak{C}_i, \mathcal{D}_t, w^i)$-witness.

      If $\delta(t) = 1$, by the definition of $\Pi_{Div}$,
      ${(w^1, \ldots, w^r, \ell) \in \Pi_{Div}(t)}$ for some $\ell \geq \ell_1$.
      As for each $i \in [r]$ $S_{\rho_i}(\mathcal{D}_t, \alpha_i) = S_{\rho_i}(\mathcal{D}_{t_1}, \alpha_i^1)$, we get
      \begin{align*}
        Div(S_{\rho_1}(\mathcal{D}_t, \alpha_1), \ldots, S_{\rho_r}(\mathcal{D}_t, \alpha_r))
         = Div(S_{\rho_1}(\mathcal{D}_{t_1}, \alpha_1^1), \ldots, S_{\rho_r}(\mathcal{D}_{t_1}, \alpha_r^1))
        \leq \ell_1 \leq \ell.
      \end{align*}
      If $\delta(t) = 2$, by the definition of $\Pi_{Div}$, ${(w^1, \ldots, w^r, \ell) \in \Pi_{Div}(t)}$ for some $\ell \geq \ell_1 + \ell_2 - |V(G)|f(0^r)$.
      Observe that for each $i \in[r]$,
 $       S_{\rho_i}(\mathcal{D}_t, \alpha_i) = S_{\rho_i}(\mathcal{D}_{t_1}, \alpha_i^1) \cup S_{\rho_i}(\mathcal{D}_{t_2}, \alpha_i^2),$
      where $S_{\rho_1}(\mathcal{D}_{t_1}, \alpha_1^1), \ldots, S_{\rho_r}(\mathcal{D}_{t_1}, \alpha_r^1) \subseteq V(G_{t_1})$,
      $S_{\rho_1}(\mathcal{D}_{t_2}, \alpha_1^2), \ldots, S_{\rho_r}(\mathcal{D}_{t_2}, \alpha_r^2) \subseteq V(G_{t_2})$ and
      $V(G_{t_1}) \cap V(G_{t_2}) = \emptyset$. So by \cref{lemma}
      \begin{align*}
        &Div(S_{\rho_1}(\mathcal{D}_t, \alpha_1), \ldots, S_{\rho_r}(\mathcal{D}_t, \alpha_r)) \\
        &= Div(S_{\rho_1}(\mathcal{D}_{t_1}, \alpha_1^1), \ldots, S_{\rho_r}(\mathcal{D}_{t_1}, \alpha_r^1)) 
        + Div(S_{\rho_1}(\mathcal{D}_{t_2}, \alpha_1^2), \ldots, S_{\rho_r}(\mathcal{D}_{t_2}, \alpha_r^2)) - |V(G)|f(0^r) \\ 
        &\leq \ell_1 + \ell_2 - |V(G)|f(0^r) \leq \ell.
      \end{align*}
      It follows that $(w^1, \ldots, w^r, \ell) \in \Pi_{Div}(t)$ for some ${\ell \geq Div(S_{\rho_1}(\mathcal{D}_t, \alpha_1), \ldots, S_{\rho_r}(\mathcal{D}_t, \alpha_r))}$.
  \end{proof}

  \begin{proof}[Proof of \cref{upperbound}]
    Again we use induction bottom-up on the decomposition tree.
    \paragraph{\textbf{Node $\mathbf{t = Leaf_{\mathcal{D}}(v)}$.}}
      Assume $(w^1, \ldots, w^r, \ell) \in \Pi_{Div}(t)$.
      By the definition of $\Pi_{Div}$,
      \[\ell = f(\rho_1(w^1), \ldots, \rho_r(w^r)) + (|V(G)| - 1)f(0^r).\]
      Observe that for each $i \in [r]$ there exists a $(\mathfrak{C}_i, \mathcal{D}_t, w^i)$-witness $\alpha_i$ which assigns $w^i$ to $t$.
      Then
      \begin{equation*}
        S_{\rho_i}(\mathcal{D}_t, \alpha_i) = \begin{cases}
          \{v\}  & \text{if $\rho_i(w^i) = 1$,} \\
          \emptyset & \text{otherwise.}
        \end{cases}
      \end{equation*}
      Then for each $u \in V(G)$
      \begin{align*}
        m(S_{\rho_1}(\mathcal{D}_t, \alpha_1), \ldots, S_{\rho_r}(\mathcal{D}_t, \alpha_r), u)  = \begin{cases}
          (\rho_1(w^1), \ldots, \rho_r(w^r))  & \text{if $u = v$}, \\
          0^r & \text{otherwise}.
        \end{cases}
      \end{align*}
      Therefore, we get
      \begin{align*}
      Div(S_{\rho_1}(\mathcal{D}_t, \alpha_1), \ldots, S_{\rho_r}(\mathcal{D}_t, \alpha_r)) 
      &=\sum_{u \in V(G)} f(m(S_{\rho_1}(\mathcal{D}_t, \alpha_1), \ldots, S_{\rho_r}(\mathcal{D}_t, \alpha_r), u)) \\
      &= f(\rho_1(w^1), \ldots, \rho_r(w^r)) + (|V(G)| - 1)f(0^r) = \ell.
      \end{align*}
      \noindent\textbf{Node $\mathbf{t}$ has children $\mathbf{t_1, \ldots t_{\delta(t)}}$.}
      By the definition of $\Pi_{Div}$, if $(w^1, w^2, \ldots , w^r, \ell) \in \Pi_{Div}(t)$,
      there exists
      $(w^1_1, \ldots w^r_1, \ell_1) \in \Pi_{Div}(t_1), \ldots, (w^1_{\delta(t)}, \ldots w^r_{\delta(t)}, \ell_{\delta(t)}) \in \Pi_{Div}(t_{\delta(t)})$
      such that for each $i \in [r]$, $(w^i, w^i_1, \ldots, w^i_{\delta(t)})\allowbreak \in Process_{\mathfrak{C}_i, \mathcal{D}}$ and $\ell = \ell_1$ if $\delta(t) = 1$, $\ell = {\ell_1 + \ell_2 - |V(G)|f(0^r)}$ if $\delta(t) = 2$.
      Let $t_j$ be a child of $t$. By the induction assumption,
      if $(w^1_j, \ldots, w^r_j, \ell_j) \in \Pi_{Div}(t_j)$, then for each $i \in [r]$
      there exists a $(\mathfrak{C}_i, \mathcal{D}_{t_j}, w^i_j)$-witness $\alpha_i^j$ such that
      $Div(S_{\rho_1}(\mathcal{D}_{t_j}, \alpha_1^j), \allowbreak \ldots, S_{\rho_r}(\mathcal{D}_{t_j}, \alpha_r^j)) = \ell_j.$      
      For each $i \in [r], \ t' \in V(\mathcal{D}_t)$, set
      \[
        \alpha_i(t') \coloneq \begin{cases}
          \alpha_i^j(t')  & \text{if $t' \in V(\mathcal{D}_{t_j}), \ j \in \delta(t)$,} \\
          w^i & \text{if $t' = t$.}
        \end{cases}  
      \]
      Observe that $\alpha_i$ is a properly defined $(\mathfrak{C}_i, \mathcal{D}_t, w^i)$-witness. It remains to show that \[{Div(S_{\rho_1}(\mathcal{D}_t, \alpha_1), \ldots, S_{\rho_r}(\mathcal{D}_t, \alpha_r)) = \ell}.\]
      If $\delta(t) = 1$, observe that for each $i \in [r]$, $S_{\rho_i}(\mathcal{D}_t, \alpha_i) = S_{\rho_i}(\mathcal{D}_{t_1}, \alpha_i^1)$.
      So, we get
      \begin{align*}
        Div(S_{\rho_1}(\mathcal{D}_t, \alpha_1), \ldots, S_{\rho_r}(\mathcal{D}_t, \alpha_r))
        = Div(S_{\rho_1}(\mathcal{D}_{t_1}, \alpha_1^1), \ldots, S_{\rho_r}(\mathcal{D}_{t_1}, \alpha_r^1)) =
        \ell_1 = \ell.
      \end{align*}
      If $\delta(t) = 2$, observe that for each $i \in [r]$,
      $S_{\rho_i}(\mathcal{D}_t, \alpha_i) = S_{\rho_i}(\mathcal{D}_{t_1}, \alpha_i^1) \cup S_{\rho_i}(\mathcal{D}_{t_2}, \alpha_i^2)$,
      where $S_{\rho_1}(\mathcal{D}_{t_1}, \alpha_1^1), \ldots, \allowbreak S_{\rho_r}(\mathcal{D}_{t_1}, \alpha_r^1) \subseteq V(G_{t_1})$,
      $S_{\rho_1}(\mathcal{D}_{t_2}, \alpha_1^2), \ldots, S_{\rho_r}(\mathcal{D}_{t_2}, \alpha_r^2) \subseteq V(G_{t_2})$ and
      ${V(G_{t_1}) \cap V(G_{t_2}) = \emptyset}$. So by \cref{lemma},
      \begin{align*}
        &Div(S_{\rho_1}(\mathcal{D}_t, \alpha_1), \ldots, S_{\rho_r}(\mathcal{D}_t, \alpha_r)) \\
        &= Div(S_{\rho_1}(\mathcal{D}_{t_1}, \alpha_1^1), \ldots, S_{\rho_r}(\mathcal{D}_{t_1}, \alpha_r^1)) +
         Div(S_{\rho_1}(\mathcal{D}_{t_2}, \alpha_1^2), \ldots, S_{\rho_r}(\mathcal{D}_{t_2}, \alpha_r^2)) - |V(G)| f(0^r) \\
        &= \ell_1 + \ell_2 - |V(G)| f(0^r) = \ell.
        \qedhere
      \end{align*}
  \end{proof}

  \medskip

We would like to briefly explain at this point why the factor $d^a$ is stated in the running time of \cref{twThm}.
Integer $a$ stands for the maximum number of children of a node in the given tree decomposition.
The reason why this improvement is possible is that during the DP composition, we need to remember for each entry only the best value of diversity and not all $d$ of them, similarly to our proof of \cref{mainThm}.
Such a change requires only a single computation for each table entry compared to the $d^a$ factor used originally.

\section{Example: The Diverse Vertex Cover Problem}\label{sec:VC}

In the \textsc{Vertex Cover} problem given a graph $G$ and integer $k$, one asks whether there exists $S \subseteq V(G)$
of size at most $k$ such that $S$ touches all the edges of $G$.
Formally, a k-\textsc{Vertex Cover} is a set of pairs $(G, S)$ where $S \subseteq V(G)$
such that $|S| \leq k$ and for each $uv \in E(G)$ $u \in S$ or $v \in S$. %
In the diverse version of this problem, given a graph $G$ and integers $k, r, d$, one asks whether there exist $r$ vertex covers in $G$, each of size at most $k$, such that their diversity is at least $d$. 

We start by showing a dynamic program that, given a cliquewidth decomposition $\mathcal{D}$ of a graph $G$ of width $\omega$,
decides whether there exists a k-\textsc{Vertex Cover} of $G$ in time $f(\omega, k) \cdot |V(\mathcal{D})|$.

\subsection{Dynamic Programming for Vertex Cover}\label{sec:monotoneVC}
Let $\mathcal{D}$ be a cliquewidth decomposition of $G$ of width $\omega$. For $t \in V(\mathcal{D})$, let
$\mathcal{F}_t$ be a set of tuples $(c_1, \ldots, c_\omega)$ such that in $G_t$ there exist a vertex cover $S$ of size at most
$k$ that uses $c_i$ vertices of color $i$ for each $i \in [\omega]$.
Note that $|\mathcal{F}_t| \leq (k + 1)^\omega$. %
Our dynamic programming constructs for each node $t$ set $\mathcal{F}_t$ as follows:
\begin{enumerate}
  \item $t = intro(v, a)$ \\
  Let \lv{$C_{\{v\}} = (c_1, \ldots, c_\omega)$, where 
  $c_i = \begin{cases}
    1 & \text{if $i = a$},\\
    0 & \text{otherwise},
   \end{cases}$} \sv{$C_{\{v\}}$}
  be a tuple that corresponds to a vertex cover $\{v\}$ and \lv{$C_{\emptyset} = (c_1, \ldots, c_\omega)$, where $c_i = 0$ for each $i \in [\omega]$,} \sv{$C_{\emptyset}$}
  be a tuple that corresponds to a vertex cover $\emptyset$. Then
  \begin{equation*}
  \mathcal{F}_t = \{C_{\{v\}}, C_{\emptyset}\}  
  \end{equation*}%
  and $\mathcal{F}_t$ can be computed in time $2\omega$.
  \item $t = t_1 \oplus t_2$ \\
  Observe that a set $S$ is a vertex cover of $G_t$ if and only if it is a union of vertex covers $S_1$ of $G_{t_1}$ and $S_2$ of $G_{t_2}$. \\
  Denote $(x_1, \ldots, x_\omega) + (y_1, \ldots, y_\omega) = (x_1 + y_1, \ldots, x_\omega + y_\omega)$ and $|(x_1, \ldots, x_\omega)| = x_1 + \cdots + x_\omega$. Then
  \begin{equation*}
    \mathcal{F}_t = \{C_1 + C_2 : C_1 \in \mathcal{F}_{t_1}, C_2 \in \mathcal{F}_{t_2}, |C_1 + C_2| \leq k\}  
  \end{equation*}
  and
  $\mathcal{F}_t$ can be computed in time $(k + 1)^{2\omega} \cdot \omega$.
  \item $t = recolor(t', a \rightarrow b)$ \\
  Denote $\rho_{a \rightarrow b}(c_1', \ldots, c_\omega') = (c_1, \ldots, c_\omega)$ where
  $c_i = \begin{cases}
    0 & \text{if $i = a$},\\
    c_a' + c_b' & \text{if $i = b$},\\
    c_i' & \text{otherwise}.
   \end{cases}$ \\
  Then
  \begin{equation*}
    \mathcal{F}_t = \{\rho_{a \rightarrow b}(C') : C' \in \mathcal{F}_{t'}\}
  \end{equation*}
  and
  $\mathcal{F}_t$ can be computed in time $(k + 1)^{\omega} \cdot 2$.
  \item $t = addEdges(t', a, b)$ \\
  Observe that a set $S$ is a vertex cover of $G_t$ if and only if it is a vertex cover of $G_{t'}$ and contains
  all vertices of color $a$ or all vertices of color $b$.
  Let $n_i$ be the number of vertices of color $i$ in $G_t$.
  We can compute it by induction for each $i \in [\omega]$ for each $t \in V(\mathcal{D})$ in time $|V(\mathcal{D})|$.
  Then
\begin{equation*}
    \mathcal{F}_t = \{(c_1, \ldots, c_\omega) \in \mathcal{F}_{t'} | c_a = n_a \lor c_b = n_b\}
\end{equation*}
  and $\mathcal{F}_t$ can be computed in time $(k + 1)^{\omega} \cdot 2$.
\end{enumerate}
\begin{theorem}
  Given a cliquewidth decomposition $\mathcal{D}$ of $G$ of width $\omega$ and an integer $k > 0$ we can determine whether
  $G$ admits a k-\textsc{Vertex Cover} in time
  \begin{equation*}
   \mathcal{O}(|V(\mathcal{D})| \cdot (k + 1)^{2\omega} \cdot \omega).
  \end{equation*}
\end{theorem}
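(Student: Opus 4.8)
The statement is a direct consequence of the correctness and the per-node running time of the dynamic program just described, so the plan is to verify both. First I would fix the inductive invariant: for every $t \in V(\mathcal{D})$, writing $V_i$ for the set of vertices of color $i$ in $G_t$, the set $\mathcal{F}_t$ produced by the algorithm equals exactly $\{(c_1,\dots,c_\omega) : G_t \text{ has a vertex cover } S \text{ with } |S|\le k \text{ and } |S\cap V_i|=c_i \text{ for all } i\in[\omega]\}$. I would prove this by bottom-up induction on $\mathcal{D}$, one case per node type. The leaf case $t = intro(v,a)$ is immediate, since the only covers of a single vertex are $\emptyset$ and $\{v\}$, yielding the two tuples $C_\emptyset$ and $C_{\{v\}}$. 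For $t = recolor(t', a\to b)$ the underlying graph and hence its covers are unchanged; only the color classes $V_a, V_b$ are merged, which is precisely what $\rho_{a\to b}$ records, so the invariant transfers.

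The interesting cases are $\oplus$ and $addEdges$. For $t = t_1 \oplus t_2$ the key point is that $G_t$ has no edge between $V(G_{t_1})$ and $V(G_{t_2})$, hence $S$ is a vertex cover of $G_t$ if and only if $S\cap V(G_{t_1})$ covers $G_{t_1}$ and $S\cap V(G_{t_2})$ covers $G_{t_2}$; color counts then add coordinatewise and the condition $|C_1+C_2|\le k$ is exactly $|S|\le k$, so the invariant follows from the induction hypothesis. For $t = addEdges(t',a,b)$, a set $S$ covers $G_t$ iff it covers $G_{t'}$ and additionally covers every newly added edge, i.e.\ every pair with one endpoint of color $a$ and the other of color $b$; since all such pairs are present, this is equivalent to $V_a\subseteq S$ or $V_b\subseteq S$ (with the convention that an empty color class is trivially contained in $S$), which the filter $c_a = n_a \lor c_b = n_b$ encodes. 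Here I would also observe that the auxiliary counts $n_i$, the number of vertices of color $i$ in $G_t$, are well defined and can be precomputed for all nodes and colors by an analogous bottom-up pass in total time $\mathcal{O}(|V(\mathcal{D})|\cdot\omega)$. Granting the invariant, correctness of the decision procedure is immediate: for the root $q$ we have $G_q \cong G$, so $G$ admits a $k$-\textsc{Vertex Cover} iff $\mathcal{F}_q \neq \emptyset$. (One could instead cast the construction as a dynamic programming core and invoke \Cref{solves}, but the direct argument is shorter here.)

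For the running time I would use the uniform bound $|\mathcal{F}_t|\le (k+1)^\omega$, valid since each coordinate lies in $\{0,\dots,k\}$. A leaf costs $\mathcal{O}(\omega)$; a $recolor$ or $addEdges$ node costs $\mathcal{O}((k+1)^\omega\cdot\omega)$; and a $\oplus$ node costs $\mathcal{O}((k+1)^{2\omega}\cdot\omega)$ by iterating over all pairs $(C_1,C_2)\in\mathcal{F}_{t_1}\times\mathcal{F}_{t_2}$ and forming the sum in $\mathcal{O}(\omega)$ time. The $\oplus$ node dominates, so each node is processed in $\mathcal{O}((k+1)^{2\omega}\cdot\omega)$; adding the $\mathcal{O}(|V(\mathcal{D})|\cdot\omega)$ spent on the $n_i$'s and summing over the $|V(\mathcal{D})|$ nodes yields the claimed $\mathcal{O}(|V(\mathcal{D})|\cdot(k+1)^{2\omega}\cdot\omega)$.

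I expect the main obstacle to be the $addEdges$ case rather than the timing: one must argue carefully that covering all color-$a$–color-$b$ edges is equivalent to $V_a\subseteq S \lor V_b\subseteq S$, treat the degenerate case where one of the two color classes is empty (so the corresponding disjunct holds vacuously and the filter $c_a=n_a$ is $0=0$), and make sure the thresholds $n_i$ are the counts in $G_t$ against which $\mathcal{F}_{t'}$ is filtered. The union case, though conceptually easy, also merits an explicit word on why intersecting a global cover with a side yields a cover of that side — which is exactly where the absence of cross edges is used.
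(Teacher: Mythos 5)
Your proposal is correct and follows essentially the same route as the paper: the paper's proof is just a terse collection of the per-node costs already stated alongside the DP transitions, together with the observation that $G$ has a $k$-\textsc{Vertex Cover} iff $\mathcal{F}_q\neq\emptyset$ at the root, while you additionally spell out the inductive invariant and the case analysis (notably the $addEdges$ equivalence and the precomputation of the $n_i$'s) that the paper leaves implicit in its "Observe that..." remarks. No gap; your version is simply more explicit.
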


\begin{proof}
  For each node $t \in V(\mathcal{D})$, set $\mathcal{F}_t$ can be computed in time
  \begin{equation*}
    max(2\omega, (k + 1)^{2\omega} \cdot \omega, (k + 1)^{\omega} \cdot 2).
  \end{equation*}
  By the definition of $\mathcal{F}_t$, $G$ has a k-\textsc{Vertex Cover} if and only if
  $\mathcal{F}_q$ is not empty, where $q$ is the root of~$\mathcal{D}$.
\end{proof}

\subsection{From Vertex Cover to Diverse Vertex Cover}

Now we will show how the above approach can be generalized to solve the diverse version of k-\textsc{Vertex Cover} on $G$ in time $f(\omega, k, r) \cdot |V(\mathcal{D})|$, where $r$ is the number of solutions. 
Note that the complexity does not depend on the aimed diversity value $d$.

We define dynamic programming entries representing the tuples of partial solutions
for each node $t$ of the decomposition.
Intuitively, these entries should be combinations of entries from $\mathcal{F}_t$.
For $t \in V(\mathcal{D})$ let $\mathcal{R}_t$ be a set of tuples
\begin{equation*}
  ((c_1^1, \ldots, c_\omega^1), (c_1^2, \ldots, c_\omega^2), \ldots, (c_1^r,\ldots, c_\omega^r), \ell)
\end{equation*}
such that there exist $r$ vertex covers $S_1,\ldots, S_r$ in $G_t$, each of size at most $k$, such that for each $j \in [r], \ i \in [\omega]$
$S_j$ uses $c^j_i$ vertices of color $i$ and
\begin{equation*}
  \ell = max\{Div(S_1, \ldots, S_r) : \forall_{j \in [r]} \forall_{i \in [\omega]} \ S_j \textrm{ is a vertex cover of } G_t \textrm{ that uses } c_i^j \textrm{ vertices of color } i\}.
\end{equation*}
Note that $|\mathcal{R}_t| \leq (k + 1)^{r\omega}$.\\
Our dynamic programming constructs for each node $t$ set $\mathcal{R}_t$ as follows:
\begin{enumerate}
  \item $t = intro(v, a)$
  \begin{equation*}
    \mathcal{R}_t = \{(C^1, \ldots, C^r, \ell) : \forall_{j \in [r]} \ C^j \in \{C_{\{v\}}, C_{\emptyset}\}, \ell = f(m(C^1, \ldots, C^r, v))\},
  \end{equation*}
  where $m(C^1, \ldots, C^r, v) \in \{0, 1\}^r$,
  $m(C^1, \ldots, C^r, v)[j] = \begin{cases}
      1 & \text{if $C^j = C_{\{v\}}$},\\
      0 & \text{otherwise}.
   \end{cases}$ \\
  $\mathcal{R}_t$ can be computed in time $2^r \cdot r\omega$.
  \item $t = t_1 \oplus t_t$
  \begin{multline*}
    \mathcal{R}_t = \{(C^1, \ldots, C^r, \ell) :
    \exists_{(C_1^1, \ldots, C_1^r, \ell_1) \in \mathcal{R}_{t_1}, (C_2^1, \ldots, C_2^r, \ell_2) \in \mathcal{R}_{t_2}} \ s.t. \ \forall_{j \in [r]} \ C^j = C_1^j + C_2^j \land |C^j| \leq k, \\
    \ell = max\{\ell_1 + \ell_2 - |V(G)| f(0^r) : \exists_{(C_1^1, \ldots, C_1^r, \ell_1) \in \mathcal{R}_{t_1}, (C_2^1, \ldots, C_2^r, \ell_2) \in \mathcal{R}_{t_2}} \ s.t. \ \forall_{j \in [r]} C^j = C_1^j + C_2^j \land |C^j| \leq k\}\}.
  \end{multline*}
  $\mathcal{R}_t$ can be computed in time $(k + 1)^{2r\omega} \cdot r\omega$. \\
  Since $V(G_{t_1})$ and $V(G_{t_2})$ are disjoint, $(C_1^1, \ldots, C_1^r, \ell_1)$ represents a tuple $S_1, \ldots S_r \subseteq V(G_{t_1})$ of diversity $\ell_1$ and $(C_2^1, \ldots, C_2^r, \ell_2)$ represents a tuple $P_1, \ldots P_r \subseteq V(G_{t_2})$ of diversity $\ell_2$,
  by \Cref{lemma} the diversity of $S_1 \cup P_1, \ldots, S_r \cup P_r$ represented by $(C^1, \ldots, C^r, \ell)$ is $\ell_1 + \ell_2 - |V(G)| \cdot f(0^r)$. Observe that $(C^1, \ldots, C^r)$ may represent
  multiple tuples of partial solutions in $G_t$, but since we are interested in maximizing the diversity, it suffices to keep only the best diversity value among all these tuples.
  
  \item $t = recolor(t', a \rightarrow b)$
  \begin{equation*}
    \mathcal{R}_t = \{(\rho_{a \rightarrow b}(C^1), \ldots, \rho_{a \rightarrow b}(C^r), \ell)
    : (C^1, \ldots, C^r, \ell) \in \mathcal{R}_{t'}\}.
  \end{equation*}
  $\mathcal{R}_t$ can be computed in time $(k + 1)^{r\omega} \cdot 2r$.
  \item $t = addEdges(t', a, b)$
  \begin{equation*}
    \mathcal{R}_t = \{((c_1^1,\ldots, c_\omega^1), \ldots, (c_1^r, \ldots, c_\omega^r), \ell) \in \mathcal{R}_{t'} : \forall{j \in [r]} \ c_a^j = n_a \lor c_b^j = n_b\}.
  \end{equation*}
  $\mathcal{R}_t$ can be computed in time $(k + 1)^{r\omega} \cdot 2r$.
\end{enumerate}

\begin{theorem}
  Given a cliquewidth decomposition $\mathcal{D}$ of $G$ of width $\omega$ and integers $k, r, d > 0$, we can determine whether
  $G$ admits $r$ k-\textsc{Vertex Covers} of diversity at least $d$ in time
  \begin{equation*}
   \mathcal{O}(|V(\mathcal{D})| \cdot (k + 1)^{2r\omega} \cdot r\omega).
  \end{equation*}
  
\end{theorem}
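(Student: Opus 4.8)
The plan is to prove the bound in two steps: a running-time accounting over the four node types of $\mathcal{D}$, and a correctness proof by bottom-up induction on $\mathcal{D}$ showing that each set $\mathcal{R}_t$ is exactly what it was declared to be. As a sanity check one may also note that the recursion of \Cref{sec:monotoneVC}, with DP entries being the color-count tuples $(c_1,\dots,c_\omega)$ and the vertex-membership function $\rho$ sending $C_{\{v\}}\mapsto 1$ and $C_\emptyset\mapsto 0$, is a monotone dynamic programming core solving $k$-\textsc{Vertex Cover} --- since a vertex cover of $G_t = G_{t_1}\oplus G_{t_2}$ is precisely a disjoint union of vertex covers of $G_{t_1}$ and $G_{t_2}$ --- so \Cref{theorem} already yields an FPT bound of the right shape; to obtain the stated constants one additionally accounts for the fact that each entry is a vector of $r\omega$ numbers, so I spell out the direct argument.

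\emph{Running time.} First, by a single bottom-up pass, precompute the color counts $n_i$ for every node $t$ and color $i$ in time $\mathcal{O}(|V(\mathcal{D})|\omega)$; these are needed at the $addEdges$ nodes. Next observe $|\mathcal{F}_t|\le (k+1)^\omega$, hence $|\mathcal{R}_t|\le (k+1)^{r\omega}$, for every node $t$. Now compute the $\mathcal{R}_t$ in the order dictated by the recursion: an $intro$ node costs $\mathcal{O}(2^r r\omega)$; a $recolor$ or an $addEdges$ node costs $\mathcal{O}((k+1)^{r\omega} r\omega)$ via one scan of $\mathcal{R}_{t'}$; and a $\oplus$ node --- the bottleneck --- is handled by iterating over all pairs in $\mathcal{R}_{t_1}\times\mathcal{R}_{t_2}$, of which there are at most $(k+1)^{2r\omega}$, spending $\mathcal{O}(r\omega)$ per pair to form the coordinatewise sums $C^j=C_1^j+C_2^j$, to test $|C^j|\le k$, and to update (in an array indexed by the resulting tuple) the running maximum of $\ell_1+\ell_2-|V(G)|f(0^r)$. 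Summing over all $|V(\mathcal{D})|$ nodes, and adding the final root scan (cost $\mathcal{O}((k+1)^{r\omega})$), gives the claimed $\mathcal{O}(|V(\mathcal{D})|\,(k+1)^{2r\omega}\,r\omega)$.

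\emph{Correctness.} I would prove, by induction on $t$ from the leaves up, that $(C^1,\dots,C^r,\ell)\in\mathcal{R}_t$ if and only if there are vertex covers $S_1,\dots,S_r$ of $G_t$, each of size at most $k$, such that $S_j$ uses exactly $c_i^j$ vertices of color $i$ for all $j\in[r]$, $i\in[\omega]$, and $\ell$ is the maximum over all such witnessing tuples of their diversity (measured over $V(G)$, with every vertex outside $G_t$ contributing $f(0^r)$), exactly mirroring the proofs of \Cref{lowerbound} and \Cref{upperbound}. The base case is immediate since the only vertex covers of a one-vertex graph are $\emptyset$ and $\{v\}$. For $recolor$ and $addEdges$ one uses that these cliquewidth operations change neither $V(G_t)$ nor the vertex subsets $S_j$ themselves, so the stored $\ell$ is transported unchanged while the feasibility filter (``a cover of $G_t$ must contain all vertices of color $a$ or all of color $b$'') and the color-count recoding are applied exactly as in the non-diverse DP. The only substantive case is $\oplus$: there $V(G_{t_1})\cap V(G_{t_2})=\emptyset$, every cover of $G_t$ decomposes uniquely into a cover of $G_{t_1}$ and a cover of $G_{t_2}$, and \Cref{lemma} gives $Div(S_1\cup P_1,\dots,S_r\cup P_r)=Div(S_1,\dots,S_r)+Div(P_1,\dots,P_r)-|V(G)|f(0^r)$, which is increasing in each of the two summands. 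Finally $G_q=G$ for the root $q$, so $G$ admits $r$ vertex covers of size $\le k$ with diversity $\ge d$ iff $\mathcal{R}_q$ contains some $(C^1,\dots,C^r,\ell)$ with $\ell\ge d$.

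The step I expect to be the crux is precisely the $\oplus$ case: the soundness of keeping only the best $\ell$ per color profile, i.e.\ that a color profile realized at a child only by low-diversity partial solutions can never be ``rescued'' into a higher-diversity global solution. This is exactly the additive --- hence monotone --- decomposition of $Div$ over disjoint vertex sets furnished by \Cref{lemma}, combined with the observation that $recolor$ and $addEdges$ leave the vertex set, and thus the diversity, untouched; once this is in place the remaining cases are bookkeeping identical in spirit to the non-diverse DP.
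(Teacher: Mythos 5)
Your proposal is correct and follows essentially the same route as the paper: the paper's own proof simply takes the maximum of the per-node costs already stated with the recurrences for $\mathcal{R}_t$ (intro, $\oplus$, $recolor$, $addEdges$) and appeals to the definition of $\mathcal{R}_t$ at the root, with the correctness induction you spell out being exactly the specialization of \Cref{lowerbound} and \Cref{upperbound} (via \Cref{lemma} for the $\oplus$ case) to this DP. Your identification of the crux --- that keeping only the best $\ell$ per color profile is sound because Venn diversity decomposes additively over the disjoint vertex sets at a $\oplus$ node --- matches the paper's reasoning precisely.
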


\begin{proof}
  For each $t \in V(\mathcal{D})$, $\mathcal{R}_t$ can be computed in time
  \begin{equation*}
    max(2^r \cdot r\omega, (k + 1)^{2r\omega} \cdot r\omega, (k + 1)^{r\omega} \cdot 2r).
  \end{equation*}
  By the definition of $\mathcal{R}_t$, $G$ admits $r$ k-\textsc{Vertex Covers} of diversity at least $d$ if and only if
  ${(C^1, \ldots, C^r, \ell) \in \mathcal{R}_q}$ for some $C^1, \ldots, C^r \in \mathbb{N}^\omega$ and $\ell \geq d$, where $q$ is the root of $\mathcal{D}$.
\end{proof}

\subsection{Dynamic Programming Core for \textsc{Vertex Cover}}

We describe dynamic programming for \textsc{Vertex Cover} from \Cref{sec:monotoneVC} in terms of the dynamic programming core and membership function.
For each node $t \in V(\mathcal{D})$, we define $Process_{\mathfrak{C}, \mathcal{D}}(t)$ as follows:
\begin{enumerate}
  \item $t = intro(v, a)$
  \begin{equation*}
  Process_{\mathfrak{C}, \mathcal{D}}(t) = \mathcal{F}_t = \{C_{\{v\}}, C_{\emptyset}\}.
  \end{equation*}
  \item $t = t_1 \oplus t_2$
  \begin{equation*}
    Process_{\mathfrak{C}, \mathcal{D}}(t) = \{(C_1 + C_2, C_1, C_2) : C_1 \in \mathcal{F}_{t_1}, C_2 \in \mathcal{F}_{t_2}, |C_1 + C_2| \leq k\},
  \end{equation*}
  where $C_1, C_2 \in \mathbb{N}^\omega$, $C_1 + C_2$ and $|C_1 + C_2|$ for $\omega$-tuples defined as before.\\
  \begin{equation*}
    Process_{\mathfrak{C}, \mathcal{D}}(t) = \{(\rho_{a \rightarrow b}(C'), C') : C' \in \mathcal{F}_{t'}\}.
  \end{equation*}
  \item $t = addEdges(t', a, b)$
  \begin{equation*}
    Process_{\mathfrak{C}, \mathcal{D}}(t) = \{(C', C') : C' = (c_1, \ldots, c_\omega) \in \mathcal{F}_{t'} , c_a = n_a \lor c_b = n_b\},
  \end{equation*}
  where $n_i$ denotes the number of vertices of color $i$ in $G_t$.
\end{enumerate}

Observe that $Process_{\mathfrak{C}, \mathcal{D}}(t)$ describes exactly how each entry in $\mathcal{F}_t$ is computed from entries in $\mathcal{F}_{t_1}, \ldots, \mathcal{F}_{t_{\delta(t)}}$.
$Process_{\mathfrak{C}, \mathcal{D}}(t)$ is empty if and only if $\mathcal{F}_t$ is empty, so we can define $Accept_{\mathfrak{C}, \mathcal{D}}$ to be $\{0, \ldots, k\}^\omega$.
$Accept_{\mathfrak{C}, \mathcal{D}}$ can be computed in time $(k + 1)^\omega$.
Moreover, for each $t \in V(\mathcal{D})$, $Process_{\mathfrak{C}, \mathcal{D}}(t)$ can be computed in the same time as $\mathcal{F}_t$.
So $\tau(\mathfrak{C}, \mathcal{D}) = \mathcal{O}(|V(\mathcal{D})| \cdot (k + 1)^{2\omega} \cdot \omega)$.
Observe that for each $t \in V(\mathcal{D})$, for each entry ${(C, C_1, \ldots C_{\delta(t)}) \in Process_{\mathfrak{C}, \mathcal{D}}(t)}$, the values of $C_1, \ldots C_{\delta(t)}$ determine the value of $C$ and $\delta(t) \leq 2$, so $Size_{\mathfrak{C}, \mathcal{D}} \leq (k + 1)^{2\omega}$.

It remains to prove that $\mathfrak{C}$ is monotone. Let $\rho(C) = 0$ if $C = C_\emptyset$, $1$ otherwise.
\begin{lemma}\label{lem:monotoneVC}
  For each graph $G$, for each decomposition $\mathcal{D}$ of $G$, for each node $t \in V(\mathcal{D})$
  there exists a vertex cover $S \subseteq V(G_t)$ of $G_t$ if and only if there exists
  a $(\mathfrak{C}, \mathcal{D}_t, w)$-witness $\alpha$ such that $S = S_\rho(\mathcal{D}_t,\alpha)$.
  Moreover, in this case, $w = (c_1, \ldots, c_\omega)$, where $c_i$ is the number of vertices of color $i$ in $S$.
\end{lemma}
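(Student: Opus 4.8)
The plan is a bottom-up structural induction on the subtree $\mathcal{D}_t$. During the induction I would carry a slightly stronger invariant: that the set of strings $w$ admitting a $(\mathfrak{C},\mathcal{D}_t,w)$-witness is exactly $\mathcal{F}_t$ (so the ``moreover'' clause is then immediate from the way $\mathcal{F}_t$ was defined). The structural fact I would prove once and reuse everywhere is that $S_\rho$ decomposes along the tree: since the introduce node $Leaf_{\mathcal{D}}(v)$ of any $v\in V(G_{t'})$ lies inside $\mathcal{D}_{t'}$, a node $t$ with children $t_1,\ldots,t_{\delta(t)}$ and $\alpha_j:=\alpha|_{V(\mathcal{D}_{t_j})}$ satisfies $S_\rho(\mathcal{D}_t,\alpha)=\bigcup_j S_\rho(\mathcal{D}_{t_j},\alpha_j)$, which is a \emph{disjoint} union when $\delta(t)=2$ because $V(G_{t_1})\cap V(G_{t_2})=\emptyset$; conversely a witness on $\mathcal{D}_t$ is exactly a gluing of witnesses on the $\mathcal{D}_{t_j}$ together with a choice of $\alpha(t)$ forming an element of $Process_{\mathfrak{C},\mathcal{D}}(t)$.

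For the base case $t=intro(v,a)$, a witness is just a choice $w\in\{C_{\{v\}},C_\emptyset\}$; $G_t$ is edgeless, so $\emptyset$ and $\{v\}$ are its only vertex covers, they are precisely the sets obtained from $w=C_\emptyset$ and $w=C_{\{v\}}$, and these tuples are their colour-count vectors. For $t=t_1\oplus t_2$ I would use that $E(G_t)$ is the disjoint union of $E(G_{t_1})$ and $E(G_{t_2})$: a size-$\le k$ vertex cover $S$ of $G_t$ is exactly a set whose traces $S\cap V(G_{t_j})$ are vertex covers of $G_{t_j}$ with $|S\cap V(G_{t_1})|+|S\cap V(G_{t_2})|\le k$. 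Combined with the induction hypothesis and the definition of $Process_{\mathfrak{C},\mathcal{D}}(t)$ (the colour-wise sum is correct because $\oplus$ does not change colours, and the filter $|C_1+C_2|\le k$ matches $|S|\le k$), both directions of the equivalence become a transcription.

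For $t=recolor(t',a\to b)$ the vertex set, the edge set and the predicate ``$v\in S$'' are all unchanged; only colours transform by $a\to b$, and $\rho_{a\to b}$ sends the $G_{t'}$-colour vector of any set $S$ to its $G_t$-colour vector, so the equivalence and the $\mathcal{F}_t$-identification transport verbatim from $\mathcal{D}_{t'}$ through the single-child bookkeeping. The case with actual combinatorial content is $t=addEdges(t',a,b)$ (with $a\neq b$, as is standard): here $V(G_t)=V(G_{t'})$ with unchanged colours, and $E(G_t)$ is $E(G_{t'})$ together with all pairs consisting of a colour-$a$ vertex and a colour-$b$ vertex. The observation to make is that such a pair is uncovered by $S$ iff both of its endpoints lie outside $S$, hence $S$ covers all of the new edges iff it is \emph{not} the case that some colour-$a$ vertex and some colour-$b$ vertex are simultaneously outside $S$, i.e.\ iff $S$ contains all colour-$a$ vertices or all colour-$b$ vertices. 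In colour-vector language this is precisely the filter $c_a=n_a\lor c_b=n_b$ from $Process_{\mathfrak{C},\mathcal{D}}(t)$, so the equivalence again carries over from $\mathcal{D}_{t'}$. Applying the lemma at the root $q$, together with $Accept_{\mathfrak{C},\mathcal{D}}=\{0,\ldots,k\}^\omega$ containing every colour vector of a size-$\le k$ set, then gives that $(\mathfrak{C},\rho)$ solves $k$-\textsc{Vertex Cover}, i.e.\ $\mathfrak{C}$ is monotone.

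I expect no serious obstacle: almost everything is bookkeeping, the two places where I would go slowly being (i) threading the bound $|S|\le k$ correctly through the $\oplus$ node — which works because $|S\cap V(G_{t_1})|+|S\cap V(G_{t_2})|=|S|$ lines up with the filter $|C_1+C_2|\le k$ — and (ii) the $addEdges$ characterisation above, where one must also double-check that the convention $a\neq b$ is exactly what makes the filter $c_a=n_a\lor c_b=n_b$ correct (for $a=b$ one would instead need $c_a\ge n_a-1$).
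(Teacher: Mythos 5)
Your proposal is correct and follows essentially the same route as the paper's proof: a bottom-up induction on the decomposition tree with a case analysis on the node type, using the disjointness of $V(G_{t_1})$ and $V(G_{t_2})$ at $\oplus$ nodes, transporting colour counts through $\rho_{a\to b}$ at recolour nodes, and the ``all of colour $a$ or all of colour $b$'' characterisation at $addEdges$ nodes. Your explicit threading of the $|S|\le k$ bound and the $a\neq b$ caveat are slightly more careful than the paper's write-up, but the argument is the same.
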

\begin{proof}
  We use induction on the decomposition tree.
  \paragraph{\textbf{Node $\mathbf{t = Leaf_{\mathcal{D}}(v)}$.}} 
    Let $a \in$ be the color of $v$. There are two possible witnesses: $\alpha_1(t) = C_{\{v\}}$ and $\alpha_2(t) = C_\emptyset$ and two possible vertex covers of
    $G_t$: $\{v\}$ and $\emptyset$. Observe that $S_\rho(\mathcal{D}_t, \alpha_1) = \{v\}$ and $\{v\}$ uses one vertex of color $a$,
    $S_\rho(\mathcal{D}_t, \alpha_2) = \emptyset$ and $\emptyset$ uses no vertices of any color.
  \paragraph{\textbf{Node $\mathbf{t}$ has children $\mathbf{t_1, \ldots, t_{\delta(t)}}$.}}
    Let $\alpha$ be a $(\mathfrak{C}, \mathcal{D}_t, w)$-witness and $S \coloneq S_\rho(\mathcal{D}_t, \alpha)$. We will prove that $w = (c_1, \ldots, c_\omega)$ and $S$ is a vertex cover of $G$ that uses $c_i$ vertices of color $i$.
    Let $w_j \coloneq \alpha(t_j)$ and $S_j \coloneq S_\rho(\mathcal{D}_{t_j}, \alpha_j)$ for $j \in [\delta(t)]$.
    Observe that $S = \bigcup_{j \in [\delta(t)]} S_j$ and
    $\alpha_j \coloneq \alpha |_{V(\mathcal{D}_{t_j})}$
    is a $(\mathfrak{C}, \mathcal{D}_{t_j}, w_j)$-witnesses.
    Then by the induction assumption, $S_j$ is a vertex cover of $G_{t_j}$ and $w_j = (c_1^j, \ldots, c_\omega^j)$ where
    $c_i^j$ is the number of vertices of color $i$ in $S_j$.
    If $t = t_1 \oplus t_2$ or $t = recolor(t_1, a \rightarrow b)$, $E(G_t) = \bigcup_{j \in [\delta(t)]} E(G_{t_j})$, so
    $S$ is a vertex cover of $G_t$. Moreover, $S$ uses $\sum_{j \in [\delta(t)]} c_i^j = c_i$ vertices of color $i$ for each $i \in [\omega]$.
    If $t = addEdges(t_1, a, b)$ and $(w, w_1) \in Process_{\mathfrak{C}, \mathcal{D}}(t)$, then $c_a^1 = n_a$ or $c_b^1 = n_b$, so all of
    $E(G_t) \setminus E(G_{t_1})$ are covered by $S_1$. Therefore, $S = S_1$ is a vertex cover of $G_t$.
    Moreover, $S$ uses $c_i^1 = c_i$ vertices of color $i$ for each $i \in [\omega]$.

    Now let $S$ be a vertex cover of $G_t$ that uses $c_i$ vertices of color $i$ for each $i \in [\omega]$. We will prove that there exists a $(\mathfrak{C}, \mathcal{D}_t, w)$-witness $\alpha$ for $w = (c_1, \ldots, c_\omega)$ such that $S = S_\rho(\mathcal{D}_t, \alpha)$. Let $S_j \coloneq S \cap V(G_{t_j})$ for each $j \in [\delta(t)]$.
    Observe that $S_j$ is a vertex cover of $G_{t_j}$. Then, by the induction assumption, there exist a $(\mathfrak{C}, \mathcal{D}_{t_j}, w_j)$-witness $\alpha_j$
    such that $S_j = S_\rho(\mathcal{D}_{t_j}, \alpha_j)$ and $w_j = (c_1^j, \ldots, c_\omega^j)$, where $c_i^j$ is the number of vertices of color $i$ in $S_j$.
    Observe that if there exists a $(\mathfrak{C}, \mathcal{D}_{t_j}, w_j)$-witness, then $w_j \in \mathcal{F}_{t_j}$.
    So, by the definition of $Process_{\mathfrak{C}, \mathcal{D}}(t)$ there exists $(w, w_1, \ldots, w_{\delta(t)}) \in Process_{\mathfrak{C}, \mathcal{D}}(t)$
    such that $w = (c_1, \ldots, c_\omega)$, where $c_i = \sum_{j \in [\delta(t)]}c_i^j$ for each $i \in [\omega]$.
    Let us define for each $t' \in V(\mathcal{D}_t)$,
    \[
      \alpha(t') \coloneq \begin{cases}
        \alpha_j(t')  & \text{if $t' \in V(\mathcal{D}_{t_j}), \ j \in \delta(t)$,} \\
        w & \text{if $t' = t$.}
      \end{cases}  
    \]
    Observe that $\alpha$ is a $(\mathfrak{C}, \mathcal{D}, w)$-witness and $S_\rho(\alpha, \mathcal{D}_t) = \bigcup_{j \in [\delta(t)]} S_j = S$, so $S$ uses $c_i$ vertices of color $i$
    for each $i \in [\omega]$.
\end{proof}

The direct conclusion of \cref{theorem} is the following corollary.

\begin{corollary}
  Given a cliquewidth decomposition $\mathcal{D}$ of $G$ of width $\omega$ and integers $k, r, d > 0$, we can determine whether
  $G$ admits $r$ k-\textsc{Vertex Covers} of diversity at least $d$ in time 
 
  \begin{equation*}
    \mathcal{O}(|V(\mathcal{D})| \cdot ((k + 1)^{2\omega} \cdot r\omega + (k + 1)^{2r\omega})).
  \end{equation*}
\end{corollary}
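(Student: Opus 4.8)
The plan is to obtain this corollary as a direct instantiation of \Cref{theorem}. First I would take all of $\mathcal{P}_1, \ldots, \mathcal{P}_r$ to be $k$-\textsc{Vertex Cover} and, for every $i \in [r]$, set $\mathfrak{C}_i = \mathfrak{C}$ and $\rho_i = \rho$, where $\mathfrak{C}$ is the dynamic programming core built in the previous subsection and $\rho$ is the vertex-membership function with $\rho(C) = 0$ iff $C = C_\emptyset$. The hypothesis that $(\mathfrak{C}_i, \rho_i)$ solves $\mathcal{P}_i$ --- equivalently, that $\mathfrak{C}$ is monotone in the sense of \Cref{def:monotone} --- is exactly the content of \Cref{lem:monotoneVC}. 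Fixing any Venn $f$-diversity measure $Div$ (for instance $DivSum$, cf.\ \Cref{sec:measures}) and the given integer $d$, \Cref{theorem} then applies and decides $Div^d(\mathcal{P}_1, \ldots, \mathcal{P}_r)$ --- i.e.\ whether $G$ admits $r$ vertex covers each of size at most $k$ with diversity at least $d$ --- in time $\mathcal{O}\bigl(\sum_{i=1}^r \tau(\mathfrak{C}_i, \mathcal{D}) + |V(\mathcal{D})| \cdot \prod_{i=1}^r Size_{\mathfrak{C}_i, \mathcal{D}}\bigr)$.

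It remains only to substitute the quantitative bounds established earlier in this section. Building all $Process_{\mathfrak{C}, \mathcal{D}}(t)$ tables together with $Accept_{\mathfrak{C}, \mathcal{D}}$ costs $\tau(\mathfrak{C}, \mathcal{D}) = \mathcal{O}\bigl(|V(\mathcal{D})| \cdot (k+1)^{2\omega} \cdot \omega\bigr)$, so the first summand contributes $\mathcal{O}\bigl(r \cdot |V(\mathcal{D})| \cdot (k+1)^{2\omega} \cdot \omega\bigr)$. Since every entry of $Process_{\mathfrak{C}, \mathcal{D}}(t)$ is determined by its at most two child components, we have $Size_{\mathfrak{C}, \mathcal{D}} \le (k+1)^{2\omega}$, hence $\prod_{i=1}^r Size_{\mathfrak{C}_i, \mathcal{D}} \le (k+1)^{2r\omega}$ and the second summand contributes $\mathcal{O}\bigl(|V(\mathcal{D})| \cdot (k+1)^{2r\omega}\bigr)$. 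Adding the two contributions yields the stated running time $\mathcal{O}\bigl(|V(\mathcal{D})| \cdot ((k+1)^{2\omega} \cdot r\omega + (k+1)^{2r\omega})\bigr)$.

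I do not expect any genuine obstacle here: all the real work sits in \Cref{lem:monotoneVC}, where one checks that the partial solutions of $\mathfrak{C}$ at a node $t$ are precisely the vertex covers of $G_t$, that they are reconstructed from the leaf entries through $\rho$, and that at a $\oplus$-node the partial solution is the disjoint union of the children's partial solutions (the step where tracking color counts rather than explicit vertex sets is essential, so that $recolor$ and $addEdges$ nodes neither create nor destroy membership). Granting that lemma, the corollary is a mechanical application of \Cref{theorem} together with an elementary arithmetic simplification; moreover, if one wants the solutions themselves rather than a yes/no answer, the same $\Pi_{Div}$ computation together with the witness-recovery observation following \Cref{solves} outputs an optimal $r$-tuple of vertex covers within the same bound.
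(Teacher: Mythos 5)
Your proposal is correct and is exactly the intended derivation: the paper states this corollary as a direct consequence of \Cref{theorem}, instantiated with the vertex cover core $\mathfrak{C}$ and membership function $\rho$ (whose monotonicity is \Cref{lem:monotoneVC}), using the bounds $\tau(\mathfrak{C},\mathcal{D}) = \mathcal{O}(|V(\mathcal{D})|\cdot(k+1)^{2\omega}\cdot\omega)$ and $Size_{\mathfrak{C},\mathcal{D}} \le (k+1)^{2\omega}$ established just before the statement. Your arithmetic combining the two summands matches the claimed running time.
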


\subsection{Example of Non-monotone DP}\label{sec:nonMonotone}

Consider another variant of the $k$-\textsc{Vertex Cover} problem called $k$-\textsc{Min Vertex Cover}, where
given a graph $G$ one asks to find a vertex cover of $G$ of minimal size, but at most $k$.
Formally, $k$-\textsc{Min Vertex Cover} is set of pairs $(G, S)$ where $S \subseteq V(G)$ such that $|S| \leq k$, for each $uv \in E(G)$ $u \in S$ or $v \in S$
and for each $S' \subseteq V(G)$ such that $|S'| < |S|$ there exists $uv \in E(G)$ such that $u \not \in S'$ and $v \not \in S'$.

Let $\mathcal{D}$ be a cliquewidth decomposition of $G$ of width $\omega$. For $t \in V(\mathcal{D})$, we define
set $\mathcal{F}_t$ as a set of tuples $(c_1, \ldots, c_\omega)$ such that in $G_t$ there exist an inclusion-wise minimal vertex cover $S$
of size at most $k$
that uses $c_i$ vertices of color $i$ for each $i \in [\omega]$.
Note that $|\mathcal{F}(t)| \leq (k + 1)^\omega$.
Any vertex cover of minimal size is also inclusion-wise minimal, so if we could find all the inclusion-wise minimal vertex covers,
these of the minimal size among them would be the solution. \\
Our dynamic programming constructs for each node $t$ set $\mathcal{F}_t$ as follows:
\begin{enumerate}
  \item $t = intro(v, a)$ \\
  $G_t$ is edgeless, so the only inclusion-wise minimal vertex cover is an empty set.
  Let $C_{\emptyset}$
  be a tuple that corresponds to a vertex cover $\emptyset$.
  Then
  \begin{equation*}
  \mathcal{F}_t = \{C_{\emptyset}\}.
  \end{equation*}
  \item $t = t_1 \oplus t_2$ \\
  Observe that a set $S$ is an inclusion-wise minimal vertex cover of $G_t$ if and only if it is a union of inclusion-wise minimal vertex covers $S_1$ of $G_{t_1}$ and $S_2$ of $G_{t_2}$.
  Recall that for $C_1 = (x_1, \ldots, x_\omega), \ C_2 = (y_1, \ldots, y_\omega)$ we denote by $C_1 + C_2$ the tuple ${(x_1 + y_1, \ldots, x_\omega + y_\omega)}$.
  Then
  \begin{equation*}
    \mathcal{F}_t = \{C_1 + C_2 : C_1 \in \mathcal{F}_{t_1}, C_2 \in \mathcal{F}_{t_2}, |C_1 + C_2| \leq k\}.
  \end{equation*}
  \item $t = recolor(t', a \rightarrow b)$
  \begin{equation*}
    \mathcal{F}_t = \{\rho_{a \rightarrow b}(C') : C' \in \mathcal{F}_t'\},
  \end{equation*}
  where $\rho_{a \rightarrow b}(C')$ defined as previously.
  \item $t = addEdges(t', a, b)$ \\
  Observe that a set $S$ is an inclusion-wise minimal vertex cover in $G_t$ if and only if it is a union of an inclusion-wise minimal vertex cover $S'$ in $G_{t'}$ and
  an inclusion-wise minimal set of vertices that covers new edges, which is either the set of all vertices of color $a$ or the set of all vertices of color $b$.
  Recall that $n_i$ denotes the number of vertices of color $i$ in $G_t$, and
  we can compute it by induction for each $i \in [\omega]$ for each $t \in V(\mathcal{D})$ in time $|V(\mathcal{D})|$.
  \\
  For $a \in [\omega]$ let $all_a(c_1', \ldots, c_\omega') \coloneq (c_1, \ldots c_\omega)$ where
  $c_i = \begin{cases}
    n_a & \text{if $i = a$},\\
    c_i' & \text{otherwise}.
   \end{cases}$\\
  Then
  \begin{equation*}
    \mathcal{F}_t = \{all_a(C') : C' \in \mathcal{F}_{t'}\} \cup \{all_b(C') : C' \in \mathcal{F}_{t'}\}.
  \end{equation*}
\end{enumerate}

The core corresponding to this dynamic programming is clearly not monotone. The value of any $(\mathfrak{C}, \mathcal{D})$-witness $\alpha$
on the leaves of $\mathcal{D}$ is equal to $C_{\emptyset}$, so for any membership function $\rho$ either $S_\rho(\mathcal{D}, \alpha) = \emptyset$ or
$S_\rho(\mathcal{D}, \alpha) = V(G)$. \\
However, having the witness $\alpha$, we can still deduce the corresponding solution set by looking at the value of $\alpha$ on the nodes of the last type, $t = addEdges(t', a, b)$,
where either the whole $a$-part or the whole $b$-part of $G_t$ is included to the solution.

Observe that we can also solve $k$-\textsc{Min Vertex Cover} by running the same dynamic programming as for $k$-\textsc{Vertex Cover}
and then choosing the solution of the smallest size among the found ones.
The worst-case running time is asymptotically the same, so we do not benefit from dropping the monotonicity constraint.

\section{\texorpdfstring{MSO$_1$}{MSO1} Has Monotone DP}\label{sec:MSO}

\begin{theorem}[Courcelle, Makowsky, and Rotics~\cite{CMR}]\label{Croucelle}
  Given a cliquewidth decomposition $\mathcal{D}$ of a graph $G$, an $MSO_1$ formula $\Phi$ can be evaluated on $G$ in time $c\cdot  |V(\mathcal{D})|$, where $c$ is a constant depending only on $\Phi$ and the width of~$\mathcal{D}$. 
\end{theorem}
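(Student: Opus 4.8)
The plan is to build an explicit bottom-up dynamic program on the decomposition tree $\mathcal{D}$ whose state at a node $t$ records the \emph{MSO$_1$-type} of the colored graph $G_t$, that is, exactly enough information to decide every subformula of $\Phi$ on $G_t$; since the graph at the root of $\mathcal{D}$ is $G$ itself, reading the type at the root decides whether $G \models \Phi$. I would first pass to the more flexible setting of colored graphs with a tuple of free set variables, following the evaluation-tree approach of \cite{Fiala}: the $\omega$ colors of $\mathcal{D}$ are treated as unary predicates $P_1,\dots,P_\omega$, and for a colored graph $H$, a tuple $\bar U=(U_1,\dots,U_p)$ of vertex subsets, and a quantifier-rank bound $q$, let $\tau_q^p(H,\bar U)$ be the set of all MSO$_1$ formulas of quantifier rank at most $q$ with $p$ free set variables that $(H,\bar U)$ satisfies. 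The number of such types is bounded by a function of $q$, $p$, and $\omega$ only, which is the origin of the promised constant $c$ (we take $q$ and $p$ from $\Phi$).

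The heart of the argument is a \emph{composition lemma}: the type $\tau_q^p(G_t,\cdot)$ is obtainable, effectively, from the operation applied at $t$ and the corresponding types of the children of $t$. For $t=t_1\oplus t_2$ this is the Feferman--Vaught theorem for disjoint unions; the cleanest proof is via the Ehrenfeucht--Fra\"{\i}ss\'e game, where the duplicator's winning condition in the $q$-round game on $(G_t,\bar U)$ decomposes into two independent games on the parts, so $\tau_q^p(G_{t_1}\oplus G_{t_2},\bar U^{(1)}\cup\bar U^{(2)})$ depends only on $\tau_q^p(G_{t_1},\bar U^{(1)})$ and $\tau_q^p(G_{t_2},\bar U^{(2)})$, effectively. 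For $t=recolor(t',a\to b)$ and $t=addEdges(t',a,b)$ the output colored graph is obtained from $G_{t'}$ by a \emph{quantifier-free} interpretation that fixes the vertex set: the new adjacency is ``old adjacency, or one endpoint has color $a$ and the other color $b$'', and colors are merely permuted or merged, while the free sets $\bar U$ are carried along unchanged; substituting this interpretation into formulas shows the type of $G_t$ is a fixed relabeling of the type of $G_{t'}$ that does not even raise the quantifier rank. At an $intro(v,a)$ leaf there are only constantly many pairs $(G_t,\bar U)$ — a single vertex of color $a$ together with a choice of which of the $p$ sets contain it — whose types are computed directly.

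With the composition lemma in hand, the DP entry at a node $t$ is the \emph{set} of types $\{\tau_q^p(G_t,\bar U):\bar U\in (2^{V(G_t)})^p\}$: since every $\bar U$ on $G_t=G_{t_1}\oplus G_{t_2}$ splits uniquely into its restrictions to the two sides and conversely any pair recombines, this set at a $\oplus$-node is the image of the product of the two children's sets under the Feferman--Vaught map; at a $recolor$ or $addEdges$ node it is the image of the child's set under the corresponding relabeling; at a leaf it is the constantly many directly-computed types. Because the number of types is a constant depending only on $q+p$ and $\omega$, each node is processed in $O(1)$ time, so one bottom-up sweep over $|V(\mathcal{D})|$ nodes computes the set of types realized by $G$ (taking $p=0$ for the sentence $\Phi$), and we answer ``yes'' iff some realized type contains $\Phi$; this gives running time $c\cdot|V(\mathcal{D})|$. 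Running the same DP with $p=1$, keeping in the entry also the membership vector of the introduced vertex, and letting $\rho$ read off from a leaf $intro(v,a)$ whether $v$ was guessed into the single free set $S$, yields the monotone dynamic programming core underlying \Cref{cor:mainCor}.

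I expect the main obstacle to be proving the composition lemma rigorously — specifically, handling the interaction of the free set variables with the disjoint-union step (arguing that a set assignment on $G_t$ is genuinely the disjoint union of its restrictions and that the Ehrenfeucht--Fra\"{\i}ss\'e game decomposes accordingly) and verifying carefully that the $addEdges$ and $recolor$ transductions are truly quantifier-free, so that the quantifier rank, and hence the number of types, stays bounded by a function of $\Phi$ and $\omega$ along the entire decomposition.
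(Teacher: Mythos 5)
Your argument is correct, but it takes a genuinely different route from the paper's. You prove \cref{Croucelle} via logical types: the DP state at a node $t$ is the set of quantifier-rank-$q$ MSO$_1$ types realized by $(G_t,\bar U)$ over the signature extended with the $\omega$ color predicates, the $\oplus$ step is handled by a Feferman--Vaught composition argument (via the Ehrenfeucht--Fra\"{\i}ss\'e game decomposing over the disjoint union), and $recolor$/$addEdges$ are absorbed as quantifier-free interpretations that do not raise the rank. This is essentially the classical Courcelle--Makowsky--Rotics-style argument. The paper instead follows the evaluation-tree approach of \cite{Fiala}: it puts $\Phi$ in prenex normal form, builds a \emph{partial evaluation tree} $F_t$ whose levels correspond to the quantifier prefix (with explicit external branches for individual variables valued outside $G_t$), reduces it by identifying isomorphic subtrees (leaf classes being the \emph{configurations}), handles $\oplus$ by an explicit \emph{tree product} $F_{t_1}^-\otimes F_{t_2}^-$, and shows reduction commutes with the product in \cref{combining-lemma}. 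The two routes are morally equivalent --- a reduced partial evaluation tree is a concrete representative of the type, and both give the same tower-shaped constant $c$ --- but they trade off differently: your route is shorter and leans on standard composition theorems (whose effective statement for MSO with free set variables is the one nontrivial ingredient you rightly single out), while the paper's route keeps an explicit derivation-tree structure on the DP entries, which is what makes it immediate to package the algorithm as a \emph{monotone} dynamic programming core in \cref{MSO-core}, with the witness set $S$ read off from the $\emptyset$-versus-$\{v\}$ labels at the leaves of the decomposition; your closing remark about running the DP with $p=1$ and a leaf membership bit achieves the same end, at the cost of redoing that witness-extraction bookkeeping in the type formalism.
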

The proof of \cref{Croucelle} will give us dynamic programming over the decomposition of $G$.
In \cref{sec:MSOMonotone}, we will show that for vertex problems, this dynamic programming is monotone, which, combined with \cref{theorem}, gives us \cref{cor:mainCor}.

\subsection{Proof of Courcelle's Theorem for Cliquewidth Using Reduced Evaluation Trees}\label{sec:CroucelleProof}

\sv{
A \textit{partial evaluation tree} $F_t$ for $\Phi$ and a node $t$ of the decomposition $\mathcal{D}$ of $G$
is the evaluation tree for $\Phi$ and $G_t$ in which
for individual variables, we consider the additional case when the value does not
belong to $G_t$ (we call such branch \textit{external}).
Observe that if we prune from $F_t$ all the external branches, we get the evaluation tree for $\Phi$ and $G_t$.
We can reduce partial evaluation trees similarly as regular evaluation trees,
defining the isomorphism classes of the leaves (also called \textit{configurations}) as combinations of: the colored subgraph $G_t'$ of $G_t$ induced by the values of individual variables corresponding to this leaf,
assignment of the individual variables and assignment of the set variables restricted to $V(G_t')$
(there are at most $2^{\mathcal{O}(q_v(q + \log \omega))}$ such combinations, where $\omega$ is the number of colors).
Proceeding with the same bottom-up reduction, we get a \textit{reduced partial evaluation tree} $F_t^-$
of size bounded by
\lv{\[
\left.\kern-\nulldelimiterspace
\begin{array}{@{}c@{}}
  2^{\cdot^{\cdot^{\cdot^{\scriptstyle 2^{\mathcal{O}(q_v(q + \log \omega))}}}}}
\end{array}
\right\rbrace
\text{\scriptsize $q + 1$}
\]}
\sv{$
\left.\kern-\nulldelimiterspace
\begin{array}{@{}c@{}}
  2^{\cdot^{\cdot^{\cdot^{\scriptstyle 2^{\mathcal{O}(q_v(q + \log \omega))}}}}}
\end{array}
\right\rbrace
\text{\scriptsize $q + 1$}
$}
which is isomorphic to $F_t$, so after pruning the external branches, it evaluates to the same value as the regular evaluation tree for $G_t$.
It remains to prove that we can compute $F_t^-$ for $t$ being the root of $\mathcal{D}$ without computing $F_t$.
The idea is to construct $F_t^-$ inductively from $F_{t_1}^-, \ldots, F_{t_\delta(t)}^-$.
Observe that for $t$ being the leaf, the size of $F_t$ is bounded, so we can compute $F_t$ and reduce it.
For $t = recolor(t_1, a \rightarrow b)$ or $t = addEdges(t_1, a, b)$, it suffices to reduce $F_{t_1}^-$, taking into account the new colored graph.
For $t = t_1 \oplus t_2$ we compute a \textit{tree product} $F_{t_1}^- \otimes F_{t_2}^-$ and then reduce it.
The tree product of two partial evaluation trees $F_1, F_2$ for the same formula $\Phi$ and graphs $G_1, G_2$ is recursively defined as follows.
If $F_1, F_2$ are leaves, $F_1 \otimes F_2 = (F_1, F_2)$.
If the first quantified variable of $\Phi$ is a set variable, $F_1$ has children $\{F_1^i: i \in [n]\}$ corresponding, respectively, to the assignments of sets $\{S_i: i \in [n]\}$ to this variable,
and similarly $F_2$ has children $\{F_2^j : j \in [m]\}$ corresponding, respectively, to the assignments of $\{P_j: j \in [m]\}$,
then $F_1 \otimes F_2$ has children $\{F_1^i \otimes F_2^j: i \in [n], j \in [m]\}$ corresponding to the assignments of $\{S_i \cup P_j: i \in [n], j \in [m]\}$.
If the first quantified variable of $\Phi$ is an individual variable, $F_1$ has children $F_1^0$ and $\{F_1^i: i \in [n]\}$ corresponding, respectively, to the assignments of external value and $\{v_i: i \in [n]\}$ to this variable,
and similarly $F_2$ has children $F_2^0$ and $\{F_2^j : j \in [m]\}$ corresponding, respectively, to the assignments of external value and $\{u_j : j \in [m] \}$,
then $F_1 \otimes F_2$ has children $F_1^0 \otimes F_2^0$, $\{F_1^i \otimes F_2^0: i \in [n]\}$ and $\{F_1^0 \otimes F_2^j : j \in [m]\}$
respectively, to the assignments of external value, $\{v_i : i \in [n]\}$ and $\{u_j : j \in [m]\}$ to this variable.
Note that $|F_1 \otimes F_2| \leq |F_1| \cdot |F_2|$. It is easy to see that for $t = t_1 \oplus t_2$, $F_t = F_{t_1} \otimes F_{t_2}$.
Then, in order to show the correctness of the inductive construction of reduced partial evaluation trees, it remains to prove the following lemma, whose proof is \cite{OurArxiv}.
}

We prove \cref{Croucelle} using partial evaluation trees.
Let $\Phi$ be an $MSO_1$ formula in the prenex normal form and let $\Psi(z_1, \ldots, z_q)$ be the quantifier-free part of $\Phi$.
Let $q_v$ be the number of variables quantified over single vertices (called \emph{individual variables}) and $q_s$ be the number of variables quantified over sets of vertices (called \emph{set variables}). The total quantifier depth of $\Phi$ is $q = q_v + q_s$

\begin{definition}[Evaluation tree]
  \emph{Evaluation tree} for a formula $\Phi$ and graph $G$ is a rooted tree $F$ with $q + 1$ levels (where 0 is the level of the root) constructed as follows.
  For each level $\ell \in [q]$, each node $w$ of the $\ell$-th level corresponds to a particular choice of the first $\ell$
  variables. If $z_{\ell + 1}$ is an individual variable, then $w$ has $n$ children, representing all possible choices of vertices
  for $z_{\ell + 1}$. If $z_{\ell + 1}$ is a set variable, then $w$ has $2^n$ children, representing all possible choices of subsets of vertices for $z_{\ell + 1}$.
\end{definition}

The answer to whether $\Phi$ is satisfied on $G$ can be obtained by traversing $F$ bottom-up.
The value of a leaf is true if the valuation corresponding to this leaf satisfies $\Phi$.
The value of each internal node depends on the values of its children.
If a node $w$ corresponds to a universally quantified variable, it must have all children satisfied to be satisfied, if it corresponds to an existentially quantified variable, it must have at least one satisfied child to be satisfied. However, $F$ can be large.
But it turns out that its size can be reduced.

\begin{observation}
When a set assigned to a variable $z_i$ contains a vertex $u$ that is not assigned to any individual variable, then $u$ does not participate in any of the membership predicates. Hence, the evaluation of $\Psi$ is
independent of its presence in $z_i$.
\end{observation}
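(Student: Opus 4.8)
The plan is to argue directly from the syntax of quantifier-free $MSO_1$ formulas. First I would recall that $\Psi(z_1,\ldots,z_q)$ is a Boolean combination of atomic predicates, and that in the $MSO_1$ signature the only atoms available are equality $z_k = z_l$ and adjacency between two individual variables, together with membership $z_k \in z_l$ in which $z_k$ is an individual variable and $z_l$ is a set variable. The crucial syntactic point is that a set variable can occur in an atom \emph{only} on the right-hand side of such a membership predicate, and there it is always paired with an individual variable on the left.

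Next I would fix an assignment of all quantified variables together with a vertex $u$ that lies in the set $S$ assigned to the set variable $z_i$ but is not the value of any individual variable. I would then compare this assignment with the one obtained by replacing $S$ by $S \setminus \{u\}$ (equivalently, by adding $u$ to a set that does not contain it), leaving every other assignment untouched, and show that every atom of $\Psi$ evaluates identically under the two assignments. Equality and adjacency atoms contain no set variable, so they are unaffected. A membership atom $z_k \in z_l$ with $z_l \neq z_i$ is unaffected because neither the value of $z_k$ nor the set assigned to $z_l$ changes. Finally, for a membership atom $z_k \in z_i$ the individual variable $z_k$ is assigned some vertex $v$, and $v \neq u$ by the choice of $u$; hence $v \in S$ holds if and only if $v \in S \setminus \{u\}$ holds, so this atom is unaffected as well.

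Since every atom is invariant under the modification, a one-line structural induction on the Boolean combination shows that $\Psi$ takes the same truth value under the two assignments, which is exactly the assertion that the evaluation of $\Psi$ is independent of the presence of $u$ in $z_i$.

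I do not expect a genuine obstacle here. The only care required is to list the atoms of the $MSO_1$ signature exhaustively, so that no hidden occurrence of a set variable is overlooked, and to make explicit that the hypothesis ``$u$ is not assigned to any individual variable'' is precisely what excludes the single case in which toggling membership could change a predicate.
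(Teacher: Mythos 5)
Your argument is correct and is exactly the intended justification: the paper states this as an unproved observation, and your syntactic case analysis (set variables occur only on the right of membership atoms, paired with an individual variable on the left, so toggling $u \in z_i$ for a $u$ not assigned to any individual variable leaves every atom, hence $\Psi$, unchanged) is precisely the reasoning the reduction of evaluation trees relies on. The only implicit assumption, which matches the paper's use of configurations, is that the ``equivalence'' predicate is between individual variables only, so no atom compares two set variables directly.
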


The reduction goes inductively bottom-up. Each leaf of $F$ corresponds to an evaluation of $\Psi$ on
a subgraph $G'$ of $G$ induced by $q_v$ vertices assigned to individual variables.
The number of non-isomorphic graphs on $q_v$ vertices is bounded by $2^{\binom{q_v}{2}}$. The number of possible assignments of vertices of such a graph to individual variables is bounded by ${q_v}^{q_v}$, and the number of possible assignments of sets of vertices to set variables is bounded by $(2^{q_v})^{q_s}$.
In total, we get at most $2^{\mathcal{O}(q_vq)}$ non-isomorphic combinations of $G'$ and variable assignments.
These combinations define isomorphism classes of the leaves. Since leaves from the same isomorphism class have the same
value in the evaluation, the evaluation of node $w$ of level $q$ is independent of whether there are one or more leaves from
the same class among its children, so it suffices to keep only one leaf for each isomorphism class
that appears there. After this reduction, each node of level $q$ has at most $2^{\mathcal{O}(q_vq)}$ children. %
Now we can distinguish nodes of level $q - 1$ by sets of their children. There are at most $2^{2^{\mathcal{O}(q_vq)}}$ possible
sets of children. Each of these sets defines the isomorphism class of trees of height one. Now reduce isomorphic subtrees under
the same parent analogously as leaves. %
This reduction procedure can be repeated consecutively for levels $q - 2, \ldots, 1$.

\begin{definition}[Reduced evaluation tree]
  $F^{-}$ obtained from $F$ by the above reduction is called \emph{reduced evaluation tree} for $\Phi$ and $G$.
\end{definition}

\begin{observation}
  $F^{-}$ can be evaluated instead of $F$ to determine whether $G \vDash \Phi$.
\end{observation}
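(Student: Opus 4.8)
The plan is to argue by induction on the levels of the tree, running from the bottom level $q$ up to the root at level $0$, that each of the successive reduction steps described above preserves the Boolean value computed at every node that survives it. Since the root survives every reduction, its value in $F^{-}$ equals its value in $F$, and by the construction of $F$ that value is exactly whether $G \vDash \Phi$.

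For the first reduction (deleting, under each level-$(q-1)$ parent, all but one leaf per isomorphism class), the single point that needs an argument is that two leaves in the same isomorphism class evaluate $\Psi$ to the same Boolean value. This is where we use that $\Phi$ is an MSO$_1$ formula: $\Psi$ is quantifier-free, hence a Boolean combination of atoms, and every atom is an equality $z_i = z_j$ or adjacency $z_i \sim z_j$ of individual variables, or a membership $z_i \in z_j$ of an individual variable in a set variable. Each such atom is determined by the colored subgraph induced on the $\le q_v$ vertices assigned to the individual variables, together with the individual assignment and the restriction of the set assignment to those vertices — which is exactly the data defining the isomorphism class — once we invoke the earlier observation that a vertex lying in some set variable but in no individual variable occurs in no atom and hence cannot affect the value of $\Psi$. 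Consequently deleting a leaf isomorphic to a retained sibling does not change which Boolean values occur among the children of its level-$(q-1)$ parent; since the quantifier $z_q$ makes that parent evaluate to the conjunction (if $\forall$) or disjunction (if $\exists$) of its children's values, and $\wedge$ and $\vee$ are insensitive to repeated arguments, the parent's value is unchanged.

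For the inductive step, suppose that after reducing levels $q, q-1, \ldots, \ell+1$ every surviving node of level at least $\ell$ has the same value as in $F$, and moreover that any two surviving subtrees rooted at level $\ell$ whose roots have the same isomorphism class — that is, the same set of isomorphism classes of children — evaluate to the same value. The level-$\ell$ reduction deletes, under each level-$(\ell-1)$ parent, all but one subtree per isomorphism class; since isomorphic subtrees have equal value by hypothesis, the set of children-values seen by that parent is unchanged, and again the parent computes a $\forall$/$\exists$ over these values, so its value is preserved. One also checks that the reduced level-$\ell$ structure still satisfies the invariant — its subtrees are now classified by their reduced children sets, and equal classes still force equal value — so the induction continues down to level $1$, at which point the root at level $0$ has the same value in $F^{-}$ as in $F$.

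The only genuinely content-bearing step is the leaf case: showing that the chosen isomorphism invariant is \emph{complete} for the evaluation of $\Psi$, i.e.\ that no atom can separate two leaves in the same class. This is precisely where the restriction to MSO$_1$ matters — membership atoms pair an individual variable with a set variable and adjacency atoms use only individual variables, so finite local data suffices — and where the earlier observation on irrelevant set members is needed; for MSO$_2$ the analogous statement would fail. Everything above the leaves is bookkeeping resting on the symmetry and idempotence of $\wedge$ and $\vee$ and on propagating the invariant ``isomorphic root $\Rightarrow$ equal value'' up the tree.
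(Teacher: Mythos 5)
Your argument is correct and follows essentially the same route the paper takes (the paper leaves this as an observation, with the justification inlined in the preceding description of the reduction: same-class leaves evaluate $\Psi$ identically by the irrelevant-set-members observation, and $\wedge$/$\vee$ are insensitive to multiplicity, propagated upward by induction on levels). The only cosmetic difference is that you include vertex colors in the leaf isomorphism classes, which the paper reserves for \emph{partial} evaluation trees; this only refines the classes and does not affect soundness.
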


\begin{observation}
  The size of $F^{-}$ is bounded by
  \begin{equation*}
    \left.\kern-\nulldelimiterspace
    \begin{array}{@{}c@{}}
      2^{\cdot^{\cdot^{\cdot^{\scriptstyle 2^{\mathcal{O}(q_v q)}}}}}
    \end{array}
    \right\rbrace
    \text{\scriptsize $q + 1$.}
  \end{equation*}
\end{observation}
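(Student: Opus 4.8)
The plan is a downward induction on the levels of $F^{-}$ that makes the consecutive reduction described above precise. For $\ell \in \{0,1,\dots,q\}$, let $N_\ell$ denote the number of pairwise non-isomorphic subtrees of $F^{-}$ that are rooted at a node of level $\ell$. Because the reduction keeps, under every parent, exactly one representative of each isomorphism class of child subtrees, a node of level $\ell$ has at most $N_{\ell+1}$ children, and such a node is determined up to isomorphism by the \emph{set} of isomorphism types of its children; hence
\begin{equation*}
  N_\ell \;\le\; 2^{\,N_{\ell+1}} \qquad \text{for } 0 \le \ell < q .
\end{equation*}
The base case is $\ell = q$: a level-$q$ subtree is a single leaf, which is fixed by the isomorphism type of the subgraph $G'$ induced by the $q_v$ vertices assigned to individual variables, by the assignment of those vertices to the individual variables, and by the restriction of the set-variable assignment to $V(G')$; the counting carried out above gives $N_q \le 2^{\binom{q_v}{2}}\cdot q_v^{\,q_v}\cdot (2^{q_v})^{q_s} \le 2^{\mathcal{O}(q_v q)}$. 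Unrolling the recurrence from $\ell=q$ down to $\ell=0$ then shows that $N_0$ is bounded by a tower of twos of height $q+1$ whose apex is $2^{\mathcal{O}(q_v q)}$, which is exactly the claimed expression.

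What is left is to pass from the number of isomorphism types to the number of \emph{nodes}. Since $F^{-}$ has $q+1$ levels and a level-$j$ node has at most $N_{j+1}$ children, the number of nodes on level $\ell$ is at most $\prod_{j=1}^{\ell} N_j$, so $|F^{-}| \le \sum_{\ell=0}^{q}\prod_{j=1}^{\ell} N_j \le (q+1)\prod_{j=1}^{q} N_j$. As $N_1$ dominates each of $N_2,\dots,N_q$ by iterated exponentials, both this product and the extra factor $q+1$ are absorbed into the $\mathcal{O}(\cdot)$ sitting at the apex, so $|F^{-}|$ obeys the same tower bound of height $q+1$.

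I expect the only genuinely delicate steps to be (i) setting up the isomorphism classes cleanly enough that the reduction is consistent across the whole tree and the recurrence $N_\ell \le 2^{N_{\ell+1}}$ is justified (in particular, that the order and multiplicity of children may be ignored when evaluating $\Psi$), and (ii) this last bookkeeping, i.e.\ checking that converting type-counts into node-counts, together with the polynomial-size overhead, does not increase the height of the tower. Everything else is a direct iteration of $N_\ell \le 2^{N_{\ell+1}}$ across the $q+1$ quantifier levels, which is why the height of the tower equals one plus the quantifier depth.
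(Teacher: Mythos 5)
Your proposal is correct and follows essentially the same route as the paper, which leaves this as an observation immediately following its description of the bottom-up reduction: the leaf-level count $2^{\mathcal{O}(q_v q)}$ of configurations and the recurrence ``number of classes at level $\ell$ is at most $2^{(\text{number at level }\ell+1)}$'' are exactly the paper's argument, and your extra bookkeeping converting isomorphism-type counts into node counts (with the polynomial factors propagating up the tower into the $\mathcal{O}(\cdot)$ at the apex) is a harmless formalization of what the paper takes for granted.
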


Now we will show that given a cliquewidth decomposition $\mathcal{D}$ of $G$ of width $\omega$, we can construct $F^{-}$ without constructing $F$.

\begin{definition}[Partial evaluation tree]
  \emph{Partial evaluation tree} for a formula $\Phi$ and a node $t$ of the decomposition $\mathcal{D}$ of $G$ is an evaluation tree
  for $\Phi$ and $G_t$ where for each individual variable we additionally consider a choice of external value (not belonging to $G_t$) for this variable,
  and add an extra branch labeled by $ext$ representing this choice.
  That means for each $\ell$ such that $z_{\ell + 1}$ is an individual variable each node of level $\ell$ has $|V(G_t)| + 1$ children.
\end{definition}

\begin{definition}[Tree product]
  Let $\Phi(z_1, \ldots, z_\ell)$ be an $MSO_1$ formula in a prenex normal form, where $z_1, \ldots, z_\ell$ are free variables and
  $z_{\ell + 1}, \ldots, z_q$ are quantified variables. Let $F_1, F_2$ be partial evaluation trees for $\Phi(z_1, \ldots, z_\ell)$
  and $G_{t_1}$ and $G_{t_2}$ respectively,
  $u$ is the root of $F_1$ and $w$ is the root of $F_2$. \emph{Product} of $F_1, F_2$ is a partial evaluation tree $F \coloneq F_1 \otimes F_2$ defined recursively.
  \begin{itemize}
    \item If the height of $F_1, F_2$ is $0$ (that means $\ell = q$), then $F$ is a tree of height $0$ with root $(u, w)$.
    \item If the height of $F_1, F_2$ greater than $0$
    \begin{itemize}
      \item If $z_{\ell + 1}$ is a set variable, $F_1$ has children subtrees $F_1^1, \ldots, F_1^n$ labeled by sets
      $S_1, \ldots, S_n \subseteq V(G_{t_1})$, $F_2$ has children subtrees $F_2^1, \ldots, F_2^m$ labeled by sets
      $P_1, \ldots, P_m \subseteq V(G_{t_2})$, then $F$ is a tree with root $(u, v)$ which for each $i \in [n], j \in [m]$ has a child subtree $F_1^i \otimes F_2^j$ labeled by $S_i \cup P_j$.
      Observe that $F_1^i$ and $F_2^j$ are partial evaluation trees for $\Phi(z_1, \ldots, z_\ell, z_{\ell + 1})$ of height one smaller that $F_1, F_2$, so their product is well-defined.
      \item If $z_{\ell + 1}$ is an individual variable, $F_1$ has children subtrees $F_1^0, F_1^1, \ldots, F_1^n$ labeled by
      $ext, s_1, \ldots, s_n \in V(G_{t_1})$, $F_2$ has children subtrees $F_2^0, F_2^1, \ldots, F_2^m$ labeled by
      $ext, p_1, \ldots, p_m \in V(G_{t_2})$, then $F$ is a tree with root $(u, v)$ which has:
      for each $i \in [n]$ a child subtree $F_1^i \otimes F_2^0$ labeled by $s_i$, for each $j \in [m]$ a child subtree $F_1^0 \otimes F_2^j$ labeled by $p_j$
      and a child subtree $F_1^0 \otimes F_2^0$ labeled by $ext$.
    \end{itemize}
  \end{itemize}
\end{definition}

Observe that if $t = t_1 \oplus t_2$, then $F_{t_1} \otimes F_{t_2} = F_t$.
If $t = recolor(t', a \rightarrow b)$ or $t = addEdges(t', a, b)$, $F_t = F_{t'}$.
Therefore, for each $t \in V(\mathcal{D})$ we can construct a partial evaluation tree $F_t$ for $\Phi$ and $G_t$ by induction on the decomposition tree.
However, the size of such partial evaluation trees can be large.

We will show that we can reduce partial evaluation trees bottom up, similarly to regular evaluation trees.
By being able to perform this reduction after each step of the construction, we will keep the size of partial evaluation trees for the inner nodes of the decomposition small and make the construction efficient.

We start by defining the isomorphism classes of the leaves.
\begin{definition} [Configuration]
Let $\mathcal{D}$ be a cliquewidth decomposition of width $\omega$, $F_t$ be a partial evaluation tree for a formula $\Phi$, and $t \in V(\mathcal{D})$.
A~\emph{configuration} of a node $u$ of $F_t$ of level $\ell$ is a combination of:
\begin{itemize}
  \item a subgraph $G_t'$ of $G_t$ induced by at most $\ell$ vertices -- values of the individual variables among the first $\ell$ variables,
  \item a coloring of these vertices $\mu: V(G_t') \rightarrow [\omega]$,
  \item an assignment $\rho_I$ of values from $V(G_t') \cup \{ext\}$ to the individual variables among the first $\ell$ variables,
  \item an assignment $\rho_S$ of subsets of $V(G_t')$ to set variables among the first $\ell$ variables.
\end{itemize}  
\end{definition}
For the leaves of $F_t$ (nodes of level $q$) there are at most $2^{\binom{q_v}{2}} \cdot \omega^{q_v} \cdot (q_v + 1) ^ {q_v} \cdot (2^{q_v})^{q_s} = 2^{\mathcal{O}(q_v(q + \log \omega))}$
different configurations. These configurations define isomorphism classes of the leaves (subtrees of height $0$ in $F_t$).
For each $h \in [q]$, the isomorphism classes of the subtrees of height $h$ are defined by the sets of isomorphism classes of their children.

Performing the same reduction as for the regular evaluation tree, we get \emph{reduced partial evaluation tree}~$F_t^{-}$.
\begin{observation} \label{reduced-tree-size}
The size of $F_t^{-}$ is bounded by
\begin{align*} 
  \left.\kern-\nulldelimiterspace
  \begin{array}{@{}c@{}}
    2^{\cdot^{\cdot^{\cdot^{\scriptstyle 2^{\mathcal{O}(q_v(q + \log \omega))}}}}}
  \end{array}
  \right\rbrace
  \text{\scriptsize $q + 1$.}
\end{align*}  
\end{observation}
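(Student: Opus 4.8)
The plan is to run the same ``tower of exponentials'' induction used for the reduced \emph{evaluation} tree, now on the height of the subtrees of $F_t^{-}$; the only new ingredient is that a leaf configuration additionally records a colouring into $[\omega]$ and, for each individual variable, the possibility of the value $\mathit{ext}$, neither of which changes the shape of the argument.

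First I would fix notation for the tower: set $T_0 \coloneqq 2^{\mathcal{O}(q_v(q+\log\omega))}$ and $T_i \coloneqq 2^{T_{i-1}}$ for $i \ge 1$, so that $T_q$ is exactly the bound with $q+1$ twos stated in the observation. For $0 \le h \le q$ let $N_h$ be the number of pairwise non-isomorphic subtrees of height $h$ occurring in $F_t^{-}$, and I would show $N_h \le T_h$ by induction on $h$. The base case $h=0$ is the configuration count already carried out: a leaf class is pinned down by an induced graph on at most $q_v$ vertices, a colouring of those vertices, an assignment of the individual variables into $V(G_t')\cup\{\mathit{ext}\}$, and an assignment of subsets of $V(G_t')$ to the set variables, and $2^{\binom{q_v}{2}}\cdot\omega^{q_v}\cdot(q_v+1)^{q_v}\cdot(2^{q_v})^{q_s} = 2^{\mathcal{O}(q_v(q+\log\omega))} = T_0$; here I would invoke the earlier observation that vertices outside the range of the individual variables occur in no membership or adjacency predicate, so this finite data genuinely determines the value of $\Psi$. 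For the inductive step, after the reduction all children of a node are pairwise non-isomorphic, so a height-$h$ subtree is determined by the \emph{set} of isomorphism classes of its children, each of height $h-1$; hence $N_h \le 2^{N_{h-1}} \le 2^{T_{h-1}} = T_h$.

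Finally I would bound $|F_t^{-}|$ itself. Its root spans a subtree of height $q$, and since siblings are pairwise non-isomorphic a node at level $\ell < q$ has at most $N_{q-\ell-1} \le T_{q-\ell-1}$ children; a rooted tree with $q+1$ levels whose level-$\ell$ nodes each have at most $b_\ell$ children has at most $(q+1)\prod_{\ell=0}^{q-1}b_\ell$ nodes, so $|F_t^{-}| \le (q+1)\prod_{i=0}^{q-1} T_i \le (q+1)\,T_{q-1}^{\,q} \le T_{q-1}^{\,q+1} = 2^{(q+1)T_{q-2}} \le 2^{2^{T_{q-2}}} = T_q$, where the last two inequalities use $q+1 \le T_{q-1}$ and $(q+1)T_{q-2} \le 2^{T_{q-2}}$, both of which hold comfortably once the constant hidden in $T_0$ is chosen generously (the degenerate cases $q \le 1$ being trivial).

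The only real work, and hence the main obstacle, is the bookkeeping at the base case: one has to check that folding the colouring and the $\mathit{ext}$-branch into the notion of configuration keeps the leaf count at $2^{\mathcal{O}(q_v(q+\log\omega))}$ rather than inflating the top exponent, and that the polynomial-in-$q$ overhead from counting nodes (the factor $q+1$ and the $q$-th power) really is absorbed when one collapses $\prod_{i<q}T_i$ back into a single tower of height $q+1$; both become routine once the constant in $T_0$'s exponent is fixed large enough.
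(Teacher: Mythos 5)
Your proposal is correct and follows essentially the same route as the paper: the configuration count at the leaves (induced subgraph, colouring into $[\omega]$, individual-variable assignments including $\mathit{ext}$, and set-variable assignments, giving $2^{\mathcal{O}(q_v(q+\log\omega))}$ classes) followed by the level-by-level bound $N_h \le 2^{N_{h-1}}$ on isomorphism classes of subtrees, which is exactly the reduction procedure the paper describes before stating the observation. Your extra step of aggregating the per-level branching bounds into a bound on the total node count and absorbing the polynomial-in-$q$ factors back into the tower is routine bookkeeping that the paper leaves implicit.
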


\begin{observation}
  If we prune in $F_t^{-}$ all branches containing external choices, we get an evaluation tree isomorphic to the full evaluation tree for $\Phi$ and $G_t$.
\end{observation}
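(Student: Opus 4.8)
The plan is to show that pruning external branches \emph{commutes with reduction}. Write $E_t$ for the (unreduced) evaluation tree for $\Phi$ and $G_t$. By the definition of a partial evaluation tree, $F_t$ with every $ext$-labeled edge and the subtree hanging below it deleted is exactly $E_t$; hence it suffices to prove that pruning the external subtrees from $F_t^{-}$ yields $E_t^{-}$, the reduced evaluation tree for $\Phi$ and $G_t$. Since a tree may be replaced by its reduction without changing the evaluation (already observed for ordinary evaluation trees), this gives the statement — with the caveat that ``the full evaluation tree'' should here be read up to this evaluation-preserving reduction, as the two are interchangeable only for deciding $G_t\vDash\Phi$, which is all that is needed downstream. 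So the task reduces to understanding how the bottom-up reduction of $F_t$ interacts with the partition of its nodes into \emph{external} ones (those on a root-to-node path using an $ext$-edge) and non-external ones.

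The first step is the separation claim: \emph{the reduction never identifies an external subtree with a non-external one}. This is what the configuration bookkeeping is for. The configuration of a level-$\ell$ node records the assignment $\rho_I$ of the first $\ell$ individual variables to $V(G_t')\cup\{ext\}$, so ``some individual variable among the first $\ell$ was set to $ext$'' is an isomorphism invariant at level $\ell$. As isomorphism classes of height-$h$ subtrees are, by definition, the sets of isomorphism classes of their children, a short induction on height shows that the class of a subtree rooted at an external node is assembled entirely from leaf configurations that carry an $ext$ marking, hence can never coincide with the class of a non-external subtree of the same level. Since the reduction proceeds level by level from the leaves, it therefore merges external subtrees only with external subtrees and non-external only with non-external, so the external part of $F_t^{-}$ is well defined and pruning it is unambiguous.

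It remains to check that the non-external part of $F_t^{-}$ equals $E_t^{-}$. I would do this by induction on subtree height, comparing, for a non-external node $u$ of $F_t$ and the corresponding node $u'$ of $E_t$ reached by the same ($ext$-free) path, the class of the subtree $T_u$ computed inside $F_t$ with the class of $T_{u'}$ computed inside $E_t$. They differ only in that $\mathrm{class}_{F_t}(T_u)$ additionally contains the classes of $u$'s external children; the crucial point is that those extra classes are \emph{redundant} — each is determined by the configuration of $u$ (which is recoverable from any leaf below $u$, hence from $\mathrm{class}_{E_t}(T_{u'})$) together with the fixed graph $G_t$ — so the two classes determine one another. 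Hence two non-external siblings get merged by the $F_t$-reduction exactly when they get merged by the $E_t$-reduction; the induction goes through, and after pruning, the non-external part of $F_t^{-}$ is isomorphic to $E_t^{-}$.

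The main obstacle is precisely this last commutation step: it is intuitively clear that the external branches just carry along redundant bookkeeping, but making this rigorous requires the observation that a non-external internal node's class, although formally enriched by the classes of its $ext$-children, is nonetheless in bijection with its ``honest'' class in $E_t$; without this one could not exclude that the reduction keeps apart two non-external subtrees that $E_t^{-}$ would merge, or conversely. Everything else — the definition-chasing that $F_t$ minus its external branches is $E_t$, and the configuration-invariant argument behind the separation claim — is routine.
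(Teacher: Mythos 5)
The paper records this as an unproved observation, so there is no official proof to match against; your overall plan (a separation claim showing the reduction never merges an external subtree with a non-external one, followed by an induction on height relating the two reductions, with ``isomorphic'' read as the evaluation-preserving equivalence $\sim$) is the natural one. Your first two steps are essentially fine, modulo one imprecision in the separation argument: a \emph{non-external} subtree rooted at level $\ell$ also contains leaves carrying $ext$ markings (for variables of index greater than $\ell$), so ``assembled from leaf configurations that carry an $ext$ marking'' does not separate; the correct invariant is that every leaf below the $ext$-child of a level-$\ell$ node has $z_{\ell+1}\mapsto ext$, while every leaf below a non-$ext$ sibling has $z_{\ell+1}$ mapped to a vertex, and since configurations record $\rho_I$ this forces disjoint leaf-class sets and hence, by induction on height, distinct subtree classes.

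The genuine gap is in your third step. Your ``crucial point'' is that the class of a node's external child is determined by the configuration of that node together with $G_t$, and this is false. Take $G_t=\{a,b\}$ and $\Phi=\exists X\exists x\exists y\,\psi$ with $X$ a set variable and $x,y$ individual. The level-$1$ siblings $u=(X=\{a\})$ and $v=(X=\emptyset)$ have identical configurations (no individual variables chosen yet, so $G_t'=\emptyset$ and $\rho_S(X)=\emptyset$), but the $ext$-child of $u$ has a leaf descendant $(X=\{a\},x=ext,y=a)$ whose configuration records $y\in X$, while no leaf below the $ext$-child of $v$ records a vertex in $X$; the two $ext$-children therefore lie in different classes. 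So the bijection you need between $\mathrm{class}_{F_t}(T_u)$ and $\mathrm{class}_{E_t}(T_{u'})$ is not established, and with it the claim that the two reductions induce \emph{exactly} the same partition on non-external siblings. Fortunately, that two-sided commutation is more than the observation requires: since ``isomorphic'' here means $\sim$, it suffices to show $\mathrm{prune}(F_t^-)\sim E_t=\mathrm{prune}(F_t)$, i.e., that every merge the $F_t$-reduction performs on non-external siblings is a valid $E_t$-merge --- you do not need the converse, so you never need to argue that $E_t$-equal siblings have equal $ext$-children. This one direction follows cleanly by induction on height: for non-external leaves the partial configuration coincides with the ordinary one; for internal nodes, equality of the sets of children $F$-classes plus your separation claim gives equality of the non-external parts, and the induction hypothesis (each non-external $F$-class determines the $E$-class of its pruned subtree) transfers this to equality of the sets of children $E$-classes. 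I would restate your step 3 in this weaker, one-directional form; as written, the proof does not go through.
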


In order to make the reduction procedure deterministic, for each isomorphism class of trees of height $h$, we can fix one representative.
Then $reduce(F)$ replaces $F$ by the representative of its isomorphism class.

It remains to be proven that we can perform this reduction after each step of the construction of a partial evaluation tree for $G$.

\begin{restatable}{lemma}{lemCombiningCor} \label{combining-cor}
  Let $t_1, t_2 \in V(\mathcal{D})$ such that $G_{t_1}, G_{t_2}$ are disjoint, $F_1 \sim F_1'$ be partial evaluation trees for $\Phi$ and $t_1$,
  $F_2 \sim F_2' $ be partial evaluation trees for $\Phi$ and $t_2$.
  Then $F_1 \otimes F_2 \sim F_1' \otimes F_2'$.
\end{restatable}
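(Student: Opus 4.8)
The plan is to prove \Cref{combining-cor} by induction on the common height $h$ of the trees, where $h$ is the number of quantified variables of $\Phi$ that are still unbound at the roots of $F_1$ and $F_2$. Recall that $F \sim F'$ means that $F$ and $F'$ fall into the same isomorphism class: at height $0$ this says their leaves carry the same configuration, and at height $h>0$ it says that the \emph{set} of isomorphism classes of their children coincides. The whole argument rests on showing that the tree product $\otimes$ is compatible with exactly this data, and the disjointness of $G_{t_1}$ and $G_{t_2}$ enters only to make a leaf-level ``disjoint union of configurations'' operation well defined.

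First I would handle the base case $h = 0$. Here $F_1 \otimes F_2$ is the single leaf $(u,w)$, and I would observe that, since $V(G_{t_1})$ and $V(G_{t_2})$ are disjoint, its configuration is obtained from those of $u$ and $w$ by a fixed rule: take the disjoint union of the two induced colored subgraphs, the union of the set-variable assignments, and for each individual variable the unique non-external value among the two sides (or $ext$ if both sides are external). This rule is well defined precisely because the product never pairs a non-external branch on one side with a non-external branch on the other, so at most one side ever contributes a real vertex to any individual variable. Consequently the configuration of $F_1 \otimes F_2$ depends only on the configurations of $F_1$ and $F_2$, and $F_1 \sim F_1'$, $F_2 \sim F_2'$ immediately yield $F_1 \otimes F_2 \sim F_1' \otimes F_2'$.

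For the inductive step (height $h > 0$, next quantified variable $z_{\ell+1}$) I would assume the statement for height $h-1$; in particular the class $[C \otimes D]$ of a product of two height-$(h{-}1)$ subtrees then depends only on the classes $[C]$ and $[D]$, which lets me define an induced operation $\boxtimes$ on isomorphism classes. If $z_{\ell+1}$ is a set variable, the children of $F_1 \otimes F_2$ are exactly the trees $C \otimes D$ with $C$ a child of $F_1$ and $D$ a child of $F_2$, so the set of children-classes of $F_1 \otimes F_2$ equals $\{[C]\boxtimes[D] : C \text{ a child of } F_1,\ D \text{ a child of } F_2\}$, which depends only on the sets of children-classes of $F_1$ and of $F_2$; since $F_1 \sim F_1'$ and $F_2 \sim F_2'$ these sets agree with the primed ones, giving $F_1 \otimes F_2 \sim F_1' \otimes F_2'$. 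If $z_{\ell+1}$ is an individual variable, the children of $F_1 \otimes F_2$ are $C \otimes F_2^0$ for non-external children $C$ of $F_1$, $F_1^0 \otimes D$ for non-external children $D$ of $F_2$, and $F_1^0 \otimes F_2^0$, so I additionally need that ``being the external branch'' is itself an invariant of the isomorphism class. This I would argue by noting that inside $F_i^0$ the variable $z_{\ell+1}$ is assigned $ext$ at every leaf, whereas inside any non-external child it is assigned a vertex of $G_{t_i}$ at every leaf; hence the class of $F_i^0$ is the unique member of the children-classes of $F_i$ all of whose leaf-configurations assign $ext$ to $z_{\ell+1}$, and so the split into the external class and the non-external classes is determined by the set of children-classes alone. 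The same computation as in the set-variable case then closes the induction.

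I expect the individual-variable case, and specifically this bookkeeping around the $ext$-branch, to be the main obstacle: the reduction deliberately discards edge labels and remembers only the set of children-classes, while the definition of $\otimes$ singles out the external branch, so one must be careful to show that this distinguished branch can always be recovered from the reduced data. Everything else is a routine structural induction, with the disjointness of $G_{t_1}$ and $G_{t_2}$ used only at the leaves, to guarantee that the componentwise combination of configurations (disjoint union of subgraphs, union of set assignments, non-colliding individual-variable assignments) is well defined.
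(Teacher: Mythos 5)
Your proposal is correct and follows essentially the same route as the paper: a structural induction on the height of the (sub)trees, with the base case handled by the well-defined disjoint union of leaf configurations and the inductive step reduced to showing that the product's pairing of children is an invariant of the isomorphism classes. The paper packages the individual-variable bookkeeping via the notion of \emph{compatible} root configurations (and the observation that isomorphic subtrees have identical root configurations), which is the same fact you use when you recover the external branch from the class data.
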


We start by a few observations.
\begin{observation}
  Let $F_1, F_2$ be partial evaluation trees for $\Phi$,
  $u, v$ be nodes of level $\ell > 0$ of $F_1$ and $F_2$, respectively.
  If the configurations of $u$ and $v$ are the same, then the configurations of their parents are the same.
\end{observation}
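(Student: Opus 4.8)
The plan is to exhibit a deterministic \emph{truncation} operation that turns every level-$\ell$ configuration into a level-$(\ell-1)$ configuration, and then to check that the configuration of the parent of any node $w$ at level $\ell$ is exactly the truncation of the configuration of $w$. Granting this, the observation is immediate: since $u$ and $v$ have the same level-$\ell$ configuration, applying the same (deterministic) truncation to both yields one and the same level-$(\ell-1)$ configuration, which is the configuration of each of their parents.

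To set up the truncation, fix a level-$\ell$ configuration $(G', \mu, \rho_I, \rho_S)$, where $G'$ is an induced subgraph of $G_t$, $\mu$ colours $V(G')$, $\rho_I$ sends each individual variable among $z_1, \ldots, z_\ell$ to a value in $V(G') \cup \{ext\}$, and $\rho_S$ sends each set variable among $z_1, \ldots, z_\ell$ to a subset of $V(G')$. Let $W \subseteq V(G')$ be the set of vertices that $\rho_I$ assigns to individual variables among $z_1, \ldots, z_{\ell-1}$. I would then define the truncated configuration to be $\bigl(G'[W],\ \mu|_W,\ \rho_I',\ \rho_S'\bigr)$, where $\rho_I'$ is $\rho_I$ restricted to the individual variables among $z_1, \ldots, z_{\ell-1}$ (its values lie in $W \cup \{ext\}$ by the choice of $W$) and $\rho_S'$ sends each set variable $z_j$ with $j < \ell$ to $\rho_S(z_j) \cap W$. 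Each of these ingredients is a function of the input configuration alone, so the operation is well defined.

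It then remains to verify that, for a node $w$ at level $\ell$ of a partial evaluation tree, the configuration of the parent of $w$ equals the truncation of $w$'s configuration; this is an unfolding of the definition of configuration along the root-to-$w$ branch, which I would organise as a case split on the $\ell$-th variable $z_\ell$. If $z_\ell$ is a set variable, then the individual variables among the first $\ell$ coincide with those among the first $\ell-1$, so $W = V(G')$, the subgraph, colouring, and $\rho_I$ of the parent agree with those of $w$, and the parent merely drops the $z_\ell$-entry of $\rho_S$, matching the truncation. If $z_\ell$ is an individual variable, the vertex set recorded by the parent is $W = V(G')$ when $\rho_I(z_\ell) = ext$ or $\rho_I(z_\ell)$ already occurs as the value of an earlier individual variable, and $W = V(G') \setminus \{\rho_I(z_\ell)\}$ otherwise; either way, the parent's subgraph, colouring, and individual-variable assignment are the corresponding restrictions to $W$, again matching the truncation.

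The one point worth writing out is the set-variable bookkeeping, and this is where the only real care is needed: a configuration does not store the full set assigned to $z_j$ along the branch but only its intersection with the vertices pinned down by the individual variables seen so far, so the parent records $S_j \cap W$ while $w$ records $S_j \cap V(G')$ for the underlying set $S_j$. Since $W \subseteq V(G')$ we have $S_j \cap W = (S_j \cap V(G')) \cap W$, so the parent's entry is obtained from $w$'s entry by intersecting with $W$, exactly as the truncation prescribes. With this identity in place, the remaining checks are routine consequences of the definition of configuration, and I do not expect a genuine obstacle — just the discipline of not conflating a recorded set with the set actually assigned along the branch, and of handling the ``$z_\ell$ introduces a fresh vertex'' subcase correctly.
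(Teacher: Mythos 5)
Your proof is correct. The paper states this observation without proof (treating it as immediate from the definition of configuration), and your argument — that the parent's configuration is obtained from the child's by a deterministic, isomorphism-equivariant truncation (restrict to the vertices named by the first $\ell-1$ individual variables, restrict $\mu$ and $\rho_I$, intersect the $\rho_S$-entries with that vertex set, and drop the $z_\ell$-entry) — is exactly the intended unfolding of the definitions, including the correct handling of the case where $z_\ell$ is an individual variable whose value is external or repeats an earlier one.
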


\begin{observation} \label{root-obs}
  Let $F_1, F_2$ be partial evaluation trees for $\Phi$,
  $T_1, T_2$ be subtrees of $F_1$ and $F_2$ rooted at the same level.
  Then if $T_1 \sim T_2$, the configurations of their roots are the same.
\end{observation}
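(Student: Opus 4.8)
The plan is to prove Observation~\ref{root-obs} by induction on the common height $h$ of the two subtrees $T_1$ and $T_2$, with the inductive step resting entirely on the immediately preceding observation (equal configurations of two nodes force equal configurations of their parents). The statement to establish is that the equivalence $\sim$ on subtrees, although defined at positive height purely through the \emph{set of isomorphism classes of the children}, nonetheless determines the configuration of the root; this is exactly what makes it later legitimate to read off a root configuration from an isomorphism class.

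For the base case $h = 0$, the roots of $T_1$ and $T_2$ are leaves. By construction the isomorphism class of a leaf is exactly its configuration, so $T_1 \sim T_2$ means precisely that the two leaves carry identical configurations, and there is nothing further to prove.

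For the inductive step, suppose $h > 0$ and that both roots sit at a common level $\ell$. By the definition of $\sim$ on subtrees of positive height, $T_1 \sim T_2$ says that the set of isomorphism classes realized by the children of $T_1$'s root equals the set realized by the children of $T_2$'s root. Since $h > 0$, the root of $T_1$ has at least one child subtree $T_1^i$; its isomorphism class must therefore appear among the children of $T_2$'s root, so there is a child $T_2^j$ of $T_2$'s root with $T_1^i \sim T_2^j$. Both $T_1^i$ and $T_2^j$ have height $h - 1$ and are rooted at level $\ell + 1$, so the induction hypothesis applies and tells us that the roots of $T_1^i$ and $T_2^j$ carry the same configuration. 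Applying the preceding observation to these two nodes --- each of level $\ell + 1 > 0$ --- equal child configurations force equal configurations at their parents, and those parents are precisely the roots of $T_1$ and $T_2$. This yields equality of the root configurations at level $\ell$ and closes the induction.

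The statement is fairly easy, and the only thing genuinely to watch is the level bookkeeping: one must check that descending to the chosen children preserves the induction hypothesis --- namely that $T_1^i$ and $T_2^j$ remain rooted at a common level (here $\ell+1$) --- and that the single-step observation is invoked only at levels strictly above $0$, which is automatic since $\ell + 1 \geq 1$. Once these are in place, the argument is a routine induction whose entire substance is delegated to the configuration-propagation observation stated just above; no properties of the quantifier-free part $\Psi$ or of the tree product are needed.
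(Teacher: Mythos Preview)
Your proof is correct. The paper states this as an observation without proof, so there is no argument to compare against; your induction on the height, using the immediately preceding observation to propagate equality of configurations from a matched pair of children up to their parents, is exactly the natural way to justify it and fills in the omitted details cleanly.
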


We say two configurations of nodes of level $\ell$ are \emph{compatible} if for each $i \leq \ell$,
where $z_i$ is individual variable, the value assigned to $z_i$ is external in at least one of these configurations.
Observe that while computing the tree product $F_1 \otimes F_2$ recursively, we combine exactly these subtrees of $F_1, F_2$ rooted at the same level whose root configurations are compatible.

\begin{definition}[Union of configurations]
  Let $(G_1, \mu_1, \rho_I^1, \rho_S^1), (G_2, \mu_2, \rho_I^2, \rho_S^2)$
  be a pair of compatible configurations of nodes of level $\ell$ of partial evaluation trees for the same formula $\Phi$.
  Their \emph{union} is a configuration $(G, \mu, \rho_I, \rho_S)$, where:
  \begin{itemize}
    \item $G = G_1 \uplus G_2$, i.e. $G$ is a disjoint union of $G_1$ and $G_2$,
    \item $\mu(v) = \begin{cases}
      \mu_1(v) & \text{ if } v \in V(G_1), \\
      \mu_2(v) & \text{ otherwise,}
    \end{cases}$
    \item $\rho_I = \begin{cases}
      \rho_I^1(z_i) & \text{ if } \rho_I^1(z_i) \neq ext, \\
      \rho_I^2(z_i) & \text{ otherwise,}
    \end{cases}$
    \item $\rho_S(z_i) = \rho_S^1(z_i) \uplus \rho_S^2(z_i)$.
  \end{itemize}
\end{definition}

\begin{observation}\label{product-obs}
  Let 
  $F_1, F_2$ be partial evaluation trees for $\Phi$ and $t_1, t_2$, respectively
  and $T_1, T_2$ be subtrees of $F_1, F_2$ rooted at the same level.
  Then $T_1 \otimes T_2$ is a subtree of $F_1 \otimes F_2$ if and only if the
  root configurations of $T_1$ and $T_2$ are compatible.
  Moreover, if $G_{t_1}$ and $G_{t_2}$ are disjoint, the root configuration of $T_1 \otimes T_2$ in
  $F_1 \otimes F_2$ is a union of the root configurations of $T_1$ and $T_2$.
\end{observation}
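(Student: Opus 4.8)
The plan is to prove both claims simultaneously by induction on the level $\ell$ at which $T_1$ and $T_2$ are rooted, following how the recursive definition of $\otimes$ descends from the roots of $F_1$ and $F_2$. By the very shape of that recursion, every subtree of $F_1 \otimes F_2$ rooted at level $\ell$ has the form $A_1 \otimes A_2$ for subtrees $A_1, A_2$ rooted at level $\ell$ in $F_1, F_2$; so the first claim amounts to characterizing which pairs $(A_1, A_2)$ survive the descent, and the second to describing the resulting root configuration. In the base case $\ell = 0$ the roots of $F_1, F_2$ are paired to form the root of $F_1 \otimes F_2$ itself; there are no individual variables among the first $0$ variables, so compatibility holds vacuously, and the union of the two empty configurations is empty, matching the root configuration.

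For the inductive step I would pass from level $\ell - 1$ to level $\ell$ by splitting on the quantifier type of $z_\ell$, using the definition of $\otimes$. If $z_\ell$ is a \emph{set variable}, the product rule combines \emph{every} child of $T_1$'s level-$(\ell-1)$ ancestor with \emph{every} child of $T_2$'s ancestor, and a set-variable assignment contributes nothing to the compatibility condition; hence $T_1 \otimes T_2$ appears iff the level-$(\ell-1)$ ancestors were paired, which by induction is iff their configurations were compatible, iff the level-$\ell$ configurations are compatible. If $z_\ell$ is an \emph{individual variable}, the rule pairs a child of $T_1$'s ancestor with a child of $T_2$'s ancestor \emph{only} when at least one child is the external branch (the cases $F_1^i \otimes F_2^0$, $F_1^0 \otimes F_2^j$, $F_1^0 \otimes F_2^0$) and never pairs two genuine-vertex branches $F_1^i \otimes F_2^j$ with $i,j \ge 1$. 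This is exactly the clause that compatibility imposes on $z_\ell$, so $T_1 \otimes T_2$ appears iff the level-$\ell$ configurations are compatible, completing the first claim.

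For the second claim, still in the inductive step and now assuming $G_{t_1}, G_{t_2}$ disjoint, I would verify that the label attached on descending to level $\ell$ equals the union of the two labels. For a set variable the new child is labeled $S \cup P$ with $S \subseteq V(G_{t_1})$ and $P \subseteq V(G_{t_2})$, and disjointness makes this the disjoint union $\rho_S^1(z_\ell) \uplus \rho_S^2(z_\ell)$ demanded by the definition of the union of configurations; for an individual variable the label is the unique non-external value among the two, or $ext$ if both are external, matching the $\rho_I$ clause. The induced-subgraph component is handled by noting that at a node $t = t_1 \oplus t_2$ the graph $G_t$ carries no edges across the cut, so the subgraph induced by the chosen individual-variable values is genuinely $G_1 \uplus G_2$ with the coloring inherited side by side. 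Accumulating these identities over all levels up to $\ell$ shows the root configuration of $T_1 \otimes T_2$ is the union of those of $T_1$ and $T_2$.

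The main obstacle I anticipate is the individual-variable case: one must check that the deliberately asymmetric list of children produced by $\otimes$ corresponds precisely and exhaustively to the compatibility clause, in particular that the omission of the pairings $F_1^i \otimes F_2^j$ with $i,j \ge 1$ is exactly what enforces ``external in at least one side.'' The remainder is bookkeeping, with disjointness of $G_{t_1}$ and $G_{t_2}$ invoked in just two places: to turn $S \cup P$ into a disjoint union of set-variable assignments, and to guarantee the induced subgraph at the $\oplus$-node is edge-free across the cut so that it splits as $G_1 \uplus G_2$.
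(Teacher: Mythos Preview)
The paper states this as an observation and gives no proof; it treats both parts as immediate from the recursive definition of the tree product together with the definitions of compatibility and union of configurations. Your induction on the level $\ell$ is exactly the natural way to make that immediacy explicit, and each case you describe tracks the corresponding clause of the $\otimes$ definition correctly: set-variable children are combined in all pairs (and compatibility ignores set variables), while individual-variable children are combined only when at least one side is the $ext$ branch, which is precisely the compatibility clause for $z_\ell$.

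One minor remark: your appeal to ``a node $t = t_1 \oplus t_2$'' is not strictly needed, since the statement only assumes $G_{t_1}$ and $G_{t_2}$ are disjoint. What you actually use is that $F_1 \otimes F_2$ is a partial evaluation tree for the disjoint union $G_{t_1} \uplus G_{t_2}$ (which the paper notes just before the reduction discussion), so the subgraph induced by the individual-variable values has no cross edges simply because the ambient graph has none. With that phrasing your argument goes through without reference to any particular decomposition node.
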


\cref{combining-cor} is a direct application of the following lemma, with $\ell = 0$.

\begin{lemma}[Combining lemma] \label{combining-lemma}
  Let 
  $t_1, t_2 \in V(\mathcal{D})$ such that $G_{t_1}, G_{t_2}$ are disjoint, $F_1, F_1'$ be partial evaluation trees for $\Phi$ and $t_1$,
  $F_2, F_2' $ be partial evaluation trees for $\Phi$ and $t_2$, $T_1 \sim T_1'$ be subtrees of $F_1, F_1'$, respectively, rooted at level $\ell$,
  $T_2 \sim T_2'$ be subtrees of $F_2, F_2'$, respectively, rooted at level $\ell$.
  Then, if the root configurations of $T_1$ and $T_2$ are compatible, $T_1 \otimes T_2$ is a subtree of $F_1 \otimes F_2$, $T_1' \otimes T_2'$ is a subtree of $F_1' \otimes F_2'$ 
  and moreover ${T_1 \otimes T_2 \sim T_1' \otimes T_2'}$.

\end{lemma}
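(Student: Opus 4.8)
The plan is to prove \cref{combining-lemma} by induction on the common height $q-\ell$ of the subtrees $T_1,T_1',T_2,T_2'$, i.e., by downward induction on the level $\ell$, starting from the leaves ($\ell=q$); the specialization to $\ell=0$ and $T_i=F_i$ then recovers \cref{combining-cor}. Recall that the isomorphism class of a subtree of height at least $1$ is determined by the configuration of its root together with the \emph{set} of isomorphism classes of its children, that the class of a leaf is just its configuration, and that compatibility of two configurations depends only on the configurations themselves.

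In the base case $\ell=q$ all four subtrees are leaves, so $T_1\sim T_1'$ and $T_2\sim T_2'$ merely say that the corresponding configurations coincide. Then $T_1'$ and $T_2'$ are compatible as well, so by \Cref{product-obs} the leaves $T_1\otimes T_2$ and $T_1'\otimes T_2'$ exist as subtrees of the respective products and their configurations are the unions of the matching pairs; hence these configurations coincide and $T_1\otimes T_2\sim T_1'\otimes T_2'$.

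For the inductive step I would first use \Cref{root-obs} to see that $T_1,T_1'$ have the same root configuration and $T_2,T_2'$ have the same root configuration; compatibility of the original pair therefore transfers to $T_1',T_2'$, and \Cref{product-obs} gives that $T_1\otimes T_2$ and $T_1'\otimes T_2'$ are subtrees of $F_1\otimes F_2$ and $F_1'\otimes F_2'$ with root configurations equal to the same union. It remains to match the sets of children's isomorphism classes. The key point, supplied by the induction hypothesis, is that the class of a product $A\otimes B$ of a compatible pair rooted at level $\ell+1$ is a function $g$ only of the pair $([A],[B])$. If $z_{\ell+1}$ is a set variable, the children of $T_1\otimes T_2$ are exactly the products $T_1^i\otimes T_2^j$ over all children $T_1^i$ of $T_1$ and $T_2^j$ of $T_2$ (all such pairs are compatible, since a set variable does not affect compatibility), so their classes form $g\bigl(\{[T_1^i]\}\times\{[T_2^j]\}\bigr)$; since $T_1\sim T_1'$ gives $\{[T_1^i]\}=\{[(T_1')^{i'}]\}$ and symmetrically for the second factor, this set equals the one for $T_1'\otimes T_2'$. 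If $z_{\ell+1}$ is an individual variable, I would separate the external branch: on each side there is exactly one child whose configuration assigns $ext$ to $z_{\ell+1}$, and by \Cref{root-obs} it lies in an isomorphism class distinct from every non-external child; hence $T_1\sim T_1'$ forces $T_1^0\sim(T_1')^0$ and $\{[T_1^i]:i\ge 1\}=\{[(T_1')^{i'}]:i'\ge 1\}$, and likewise for $T_2$. The children of $T_1\otimes T_2$ split into the single external child $T_1^0\otimes T_2^0$, the children $T_1^i\otimes T_2^0$ for $i\ge 1$, and the children $T_1^0\otimes T_2^j$ for $j\ge 1$; applying the induction hypothesis to each of these three groups (each a set of products of compatible pairs with one factor held fixed) shows that the resulting sets of classes coincide with those coming from $T_1'\otimes T_2'$, and their union is the full set of children classes on each side. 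Together with the matching root configurations, this yields $T_1\otimes T_2\sim T_1'\otimes T_2'$.

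The main obstacle I anticipate is the individual-variable case of the inductive step: one must argue that the $ext$-branch on each side is canonically pinned down by its configuration, so that the witnessing isomorphisms $T_1\sim T_1'$ and $T_2\sim T_2'$ genuinely match it up and the external class can be ``subtracted'' cleanly, and one must verify that every product $A\otimes B$ arising in the expansion of $T_1\otimes T_2$ is between a compatible pair rooted at level $\ell+1$ — that is, compatibility is inherited from the parents and completed at level $\ell+1$ because one of the two combined children always selects $ext$ — so that the induction hypothesis applies verbatim. The set-variable case and the manipulation of unions of configurations are routine by comparison.
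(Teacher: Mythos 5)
Your proof is correct and follows essentially the same route as the paper's: induction on the height of the subtrees, using \cref{root-obs} to transfer compatibility to the primed pair and \cref{product-obs} to identify the product's root configuration as the union, then matching children via the induction hypothesis. The only difference is presentational — the paper matches children one pair at a time (each child of $T_1\otimes T_2$ is $T_1^i\otimes T_2^j$ for a compatible pair, which has isomorphic compatible counterparts in $T_1',T_2'$), which handles set and individual variables uniformly and makes your case split and the worry about pinning down the $ext$-branch unnecessary, since that branch is already distinguished by its root configuration.
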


\begin{proof}
  Assume the root configurations of $T_1$ and $T_2$ are compatible.
  By \cref{root-obs}, the root configurations of $T_1$ and $T_1'$ are the same, similarly the root configurations of $T_2$ and $T_2'$,
  so the root configurations of $T_1'$ and $T_2'$ are also compatible.
  Then, by \cref{product-obs} $T_1 \otimes T_2$ is a subtree of $F_1 \otimes F_2$ and $T_1' \otimes T_2'$ is a subtree of $F_1' \otimes F_2'$.
  
  It remains to prove that ${T_1 \otimes T_2 \sim T_1' \otimes T_2'}$. We do it by induction on the height of the subtrees.
  As $G_{t_1}$ and $G_{t_2}$ are disjoint, by \cref{product-obs} the root configurations of $T_1 \otimes T_2$ and $T_1' \otimes T_2'$
  are the same and equal to the union of the root configuration of $T_1$ and $T_2$.
  For subtrees of height $0$, that means they are isomorphic.
  Assume now that the height is greater than $0$.
  It suffices to prove that for each child subtree of $T_1 \otimes T_2$ in $F_1 \otimes F_2$, there exists an isomorphic child subtree of $T_1' \otimes T_2'$ in $F_1' \otimes F_2'$.
  Proof of the converse implication will be symmetric.
  For each pair of children subtrees $T_1^i, T_2^j$ of $T_1$ and $T_2$, respectively,
  there exists a pair of children subtrees $T_1'^i, T_2'^j$ of $T_1'$ and $T_2'$, respectively,
  such that $T_1'^i \sim T_1^i$ and $T_2'^j \sim T_2^j$.
  If $T_1^i \otimes T_2^j$ is a child subtree of $T_1 \otimes T_2$ in $F_1 \otimes F_2$, the configurations of $T_1^i$ and $T_2^j$ are compatible.
  Then, by \cref{root-obs} the configurations of $T_1'^i$ and $T_2'^j$ are also compatible, so $T_1'^i \otimes T_2'^j$ is a child subtree of $T_1' \otimes T_2'$ in $F_1' \otimes F_2'$.
  By the induction assumption $T_1'^i \otimes T_2'^j \sim T_1^i \otimes T_2^j$ as subtrees of $F_1 \otimes F_2$ and $F_1' \otimes F_2'$ of height one smaller than $T_1 \otimes T_2, T_1' \otimes T_2'$.
  Therefore, ${T_1 \otimes T_2 \sim T_1' \otimes T_2'}$.
\end{proof}

In other words, \cref{combining-cor} says the product of two reduced partial evaluation trees is isomorphic to the product of unreduced ones.
Having this, we are ready to prove the main theorem.

\begin{proof}[Proof of \cref{Croucelle}]
For each $t \in V(\mathcal{D})$, we can construct a reduced partial evaluation tree $F_t^-$ for $\Phi$ which is isomorphic to $F_t$ by induction as follows:
\begin{enumerate}
  \item $t = intro(v, a)$ \\
  The size of $F_t$ is bounded by $2^{\mathcal{O}(q)}$, so we can construct $F_t$ and reduce it. Then
  \begin{equation*}
    F_t^{-} \coloneq reduce(F_t).
  \end{equation*}
  \item $t = t_1 \oplus t_2$
  \\
  Let
  \begin{equation*}
    F_t' \coloneq F_{t_1}^{-} \otimes F_{t_2}^{-}.
  \end{equation*}
  Note that, by \cref{reduced-tree-size},
  \begin{align*}
    |F_t'| \sv{&}\leq |F_{t_1}^{-}| \cdot |F_{t_2}^{-}| \leq
    \left(
    \left.\kern-\nulldelimiterspace
    \begin{array}{@{}c@{}}
      2^{\cdot^{\cdot^{\cdot^{\scriptstyle 2^{\mathcal{O}(q_v(q + \log \omega))}}}}}
    \end{array}
    \right\rbrace
    \text{\scriptsize $q + 1$}
    \right)^2 \sv{\\&} =
    \left.\kern-\nulldelimiterspace
    \begin{array}{@{}c@{}}
      2^{\cdot^{\cdot^{\cdot^{\scriptstyle 2^{\mathcal{O}(q_v(q + \log \omega))}}}}}
    \end{array}
    \right\rbrace
    \text{\scriptsize $q + 1.$}
  \end{align*}
  Some configurations in $F_t'$ may be isomorphic, so we can reduce them. Then
  \begin{equation*}
    F_t^{-} \coloneq reduce(F_t').
  \end{equation*}
  \item $t = recolor(t', a \rightarrow b)$ \\
  If two configurations of leaves of $F^-_{t'}$ were isomorphic, they stay isomorphic after recoloring. However, some non-isomorphic configurations
  may become isomorphic, so we have to reduce them by taking into account the new coloring $\mu$. Then
  \begin{equation*}
    F_t^{-} \coloneq reduce(F_{t'}^{-}).
  \end{equation*}
  \item $t = addEdges(t', a, b)$ \\
  If two configurations of leaves of $F^-_{t'}$ were isomorphic, that means that the induced subgraphs of these configurations were isomorphic,
  including the coloring of vertices, so after adding the new edges, they stay isomorphic. However, some non-isomorphic configurations
  may become isomorphic, so we have to reduce them, taking into account the new edges. Then
  \begin{equation*}
    F_t^{-} \coloneq reduce(F_{t'}^{-}).
  \end{equation*}
\end{enumerate}
We can reduce a partial evaluation tree $F_t^-$ in time $\mathcal{O}(|F_t^{-}|)$.
So the time necessary to build and evaluate $F_t^{-}$ for $t$ being the root of the decomposition is bounded by
\begin{equation*}
  \left.\kern-\nulldelimiterspace
  \begin{array}{@{}c@{}}
    |V(\mathcal{D})| \cdot 2^{\cdot^{\cdot^{\cdot^{\scriptstyle 2^{\mathcal{O}(q_v(q + \log \omega))}}}}}
  \end{array}
  \right\rbrace
  \text{\scriptsize $q + 1.$}%
\end{equation*}

The correctness of the above approach is implied by \cref{combining-cor}.
\end{proof}

  \sv{ \section{Dynamic Programming Core for MSO$_1$-expressible Vertex Problems}\label{app:core}}
  \lv{ \subsection{Dynamic Programming Core for MSO$_1$-expressible Vertex Problems}\label{sec:MSOMonotone}}

\begin{restatable}{theorem}{thmMSOcore}\label{MSO-core}
  Let $\mathcal{P}$ be a vertex problem expressible in $MSO_1$ logic
  and $\Phi$ be the formula for this problem with $q$ quantified variables;
  $q_v$ of them being individual variables.
  There exists a monotone dynamic programming core $\mathfrak{C}$
  that (together with a proper $\mathfrak{C}$-vertex-membership function $\rho$)
  solves $\mathcal{P}$ such that
  \begin{align*}
      \left.\kern-\nulldelimiterspace
      \begin{array}{@{}c@{}}
        \tau(\mathfrak{C}, \mathcal{D}) = |V(\mathcal{D})| \cdot 2^{\cdot^{\cdot^{\cdot^{\scriptstyle 2^{\mathcal{O}(q_v(q + \log \omega))}}}}}
      \end{array}
      \right\rbrace
      \text{\scriptsize $q + 1$,} \\
      \left.\kern-\nulldelimiterspace
      \begin{array}{@{}c@{}}
        Size(\mathfrak{C}, \mathcal{D}) = 2^{\cdot^{\cdot^{\cdot^{\scriptstyle 2^{\mathcal{O}(q_v(q + \log \omega))}}}}}
      \end{array}
      \right\rbrace
      \text{\scriptsize $q$,}          
  \end{align*}
where $\omega$ is the width of $\mathcal{D}$.
\end{restatable}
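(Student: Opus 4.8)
The plan is to recast the dynamic programming behind the proof of \cref{Croucelle} (\cref{sec:CroucelleProof}) as a dynamic programming core and to equip it with a vertex-membership function. Write $\Phi=\Phi(S)$ for the $MSO_1$ formula of the vertex problem $\mathcal{P}$, with its single free set variable $S$ and its $q$ quantified variables, $q_v$ of which are individual. The core treats $S$ as a set \emph{constant} chosen along the decomposition: the word $\mathfrak{C}$ attaches to a node $t$ encodes the reduced partial evaluation tree for $\Phi$ on $G_t$ in which $S$ has been fixed to some $S_t\subseteq V(G_t)$, and, additionally, at a leaf $t=intro(v,a)$ the word carries one extra bit recording whether $v\in S_t$. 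The vertex-membership function $\rho$ just reads this bit off (and is defined arbitrarily, say as $0$, on every other word, since it is only ever applied to leaf words through $S_\rho$). The set $Accept_{\mathfrak{C},\mathcal{D}}$ is the set of reduced trees for the whole graph $G$ that evaluate to true, and $Process_{\mathfrak{C},\mathcal{D}}(t)$ follows the inductive construction of \cref{sec:CroucelleProof} step by step: at a leaf it is the two bit-tagged words for $S_t=\emptyset$ and $S_t=\{v\}$; at a $recolor$ or $addEdges$ node it is the set of pairs $(reduce(F'),F')$ obtained by re-reducing the child's tree against the new colored graph; and at a node $t=t_1\oplus t_2$ it is the set of triples $(reduce(F_1\otimes F_2),F_1,F_2)$ with $F_i$ ranging over the valid words of $t_i$, where $\otimes$ is the tree product and the union-of-configurations operation also merges the two constants $S_{t_1},S_{t_2}$ into $S_{t_1}\sqcup S_{t_2}$ (here the only free variable is the set variable $S$, so the two pieces are always compatible and their union is simply $\uplus$).

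The correctness part is an induction, bottom-up on $\mathcal{D}$, proving the characterization: for every node $t$, a $(\mathfrak{C},\mathcal{D}_t,w)$-witness $\alpha$ exists if and only if $w$ is the reduced partial evaluation tree for $\Phi$ on $G_t$ with $S$ fixed to some $S_t\subseteq V(G_t)$; moreover for such $\alpha$ one has $S_\rho(\mathcal{D}_t,\alpha)=S_t$, and $\alpha$ is uniquely determined by the bits at the leaves of $\mathcal{D}_t$. The base case is immediate, and for $recolor$ and $addEdges$ nodes the underlying partial evaluation tree does not change (only the colored graph does), so the step is exactly as in \cref{sec:CroucelleProof}. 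For a $\oplus$ node the decisive identities are that, since $V(G_{t_1})$ and $V(G_{t_2})$ are disjoint, the partial evaluation tree for $\Phi$ on $G_t$ with $S$ fixed to $S_{t_1}\sqcup S_{t_2}$ is the tree product of the corresponding trees on $G_{t_1}$ and $G_{t_2}$, and that reducing commutes with this product — which is precisely \cref{combining-cor} (equivalently the combining lemma), applied with $S$ viewed as a fixed leading set variable so that the union of configurations splits the constant $S$ as $S\cap V(G_{t_1})$ and $S\cap V(G_{t_2})$ consistently. Monotonicity then follows at once: given $S\subseteq V(G)$, the unique witness $\alpha$ whose leaf bits equal the characteristic vector of $S$ has root word $reduce(F_G[S])$, which lies in $Accept_{\mathfrak{C},\mathcal{D}}$ exactly when $G\vDash\Phi(S)$, i.e.\ exactly when $(G,S)\in\mathcal{P}$; conversely any $(\mathfrak{C},\mathcal{D})$-witness $\alpha$ satisfies $S_\rho(\mathcal{D},\alpha)=S$ for the set $S$ read from its leaf bits and is accepting precisely in that case. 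Hence $(\mathfrak{C},\rho)$ solves $\mathcal{P}$.

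For the resource bounds, $\tau(\mathfrak{C},\mathcal{D})$ is the time to produce $Accept_{\mathfrak{C},\mathcal{D}}$ and all the sets $Process_{\mathfrak{C},\mathcal{D}}(t)$, which is the cost of running the dynamic programming of \cref{sec:CroucelleProof} once, namely $|V(\mathcal{D})|$ times a tower of $2$'s of height $q+1$ with top exponent $\mathcal{O}(q_v(q+\log\omega))$, matching the claimed value. And $Size_{\mathfrak{C},\mathcal{D}}=\max_t|Process_{\mathfrak{C},\mathcal{D}}(t)|$ is dominated by the $\oplus$ nodes, where a transition triple is determined by its pair of child words, so $|Process_{\mathfrak{C},\mathcal{D}}(t)|$ is at most the square of the number of non-isomorphic reduced partial evaluation trees (carrying an attached set-constant); chasing the reduction level by level — $2^{\mathcal{O}(q_v(q+\log\omega))}$ leaf configurations, one extra exponentiation per further level, as in \cref{reduced-tree-size} — gives the stated tower of height $q$. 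Carrying out these estimates is routine bookkeeping once the construction is pinned down.

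The main obstacle is the monotonicity argument, and two points in it merit attention. First, the design choice of tagging leaf words with the explicit membership bit is not optional: whenever the truth of $\Phi$ on $G$ does not actually depend on whether a particular vertex belongs to $S$ (for instance when $\mathcal{P}$ is degenerate), the two leaf trees would be isomorphic and $\rho$ could not recover $S$; with the bit, leaf words faithfully range over all characteristic vectors, as monotonicity requires. Second, one must check that the whole reduce/tree-product apparatus of \cref{sec:CroucelleProof}, and \cref{combining-cor} in particular, still goes through when one of the formula's ``variables'' is a constant equal to the current partial solution, so that at a $\oplus$ node that constant is split correctly across $V(G_{t_1})$ and $V(G_{t_2})$. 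The remaining ingredients — the inductive characterization of witnesses and the final equivalence with membership in $\mathcal{P}$ — then proceed exactly as in the monotonicity proof for \textsc{Vertex Cover} in \cref{lem:monotoneVC}.
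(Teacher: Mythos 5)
Your proposal is correct and follows essentially the same route as the paper: fix the leading set variable $S$ along the decomposition, take as DP words the reduced partial evaluation trees of the remaining formula with $S$ restricted to $G_t$, define $Process$ via the reduce/tree-product steps of the Courcelle proof, read membership off the leaf words, and invoke \cref{combining-cor} for correctness at $\oplus$ nodes. Your explicit membership bit at leaves plays exactly the role of the paper's root labels (the chosen value of $z_1$), and your resource accounting matches the paper's tower bounds.
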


\begin{proof}
  Let $\mathcal{P}$ be a vertex problem expressible in $MSO_1$ logic and $\Phi = \exists_{z_1} \Phi'(z_1)$ be an $MSO_1$ formula for this problem in the prenex normal form,
  where the value of $z_1$ represents the set $S$ such that $(G, S) \in \mathcal{P}$.
  A reduced partial evaluation tree $F_t^-$ for $\Phi$ and $t \in V(\mathcal{D})$ can be viewed as a set $\mathcal{F}_t^-$ of non-isomorphic trees of height $q - 1$ (subtrees of nodes of level 1 of $F_t^-$)
  with roots labeled by the choices of the value of $z_1$ and
  \begin{equation*}
  \left.\kern-\nulldelimiterspace
  \begin{array}{@{}c@{}}
    |\mathcal{F}_t^-| \leq 2^{\cdot^{\cdot^{\cdot^{\scriptstyle 2^{\mathcal{O}(q_v(q + \log \omega))}}}}}
  \end{array}
  \right\rbrace
  \text{\scriptsize $q$.}    
  \end{equation*}
  \begin{claim}
    Each tree $f^- \in \mathcal{F}_t^-$ is obtained either from a tree $f^{'-} \in \mathcal{F}_{t'}^-$, if $t = recolor(t', a \rightarrow b)$ or $t = addEdges(t', a \rightarrow b)$ by reduction or
    from two trees $f_1^- \in \mathcal{F}_{t_1}^-, f_2^- \in \mathcal{F}_{t_2}^-$ by combining them and then reducing, if $t = t_1 \oplus t_2$.    
  \end{claim}
  \begin{claim} \label{complete}
    If $t = recolor(t', a \rightarrow b)$ or $t = addEdges(t', a \rightarrow b)$, for each $f^{'-} \in \mathcal{F}_{t'}^-$
    there exists exactly one $f^- \in \mathcal{F}_t^-$ which is equal to reduced $f^{'-}$ and if $t = t_1 \oplus t_2$, for each $f_1^- \in \mathcal{F}_{t_1}^-, f_2^- \in \mathcal{F}_{t_2}^-$ there exists
    exactly one $f^- \in \mathcal{F}_t^-$ which is equal to reduced product of $f_1^-$ and $f_2^-$.    
  \end{claim}
  We define $Process_{\mathfrak{C}, \mathcal{D}}$ as follows.
  \begin{itemize}
    \item If $t = intro(v, a)$,
    \begin{equation*}
      Process_{\mathfrak{C}, \mathcal{D}}(t) = \mathcal{F}_t^-.
    \end{equation*}
  \item If $t = t_1 \oplus t_2$,
  \begin{align*}
    Process_{\mathfrak{C}, \mathcal{D}}(t) = \{\sv{&}(f^-, f_1^-, f_2^-)
    \in \mathcal{F}_{t}^- \times \mathcal{F}_{t_1}^- \times \mathcal{F}_{t_2}^-
    :
    f^- = reduce(f_1^- \otimes f_2^-))\},
  \end{align*}
  \item If $t = recolor(t', a \rightarrow b)$,
  \begin{align*}
    Process_{\mathfrak{C}, \mathcal{D}}(t) = \{\sv{&}(f^-, f^{'-})
    \in \mathcal{F}_{t}^- \times \mathcal{F}_{t'}^-
    :
    \sv{\\&} f^- = reduce(f^{'-})\}.
  \end{align*}
  \item If $t = addEdges(t', a, b)$,
  \begin{align*}
    Process_{\mathfrak{C}, \mathcal{D}}(t) = \{\sv{&}(f^-, f^{'-})
    \in \mathcal{F}_{t}^- \times \mathcal{F}_{t'}^-
    :
    \sv{\\&}f^- = reduce(f^{'-})\}.
  \end{align*}
  \end{itemize}
Observe that for each $t$,
\begin{align*}
  |Process_{\mathfrak{C}, \mathcal{D}}(t)| \sv{&}\leq \prod_{i = 1}^{\delta(t)} |\mathcal{F}_{t_i}^-| \leq
  \left.\kern-\nulldelimiterspace
  \begin{array}{@{}c@{}}
    2^{\cdot^{\cdot^{\cdot^{\scriptstyle 2^{\mathcal{O}(q_v(q + \log \omega))}}}}}
  \end{array}
  \right\rbrace
  \text{\scriptsize $q$.}
\end{align*}

We can compute $Process_{\mathfrak{C}, \mathcal{D}}(t)$ while computing $F_t^-$
in time at most
\begin{equation*}
  \left.\kern-\nulldelimiterspace
  \begin{array}{@{}c@{}}
    2^{\cdot^{\cdot^{\cdot^{\scriptstyle 2^{\mathcal{O}(q_v(q + \log \omega))}}}}}
  \end{array}
  \right\rbrace
  \text{\scriptsize $q + 1$.}
\end{equation*}
So the total time necessary to compute $Process_{\mathfrak{C}, \mathcal{D}}$ is at most
\begin{equation*}
  \left.\kern-\nulldelimiterspace
  \begin{array}{@{}c@{}}
    |V(\mathcal{D})| \cdot 2^{\cdot^{\cdot^{\cdot^{\scriptstyle 2^{\mathcal{O}(q_v(q + \log \omega))}}}}}
  \end{array}
  \right\rbrace
  \text{\scriptsize $q + 1$}.
\end{equation*}

Let $Accept_{\mathfrak{C}, \mathcal{D}}$ be the set of such $f^- \in \mathcal{F}_t^-$, where $t$ is the root of $\mathcal{D}$, that
$f^-$ with pruned external choices evaluates to true. Observe that $f^-$ with pruned external choices is a reduced evaluation tree
for $\Phi'(z_1)$ of height $q$, so its size (then also the time needed for evaluation) is bounded by
\begin{equation*}
  \left.\kern-\nulldelimiterspace
  \begin{array}{@{}c@{}}
    2^{\cdot^{\cdot^{\cdot^{\scriptstyle 2^{\mathcal{O}(q_v(q + \log \omega))}}}}}
  \end{array}
  \right\rbrace
  \text{\scriptsize $q$.}
\end{equation*}
So the time necessary to compute $Accept_{\mathfrak{C}, \mathcal{D}}$ is at most
\begin{align*}
  \left.\kern-\nulldelimiterspace
  \begin{array}{@{}c@{}}
    |\mathcal{F}_t^-| \cdot 2^{\cdot^{\cdot^{\cdot^{\scriptstyle 2^{\mathcal{O}(q_v(q + \log \omega))}}}}}
  \end{array}
  \right\rbrace
  \text{\scriptsize $q$} \sv{&}\leq
  \left.\kern-\nulldelimiterspace
  \begin{array}{@{}c@{}}
    2^{\cdot^{\cdot^{\cdot^{\scriptstyle 2^{\mathcal{O}(q_v(q + \log \omega))}}}}}
  \end{array}
  \right\rbrace
  \text{\scriptsize $q$}.
\end{align*}

It remains to define a proper membership function.
For $f^- \in \mathcal{F}_t^-$ let
\begin{equation*}
  \rho(f^-) =
  \begin{cases}
    0 & \text{if the root of } f^- \text{ is labeled by } \emptyset, \\
    1 & \text{otherwise}.
  \end{cases}
\end{equation*}

\begin{claim}\label{cl:DPcore}
  Dynamic programming core $\mathfrak{C}$ together with the membership function $\rho$ defined as above solves~$\mathcal{P}$.
\end{claim}
\begin{proof}[Proof of \cref{cl:DPcore}]
  Firstly, we prove that if $(G, S) \in \mathcal{P}$, then there exists a $(\mathfrak{C}, \mathcal{D})$-witness $\alpha$
  such that $S = S_\rho(\mathcal{D}, \alpha)$.
  Assume that $(G, S) \in \mathcal{P}$. Then there exists an unreduced partial evaluation tree $f$ for $\Phi'(z_1)$
  and $G$ such that the root of $f$ is labeled by $S$ and $f$ with pruned external choices evaluates to true.
  Then for $t_1, \ldots, t_n$ being the leaves of $\mathcal{D}$ there exist partial evaluation trees $f_1, \ldots, f_n$
  for $\Phi'(z_1)$ and $G_{t_1}, \ldots, G_{t_n}$ from which we can build $f$ by the induction on the decomposition tree,
  using the tree product operation in join nodes.
  For $i \in [n]$, let $S_i$ be the label of the root of $f_i$. Observe that $S_i = \emptyset$ or $S_i = \{v\}$ where $t_i = Leaf_{\mathcal{D}}(v)$.
  By the definition of tree product, $S = \bigcup_{i \in [n]} S_i$.
  By the definition of $\mathcal{F}_t^-$, for each $f_i$ there exist $f_i^- \in \mathcal{F}_{t_i}^-$ which is isomorphic to it.
  Observe that two partial evaluation trees $f_i, f_i^{-}$ with roots labeled by $\emptyset, \{v\}$ respectively cannot
  be isomorphic, so the label of the root of $f_i^-$ is the same as the label of the root of $f_i$.
  By \cref{complete}, there exist exactly one $(\mathfrak{C}, \mathcal{D}, f^-)$-witness $\alpha$
  such that $\alpha(t_i) = f_i^-$ for each $i \in [n]$. Then $S_\rho(\mathcal{D}, \alpha) = \bigcup_{i \in [n]} S_i = S$.
  Moreover, by \cref{combining-lemma}, $f^-$ is isomorphic to $f$.
  Then $f^-$ with pruned external choices evaluates to true, so $\alpha$ is a $(\mathfrak{C}, \mathcal{D})$-witness.

  Secondly, we prove that if there exists a $(\mathfrak{C}, \mathcal{D})$-witness $\alpha$
  such that $S = S_\rho(\mathcal{D}, \alpha)$, then $(G, S) \in \mathcal{P}$.
  Let $f^- = \alpha(t)$ where $t$ is the root of $\mathcal{D}$. Tree $f^-$ with pruned external choices evaluates to true.
  Let $t_1, \ldots, t_n$ be the leaves of $\mathcal{D}$. For $i \in [n]$, let $f_i^- = \alpha(t_i)$ and $S_i$ be the label of the root of $f_i^-$.
  Observe that $S_i = \emptyset$ or $S_i = \{v\}$ where $t_i = Leaf_{\mathcal{D}}(v)$.
  Then $S_\rho(\mathcal{D}, \alpha) = \bigcup_{i \in [n]} S_i$.
  Let $f$ be an unreduced partial evaluation tree that can be built from $f_1^-, \ldots, f_n^-$ by induction on the decomposition tree.
  Observe that by the definition of tree product, the label of the root of $f$ is $S = \bigcup_{i \in [n]} S_i$.
  By \cref{combining-lemma}, $f$ is isomorphic to $f^-$. Then $f$ with pruned external choices also evaluates to true,
  so $(G, S) \in \mathcal{P}$. \renewcommand{\qedsymbol}{$\diamondsuit$}
\end{proof}
\cref{cl:DPcore} concludes the proof of \cref{MSO-core}.
\end{proof}

Combining \Cref{MSO-core} with \cref{theorem}, we get the following corollary, which is a more explicit version of \Cref{cor:mainCor} stated in the introduction.

\begin{corollary}\label{diverse-MSO}
  Let $\mathcal{P}_1,  \ldots , \mathcal{P}_r$ be vertex problems expressible in $MSO_1$,
  $\Phi_i$ be an $MSO_1$ formula for $\mathcal{P}_i$, and $Div$ be a Venn diversity measure.
  The problem $Div^d(\mathcal{P}_1,  \ldots , \mathcal{P}_r)$ on a graph $G$ with given cliquewidth decomposition $\mathcal{D}$ can be solved in time
  \begin{equation*}
    \mathcal{O}\left(|V(\mathcal{D})| \cdot \prod_{i = 1}^r c_i \right),
   \end{equation*}
  where $c_i$ is a constant depending only on $\Phi_i$ and the width of $\mathcal{D}$.
\end{corollary}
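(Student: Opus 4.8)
The plan is to derive \Cref{diverse-MSO} by feeding the dynamic programming cores produced by \Cref{MSO-core} into the diverse framework of \Cref{theorem}. First, I would apply \Cref{MSO-core} once per vertex problem $\mathcal{P}_i$: because $\mathcal{P}_i$ is expressed by the $MSO_1$ formula $\Phi_i$, there is a monotone dynamic programming core $\mathfrak{C}_i$ and a $\mathfrak{C}_i$-vertex-membership function $\rho_i$ such that $(\mathfrak{C}_i,\rho_i)$ solves $\mathcal{P}_i$, with $\tau(\mathfrak{C}_i,\mathcal{D})$ bounded by $|V(\mathcal{D})|$ times a tower of exponentials of height $q_i+1$ topped by $2^{\mathcal{O}(q_{v,i}(q_i+\log\omega))}$, and $Size_{\mathfrak{C}_i,\mathcal{D}}$ by the analogous tower of height $q_i$; here $q_i,q_{v,i}$ are the quantifier counts of $\Phi_i$ and $\omega=\mathrm{width}(\mathcal{D})$. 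Both quantities depend only on $\Phi_i$ and $\omega$, so I would let $c_i$ be a constant (the larger of the two towers, times a harmless multiplicative constant) with $Size_{\mathfrak{C}_i,\mathcal{D}}=\mathcal{O}(c_i)$ and $\tau(\mathfrak{C}_i,\mathcal{D})=\mathcal{O}(|V(\mathcal{D})|\cdot c_i)$.

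Second, since the hypothesis is exactly that $Div$ is a Venn $f$-diversity measure for some $f\colon\{0,1\}^r\to\mathbb{N}$, I would apply \Cref{theorem} to the tuple $(\mathfrak{C}_1,\rho_1),\dots,(\mathfrak{C}_r,\rho_r)$, obtaining an algorithm that solves $Div^d(\mathcal{P}_1,\dots,\mathcal{P}_r)$ on $G$ with decomposition $\mathcal{D}$ in time
\begin{equation*}
  \mathcal{O}\!\left(\sum_{i=1}^{r}\tau(\mathfrak{C}_i,\mathcal{D})+|V(\mathcal{D})|\cdot\prod_{i=1}^{r}Size_{\mathfrak{C}_i,\mathcal{D}}\right).
\end{equation*}

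Third, I would collapse this bound using the first step: the preprocessing term is $\sum_i\tau(\mathfrak{C}_i,\mathcal{D})=\mathcal{O}\bigl(|V(\mathcal{D})|\sum_i c_i\bigr)$, which is dominated by $|V(\mathcal{D})|\prod_i c_i$ (as each $c_i\ge 1$), and the main term is $\mathcal{O}\bigl(|V(\mathcal{D})|\prod_i c_i\bigr)$ by the bound on $Size_{\mathfrak{C}_i,\mathcal{D}}$. Hence the total running time is $\mathcal{O}\bigl(|V(\mathcal{D})|\prod_{i=1}^{r}c_i\bigr)$ with each $c_i$ depending only on $\Phi_i$ and the width of $\mathcal{D}$, which is the asserted bound; specializing to $r$, $\omega$, and the $\Phi_i$ fixed, this is linear FPT time, yielding \Cref{cor:mainCor}.

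At the level of this corollary there is essentially no obstacle — the argument is bookkeeping that combines two already-proved results — and the only point requiring mild care is that the towers of exponentials, however tall, are genuine constants once $\Phi_i$ and $\omega$ are fixed, so they may be absorbed into the $c_i$ without spoiling the linear dependence on $|V(\mathcal{D})|$. The genuinely substantive ingredient sits upstream, in \Cref{MSO-core}: verifying that the reduced-partial-evaluation-tree dynamic programming for $MSO_1$ is \emph{monotone} in the sense of \Cref{def:monotone}, i.e.\ that the solution set encoded by a $(\mathfrak{C}_i,\mathcal{D})$-witness is determined by the leaf entries and equals their disjoint union, which is handled by the combining lemmas and \Cref{cl:DPcore}.
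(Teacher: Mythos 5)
Your proposal is correct and matches the paper exactly: the paper derives \Cref{diverse-MSO} precisely by plugging the monotone cores from \Cref{MSO-core} (with $\tau$ and $Size$ bounded by towers depending only on $\Phi_i$ and $\omega$) into \Cref{theorem} and absorbing both the preprocessing sum and the product of table sizes into the constants $c_i$. Your bookkeeping of why $\sum_i\tau(\mathfrak{C}_i,\mathcal{D})$ is dominated by $|V(\mathcal{D})|\prod_i c_i$ is the only detail the paper leaves implicit, and you handle it correctly.
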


\section{Conclusions and Open Problems}\label{sec:conclusions}
Our main contribution is that we can convert a monotone DP into one that solves a diverse version of the problem. 
We think it would be interesting to study this direction past the vertex problems.
For example, can we extend the definition of monotonicity to edge problems?
However, there is a barrier preventing FPT algorithms for many edge problems, as MSO$_2$ model checking is most likely not even in XP on cliques~\cite{CMR}.
Still, a transformation of the original DP algorithm might exist.
Indeed, this is in line with our approach, which also works independently of the running time of the original algorithm (of course, the final running time is dependent on it).
Take, for example, the Hamiltonian cycle problem.
There is an XP dynamic programming algorithm parameterized by the cliquewidth~\cite{BKK19}.
Is there an XP algorithm solving the diverse variant of the Hamiltonian cycle problem parameterized by the cliquewidth?

Another intriguing direction would be to research whether $d^{r^2}$ overhead in running time comparison of \Cref{mainThm} and \Cref{minmainThm} is necessary.

We believe that studying the Venn $f$-diversity measures is of independent interest.
It would be nice to see some systematic study of differences in such measures or perhaps some experimental study that would give us more understanding of the various measures.
Likewise, there has been a study comparing different diversity measures from the perspective of evolutionary computation~\cite{WO03}.

At this point, let us mention that for our purpose, we can allow even a more general definition of Venn $f$-diversity measure that also has access to the original object.
More precisely, we can use properties and shape of the input graph (as long as we only refer to the final graph $G$ and not to $G_t$ for $t$ not being a root of the decomposition).
For example, one can design a measure that penalizes/benefits vertices differently based on their degree in $G$.
Note that such a definition still allows us to get the same outcome for \Cref{mainThm}.
However, as we currently do not have a good application for this strengthening, we decided to go for a less general definition of the measures in the main body for the sake of simplicity and readability.

\lv{
 \bibliographystyle{alphaurl}
}
\aaai{
}
\ijcai{
 \bibliographystyle{named}
}
\bibliography{lit.bib}

\newcommand{\etalchar}[1]{$^{#1}$}
\begin{thebibliography}{HHOW05}

\bibitem[AFL{\etalchar{+}}21]{AFLOW21}
Emmanuel Arrighi, Henning Fernau, Daniel Lokshtanov, Mateus de~Oliveira
  {Oliveira}, and Petra Wolf.
\newblock Diversity in {K}emeny rank aggregation: A parameterized approach.
\newblock In Zhi-Hua Zhou, editor, {\em Proceedings of the Thirtieth
  International Joint Conference on Artificial Intelligence {(IJCAI 2021)}},
  pages 10--16. International Joint Conferences on Artificial Intelligence
  Organization, August 2021.
\newblock \href {https://doi.org/10.24963/ijcai.2021/2}
  {\path{doi:10.24963/ijcai.2021/2}}.

\bibitem[AFOW23]{AFOW23}
Emmanuel Arrighi, Henning Fernau, Mateus de~Oliveira {Oliveira}, and Petra
  Wolf.
\newblock Synchronization and diversity of solutions.
\newblock In {\em Proceedings of the AAAI Conference on Artificial Intelligence
  (AAAI 2023)}, volume~37, pages 11516--11524. Association for the Advancement
  of Artificial Intelligence (AAAI), June 2023.
\newblock \href {https://doi.org/10.1609/aaai.v37i10.26361}
  {\path{doi:10.1609/aaai.v37i10.26361}}.

\bibitem[AMT24]{AMT24}
Izuwa Ahanor, Hugh Medal, and Andrew~C. Trapp.
\newblock {DiversiTree}: A new method to efficiently compute diverse sets of
  near-optimal solutions to mixed-integer optimization problems.
\newblock {\em INFORMS J. Comput.}, 36(1):61--77, January 2024.
\newblock \href {https://doi.org/10.1287/ijoc.2022.0164}
  {\path{doi:10.1287/ijoc.2022.0164}}.

\bibitem[BDJ23]{BDJ23}
Benjamin Bergougnoux, Jan Dreier, and Lars Jaffke.
\newblock A logic-based algorithmic meta-theorem for mim-width.
\newblock In {\em Proceedings of the 2023 Annual ACM-SIAM Symposium on Discrete
  Algorithms (SODA)}, pages 3282--3304. Society for Industrial and Applied
  Mathematics, January 2023.
\newblock \href {https://doi.org/10.1137/1.9781611977554.ch125}
  {\path{doi:10.1137/1.9781611977554.ch125}}.

\bibitem[BFJ{\etalchar{+}}22]{BFJMOPR22}
Julien Baste, Michael~R. Fellows, Lars Jaffke, Tomáš Masařík, Mateus
  de~Oliveira~{Oliveira}, Geevarghese Philip, and Frances~A. Rosamond.
\newblock Diversity of solutions: An exploration through the lens of
  fixed-parameter tractability theory.
\newblock {\em Artificial Intelligence}, 303:103644, 2022.
\newblock \href {https://doi.org/10.1016/j.artint.2021.103644}
  {\path{doi:10.1016/j.artint.2021.103644}}.

\bibitem[BJM{\etalchar{+}}19]{BJMPR2019}
Julien Baste, Lars Jaffke, Tomáš Masařík, Geevarghese Philip, and
  G\"{u}nter Rote.
\newblock {FPT} algorithms for diverse collections of hitting sets.
\newblock {\em Algorithms}, 12(12):254:1--254:18, November 2019.
\newblock \href {https://doi.org/10.3390/a12120254}
  {\path{doi:10.3390/a12120254}}.

\bibitem[BK19]{BK19}
Benjamin Bergougnoux and Mamadou Kanté.
\newblock Fast exact algorithms for some connectivity problems parameterized by
  clique-width.
\newblock {\em Theoretical Computer Science}, 782:30--53, 2019.
\newblock \href {https://doi.org/10.1016/j.tcs.2019.02.030}
  {\path{doi:10.1016/j.tcs.2019.02.030}}.

\bibitem[BK24]{BK24Stree}
Narek Bojikian and Stefan Kratsch.
\newblock A tight {M}onte-{C}arlo algorithm for {S}teiner tree parameterized by
  clique-width.
\newblock In Karl Bringmann, Martin Grohe, Gabriele Puppis, and Ola Svensson,
  editors, {\em 51st International Colloquium on Automata, Languages, and
  Programming (ICALP 2024)}, volume 297 of {\em Leibniz International
  Proceedings in Informatics (LIPIcs)}, pages 29:1--29:18, Dagstuhl, Germany,
  2024. Schloss Dagstuhl -- Leibniz-Zentrum f{\"u}r Informatik.
\newblock \href {https://doi.org/10.4230/LIPIcs.ICALP.2024.29}
  {\path{doi:10.4230/LIPIcs.ICALP.2024.29}}.

\bibitem[BK25]{BK24}
Narek Bojikian and Stefan Kratsch.
\newblock {Tight Bounds for Connected Odd Cycle Transversal Parameterized by
  Clique-Width}, 2025.
\newblock \href {https://doi.org/10.4230/LIPIcs.IPEC.2025.19}
  {\path{doi:10.4230/LIPIcs.IPEC.2025.19}}.

\bibitem[BKK19]{BKK19}
Benjamin Bergougnoux, Mamadou~Moustapha {Kanté}, and O-joung Kwon.
\newblock An optimal {XP} algorithm for {H}amiltonian cycle on graphs of
  bounded clique-width.
\newblock {\em Algorithmica}, 82(6):1654--1674, December 2019.
\newblock \href {https://doi.org/10.1007/s00453-019-00663-9}
  {\path{doi:10.1007/s00453-019-00663-9}}.

\bibitem[BN21]{BN21}
Niclas Boehmer and Rolf Niedermeier.
\newblock Broadening the research agenda for computational social choice:
  Multiple preference profiles and multiple solutions.
\newblock In {\em Proceedings of the 20th International Conference on
  Autonomous Agents and MultiAgent Systems (AAMAS '21)}, pages 1--5, Richland,
  SC, 2021. International Foundation for Autonomous Agents and Multiagent
  Systems.
\newblock URL: \url{https://dl.acm.org/doi/abs/10.5555/3463952.3463954}.

\bibitem[BSTV13]{BSTV13}
Binh-Minh {Bui-Xuan}, Ondřej Suchý, Jan~Arne Telle, and Martin Vatshelle.
\newblock Feedback vertex set on graphs of low clique-width.
\newblock {\em European Journal of Combinatorics}, 34(3):666--679, April 2013.
\newblock \href {https://doi.org/10.1016/j.ejc.2012.07.023}
  {\path{doi:10.1016/j.ejc.2012.07.023}}.

\bibitem[CDP19]{CDP19}
David Coudert, Guillaume Ducoffe, and Alexandru Popa.
\newblock Fully polynomial {FPT} algorithms for some classes of bounded
  clique-width graphs.
\newblock {\em ACM Transactions on Algorithms}, 15(3):1--57, June 2019.
\newblock \href {https://doi.org/10.1145/3310228} {\path{doi:10.1145/3310228}}.

\bibitem[CE12]{MSObook}
Bruno Courcelle and Joost Engelfriet.
\newblock {\em Graph Structure and Monadic Second-Order Logic: A
  Language-Theoretic Approach}.
\newblock Cambridge University Press, June 2012.
\newblock \href {https://doi.org/10.1017/cbo9780511977619}
  {\path{doi:10.1017/cbo9780511977619}}.

\bibitem[CER93]{cw-def}
Bruno Courcelle, Joost Engelfriet, and Grzegorz Rozenberg.
\newblock Handle-rewriting hypergraph grammars.
\newblock {\em Journal of Computer and System Sciences}, 46(2):218--270, April
  1993.
\newblock \href {https://doi.org/10.1016/0022-0000(93)90004-g}
  {\path{doi:10.1016/0022-0000(93)90004-g}}.

\bibitem[CMR00]{CMR}
Bruno Courcelle, Johan~A. Makowsky, and Udi Rotics.
\newblock Linear time solvable optimization problems on graphs of bounded
  clique-width.
\newblock {\em Theory of Computing Systems}, 33(2):125--150, March 2000.
\newblock \href {https://doi.org/10.1007/s002249910009}
  {\path{doi:10.1007/s002249910009}}.

\bibitem[DGNN22]{DGNN22}
Anh~Viet Do, Mingyu Guo, Aneta Neumann, and Frank Neumann.
\newblock Niching-based evolutionary diversity optimization for the traveling
  salesperson problem.
\newblock In {\em Proceedings of the Genetic and Evolutionary Computation
  Conference (GECCO 2022)}, GECCO 2022, pages 684--693. ACM, July 2022.
\newblock \href {https://doi.org/10.1145/3512290.3528724}
  {\path{doi:10.1145/3512290.3528724}}.

\bibitem[DGNN23]{DGNN23}
Anh~Viet Do, Mingyu Guo, Aneta Neumann, and Frank Neumann.
\newblock Diverse approximations for monotone submodular maximization problems
  with a matroid constraint.
\newblock In {\em Proceedings of the Thirty-Second International Joint
  Conference on Artificial Intelligence (AAAI 2023)}, IJCAI-2023, pages
  5558--5566. International Joint Conferences on Artificial Intelligence
  Organization, August 2023.
\newblock \href {https://doi.org/10.24963/ijcai.2023/617}
  {\path{doi:10.24963/ijcai.2023/617}}.

\bibitem[dLS23]{BLS23}
Mark {de Berg}, Andr\'{e}s {L\'{o}pez Mart{\'\i}nez}, and Frits Spieksma.
\newblock {Finding Diverse Minimum s-t Cuts}.
\newblock In Satoru Iwata and Naonori Kakimura, editors, {\em 34th
  International Symposium on Algorithms and Computation (ISAAC 2023)}, volume
  283 of {\em Leibniz International Proceedings in Informatics (LIPIcs)}, pages
  24:1--24:17, Dagstuhl, Germany, 2023. Schloss Dagstuhl -- Leibniz-Zentrum
  f{\"u}r Informatik.
\newblock \href {https://doi.org/10.4230/LIPIcs.ISAAC.2023.24}
  {\path{doi:10.4230/LIPIcs.ISAAC.2023.24}}.

\bibitem[dLS25]{BLS25}
Mark {de Berg}, Andr\'{e}s {L\'{o}pez Mart{\'\i}nez}, and Frits Spieksma.
\newblock {Finding Diverse Solutions in Combinatorial Problems with a
  Distributive Lattice Structure}.
\newblock In Ho-Lin Chen, Wing-Kai Hon, and Meng-Tsung Tsai, editors, {\em 36th
  International Symposium on Algorithms and Computation (ISAAC 2025)}, volume
  359 of {\em Leibniz International Proceedings in Informatics (LIPIcs)}, pages
  11:1--11:19, Dagstuhl, Germany, 2025. Schloss Dagstuhl -- Leibniz-Zentrum
  f{\"u}r Informatik.
\newblock \href {https://doi.org/10.4230/LIPIcs.ISAAC.2025.11}
  {\path{doi:10.4230/LIPIcs.ISAAC.2025.11}}.

\bibitem[DM26]{ourAAAI}
Karolina Drabik and Tomáš Masařík.
\newblock Finding diverse solutions parameterized by cliquewidth.
\newblock In {\em Proceedings of the AAAI Conference on Artificial Intelligence
  (AAAI 2026)}. Association for the Advancement of Artificial Intelligence
  (AAAI), 2026.
\newblock To appear.

\bibitem[DW09]{DW09}
Emilie Danna and David~L. Woodruff.
\newblock How to select a small set of diverse solutions to mixed integer
  programming problems.
\newblock {\em Operations Research Letters}, 37(4):255--260, July 2009.
\newblock \href {https://doi.org/10.1016/j.orl.2009.03.004}
  {\path{doi:10.1016/j.orl.2009.03.004}}.

\bibitem[FGJ{\etalchar{+}}24]{FGJPS24}
Fedor~V. Fomin, Petr~A. Golovach, Lars Jaffke, Geevarghese Philip, and Danil
  Sagunov.
\newblock Diverse pairs of matchings.
\newblock {\em Algorithmica}, 86(6):2026--2040, June 2024.
\newblock \href {https://doi.org/10.1007/s00453-024-01214-7}
  {\path{doi:10.1007/s00453-024-01214-7}}.

\bibitem[FGP{\etalchar{+}}24]{FGPPS2023}
Fedor~V. Fomin, Petr~A. Golovach, Fahad Panolan, Geevarghese Philip, and Saket
  Saurabh.
\newblock Diverse collections in matroids and graphs.
\newblock {\em Mathematical Programming}, 204(1--2):415--447, April 2024.
\newblock \href {https://doi.org/10.1007/s10107-023-01959-z}
  {\path{doi:10.1007/s10107-023-01959-z}}.

\bibitem[Fia24]{Fiala}
Jiří Fiala.
\newblock Graph minors, decompositions and algorithms, April 2024.
\newblock Updated lecture notes; originally published in IUUK-CE-ITI series in
  2003 number 2003-132.
\newblock URL: \url{https://kam.mff.cuni.cz/~fiala/tw.pdf}.

\bibitem[FK24]{FK22}
Fedor~V. Fomin and Tuukka Korhonen.
\newblock Fast {FPT}-approximation of branchwidth.
\newblock {\em SIAM Journal on Computing}, 53(4):1085--1131, 2024.
\newblock \href {https://doi.org/10.1137/22M153937X}
  {\path{doi:10.1137/22M153937X}}.

\bibitem[FRRS09]{FRRS09}
Michael~R. Fellows, Frances~A. Rosamond, Udi Rotics, and Stefan Szeider.
\newblock Clique-width is {NP}-complete.
\newblock {\em SIAM Journal on Discrete Mathematics}, 23(2):909--939, January
  2009.
\newblock \href {https://doi.org/10.1137/070687256}
  {\path{doi:10.1137/070687256}}.

\bibitem[Gal89]{architect}
Per Galle.
\newblock Branch \& sample: A simple strategy for constraint satisfaction.
\newblock {\em BIT}, 29(3):395--408, September 1989.
\newblock \href {https://doi.org/10.1007/bf02219227}
  {\path{doi:10.1007/bf02219227}}.

\bibitem[GBPS18]{GLPS18}
Thomas Gabor, Lenz Belzner, Thomy Phan, and Kyrill Schmid.
\newblock Preparing for the unexpected: Diversity improves planning resilience
  in evolutionary algorithms.
\newblock In {\em 2018 IEEE International Conference on Autonomic Computing
  (ICAC)}, pages 131--140. IEEE, September 2018.
\newblock \href {https://doi.org/10.1109/icac.2018.00023}
  {\path{doi:10.1109/icac.2018.00023}}.

\bibitem[GKS17]{GKS17}
Martin Grohe, Stephan Kreutzer, and Sebastian Siebertz.
\newblock Deciding first-order properties of nowhere dense graphs.
\newblock {\em Journal of the ACM}, 64(3):1--32, June 2017.
\newblock \href {https://doi.org/10.1145/3051095} {\path{doi:10.1145/3051095}}.

\bibitem[GLW00]{GLW00}
Fred Glover, Arne L{\o}kketangen, and David~L. Woodruff.
\newblock Scatter search to generate diverse {MIP} solutions.
\newblock In Manuel Laguna and Jos{\'e} Luis~Gonz{\'a}lez Velarde, editors,
  {\em Computing Tools for Modeling, Optimization and Simulation: Interfaces in
  Computer Science and Operations Research}, pages 299--317. Springer, 2000.
\newblock \href {https://doi.org/10.1007/978-1-4615-4567-5_17}
  {\path{doi:10.1007/978-1-4615-4567-5_17}}.

\bibitem[HHOW05]{HHOW05}
Emmanuel H{\'e}brard, Brahim Hnich, Barry O’Sullivan, and Toby Walsh.
\newblock Finding diverse and similar solutions in constraint programming.
\newblock In {\em Proceedings of the 20th National Conference on Artificial
  Intelligence (AAAI 2005)}, volume~1, pages 372--377. AAAI Press, 2005.
\newblock URL: \url{https://cdn.aaai.org/AAAI/2005/AAAI05-059.pdf}.

\bibitem[HK23]{HK23}
Falko Hegerfeld and Stefan Kratsch.
\newblock Tight algorithms for connectivity problems parameterized by
  clique-width.
\newblock In Inge~Li G{\o}rtz, Martin Farach-Colton, Simon~J. Puglisi, and
  Grzegorz Herman, editors, {\em 31st Annual European Symposium on Algorithms
  (ESA 2023)}, volume 274 of {\em Leibniz International Proceedings in
  Informatics (LIPIcs)}, pages 59:1--59:19, Dagstuhl, Germany, 2023. Schloss
  Dagstuhl -- Leibniz-Zentrum f{\"u}r Informatik.
\newblock \href {https://doi.org/10.4230/LIPICS.ESA.2023.59}
  {\path{doi:10.4230/LIPICS.ESA.2023.59}}.

\bibitem[HKK{\etalchar{+}}23]{HMKK23}
Tesshu Hanaka, Masashi Kiyomi, Yasuaki Kobayashi, Yusuke Kobayashi, Kazuhiro
  Kurita, and Yota Otachi.
\newblock A framework to design approximation algorithms for finding diverse
  solutions in combinatorial problems.
\newblock In {\em Proceedings of the AAAI Conference on Artificial Intelligence
  (AAAI 2023)}, volume~37, pages 3968--3976. Association for the Advancement of
  Artificial Intelligence (AAAI), June 2023.
\newblock \href {https://doi.org/10.1609/aaai.v37i4.25511}
  {\path{doi:10.1609/aaai.v37i4.25511}}.

\bibitem[HKKO21]{HKKO21}
Tesshu Hanaka, Yasuaki Kobayashi, Kazuhiro Kurita, and Yota Otachi.
\newblock Finding diverse trees, paths, and more.
\newblock In {\em Proceedings of the AAAI Conference on Artificial Intelligence
  (AAAI 2021)}, volume~35, pages 3778--3786. Association for the Advancement of
  Artificial Intelligence (AAAI), May 2021.
\newblock \href {https://doi.org/10.1609/aaai.v35i5.16495}
  {\path{doi:10.1609/aaai.v35i5.16495}}.

\bibitem[HOW07]{HOW07}
Emmanuel Hébrard, Barry O'Sullivan, and Toby Walsh.
\newblock Distance constraints in constraint satisfaction.
\newblock In {\em Proceedings of the 20th International Joint Conference on
  Artificial Intelligence (IJCAI 20007)}, pages 106--111, San Francisco, CA,
  USA, 2007. Morgan Kaufmann Publishers Inc.
\newblock URL: \url{https://dl.acm.org/doi/10.5555/1625275.1625291}.

\bibitem[IGST20]{IGST20}
Linnea Ingmar, Maria {Garcia de la Banda}, Peter~J. Stuckey, and Guido Tack.
\newblock Modelling diversity of solutions.
\newblock {\em Proceedings of the AAAI Conference on Artificial Intelligence
  (AAAI 2020)}, 34(02):1528--1535, April 2020.
\newblock \href {https://doi.org/10.1609/aaai.v34i02.5512}
  {\path{doi:10.1609/aaai.v34i02.5512}}.

\bibitem[IMMS25]{FlowFramework}
Yuni Iwamasa, Tomoki Matsuda, Shunya Morihira, and Hanna Sumita.
\newblock {A General Framework for Finding Diverse Solutions via Network Flow
  and Its Applications}.
\newblock In Ho-Lin Chen, Wing-Kai Hon, and Meng-Tsung Tsai, editors, {\em 36th
  International Symposium on Algorithms and Computation (ISAAC 2025)}, volume
  359 of {\em Leibniz International Proceedings in Informatics (LIPIcs)}, pages
  41:1--41:15, Dagstuhl, Germany, 2025. Schloss Dagstuhl -- Leibniz-Zentrum
  f{\"u}r Informatik.
\newblock \href {https://doi.org/10.4230/LIPIcs.ISAAC.2025.41}
  {\path{doi:10.4230/LIPIcs.ISAAC.2025.41}}.

\bibitem[KO17]{KO17}
Masashi Kiyomi and Yota Otachi.
\newblock Alliances in graphs of bounded clique-width.
\newblock {\em Discrete Applied Mathematics}, 223:91--97, May 2017.
\newblock \href {https://doi.org/10.1016/j.dam.2017.02.004}
  {\path{doi:10.1016/j.dam.2017.02.004}}.

\bibitem[Kv22]{KK22}
Fabian Klute and Marc {van Kreveld}.
\newblock On fully diverse sets of geometric objects and graphs.
\newblock In {\em Graph-Theoretic Concepts in Computer Science: 48th
  International Workshop, WG 2022, T\"ubingen, Germany, June 22--24, 2022,
  Revised Selected Papers}, pages 328--341. Springer International Publishing,
  2022.
\newblock \href {https://doi.org/10.1007/978-3-031-15914-5_24}
  {\path{doi:10.1007/978-3-031-15914-5_24}}.

\bibitem[MMR24]{MMR24}
Neeldhara Misra, Harshil Mittal, and Ashutosh Rai.
\newblock {On the Parameterized Complexity of Diverse SAT}.
\newblock In Juli\'{a}n Mestre and Anthony Wirth, editors, {\em 35th
  International Symposium on Algorithms and Computation (ISAAC 2024)}, volume
  322 of {\em Leibniz International Proceedings in Informatics (LIPIcs)}, pages
  50:1--50:18, Dagstuhl, Germany, 2024. Schloss Dagstuhl -- Leibniz-Zentrum
  f{\"u}r Informatik.
\newblock \href {https://doi.org/10.4230/LIPIcs.ISAAC.2024.50}
  {\path{doi:10.4230/LIPIcs.ISAAC.2024.50}}.

\bibitem[PT15]{PT15}
Thierry Petit and Andrew~C. Trapp.
\newblock Finding diverse solutions of high quality to constraint optimization
  problems.
\newblock In Qiang Yang and Michael~J. Wooldridge, editors, {\em Proceedings of
  the Twenty-Fourth International Joint Conference on Artificial Intelligence
  (IJCAI 2015)}, pages 260--267, Buenos Aires, Argentina, 2015.
\newblock URL: \url{https://www.ijcai.org/Proceedings/15/Papers/043.pdf}.

\bibitem[Vat12]{Vat}
Martin Vatshelle.
\newblock {\em New Width Parameters of Graphs}.
\newblock PhD thesis, University of Bergen, 2012.
\newblock URL: \url{https://hdl.handle.net/1956/6166}.

\bibitem[WO03]{WO03}
Mark Wineberg and Franz Oppacher.
\newblock The underlying similarity of diversity measures used in evolutionary
  computation.
\newblock In {\em Genetic and Evolutionary Computation {GECCO} 2003}, Lecture
  notes in computer science, pages 1493--1504. Springer Berlin Heidelberg,
  2003.
\newblock \href {https://doi.org/10.1007/3-540-45110-2_21}
  {\path{doi:10.1007/3-540-45110-2_21}}.

\end{thebibliography}

\ijcai{
  \clearpage 
  \section*{Reproducibility Checklist}

This paper:

Includes a conceptual outline and/or pseudocode description of AI methods introduced \textbf{yes} %

Clearly delineates statements that are opinions, hypothesis, and speculation from objective facts and results  \textbf{yes} %

Provides well marked pedagogical references for less-familiare readers to gain background necessary to replicate the paper \textbf{yes} %

    \bigskip

Does this paper make theoretical contributions? \textbf{yes} %

If yes, please complete the list below.

    All assumptions and restrictions are stated clearly and formally. \textbf{yes} %

    All novel claims are stated formally (e.g., in theorem statements). \textbf{yes} %

    Proofs of all novel claims are included. \textbf{yes} %

    Proof sketches or intuitions are given for complex and/or novel results. \textbf{yes} %

    Appropriate citations to theoretical tools used are given. \textbf{yes} %

    All theoretical claims are demonstrated empirically to hold. \textbf{NA} %

    All experimental code used to eliminate or disprove claims is included. \textbf{NA} %

    \bigskip

    Does this paper rely on one or more datasets? \textbf{no} %

If yes, please complete the list below.
    A motivation is given for why the experiments are conducted on the selected datasets (yes/partial/no/NA)
    All novel datasets introduced in this paper are included in a data appendix. (yes/partial/no/NA)
    All novel datasets introduced in this paper will be made publicly available upon publication of the paper with a license that allows free usage for research purposes. (yes/partial/no/NA)
    All datasets drawn from the existing literature (potentially including authors’ own previously published work) are accompanied by appropriate citations. (yes/no/NA)
    All datasets drawn from the existing literature (potentially including authors’ own previously published work) are publicly available. (yes/partial/no/NA)
    All datasets that are not publicly available are described in detail, with explanation why publicly available alternatives are not scientifically satisficing. (yes/partial/no/NA)

    Does this paper include computational experiments? \textbf{no} %

If yes, please complete the list below.
This paper states the number and range of values tried per (hyper-) parameter during development of the paper, along with the criterion used for selecting the final parameter setting. (yes/partial/no/NA)
Any code required for pre-processing data is included in the appendix. (yes/partial/no).
All source code required for conducting and analyzing the experiments is included in a code appendix. (yes/partial/no)
All source code required for conducting and analyzing the experiments will be made publicly available upon publication of the paper with a license that allows free usage for research purposes. (yes/partial/no)
All source code implementing new methods have comments detailing the implementation, with references to the paper where each step comes from (yes/partial/no)
If an algorithm depends on randomness, then the method used for setting seeds is described in a way sufficient to allow replication of results. (yes/partial/no/NA)
This paper specifies the computing infrastructure used for running experiments (hardware and software), including GPU/CPU models; amount of memory; operating system; names and versions of relevant software libraries and frameworks. (yes/partial/no)
This paper formally describes evaluation metrics used and explains the motivation for choosing these metrics. (yes/partial/no)
This paper states the number of algorithm runs used to compute each reported result. (yes/no)
Analysis of experiments goes beyond single-dimensional summaries of performance (e.g., average; median) to include measures of variation, confidence, or other distributional information. (yes/no)
The significance of any improvement or decrease in performance is judged using appropriate statistical tests (e.g., Wilcoxon signed-rank). (yes/partial/no)
This paper lists all final (hyper-)parameters used for each model/algorithm in the paper’s experiments. (yes/partial/no/NA)
}

\end{document}